\documentclass[letterpaper, 10 pt, journal, final]{IEEEtran}
\usepackage{graphicx} 
\usepackage{url}
\usepackage{svg}
\usepackage{graphics} 
\usepackage{epsfig} 
\usepackage{amsmath} 
\usepackage{amsthm}
\usepackage{amssymb}  
\usepackage{dsfont}
\usepackage{subcaption}
\usepackage{algorithmic}
\usepackage[ruled,vlined]{algorithm2e}
\usepackage{multirow}
\usepackage{makecell}
\usepackage{cite}

\usepackage{booktabs}
\usepackage{array}
\usepackage{caption}
\usepackage{float}
\usepackage{xurl}

\newtheorem{definition}{Definition}

\newtheorem{remark}{Remark}

\newtheorem{lemma}{Lemma}
\newtheorem{theorem}{Theorem}

\newtheorem{construction}{Construction}

\newcommand{\Gcal}{\mathcal{G}}
\newcommand{\Ghat}{\hat{\mathcal{G}}}
\newcommand{\Vcal}{\mathcal{V}}
\newcommand{\Ecal}{\mathcal{E}}

\newcommand{\Ehat}{\hat{E}}

\newcommand{\Kcal}{\mathcal{K}}

\newcommand{\nK}{n_{\mathcal{K}}}
\newcommand{\nD}{n_{\mathcal{D}}}

\title{A Framework for Motion Planning with Temporal Logic Precedence Specifications via Augmented Graphs of Convex Sets}
\author{Shilin You$^*$, Gael Luna$^*$, Tyler H. Summers

\thanks{$^*$Equal contribution. The authors are with the University of Texas at Dallas. Email: \texttt{tyler.summers@utdallas.edu}. This work is supported by the United States Air Force Office of Scientific Research under Grant FA9550-23-1-0424, and by the National Science Foundation under Grant ECCS-2047040.}
}

\begin{document}

\maketitle

\begin{abstract}
We present a framework for planning trajectories that avoid obstacles and satisfy logical precedence constraints expressed with a fragment of signal temporal logic (STL). Our approach models environments containing obstacles, \emph{keys}, and \emph{doors}, where collecting a key unlocks its associated door and potentially opens shorter paths to a goal. Based on an exact convex partitioning of the free space that encodes connectivity among convex free space, key, and door regions, we construct an \emph{augmented graph of convex sets} (GCS) whose layered structure exactly encodes the key-door precedence logic. A shortest path in the augmented GCS simultaneously selects an optimal key collection sequence and computes an optimal continuous trajectory, providing an exact solution up to a finite B\'ezier curve parameterization. 

The present work substantially extends \cite{you2025motion} with three main contributions. First, we formally prove soundness, completeness, and cost equivalence of the augmented GCS construction. Second, we establish a precise correspondence between the augmented GCS layer structure and the Bellman-Held-Karp (BHK) dynamic programming algorithm for the Traveling Salesman Problem (TSP), showing that our augmented GCS framework can be interpreted as a continuous-geometry generalization of both the TSP and the shortest Hamiltonian path problem within the same combinatorial complexity class as BHK. Third, we develop an initial library of eight STL mission specification variations with proven correctness, including mandatory key collection, pure wayset collection (without precedence constraints), ordered key collection, disjunctive keys, conjunctive doors, items/tools (multi-door keys), timed keys and doors, and conditional (if-then) constraints. Numerical experiments on an expanded benchmark suite demonstrate that our approach achieves an \emph{exponential} speedup over methods that use general-purpose temporal logic tools, scales to large environments, and yields solutions within 2\% of the global optimum on almost all tested instances.
\end{abstract}

\section{Introduction}
Motion planning is a fundamental problem in robotics and autonomous systems involving the computation of feasible, optimized trajectories for agents navigating through complex environments \cite{lavalle2006planning,latombe2012robot}. Beyond obstacle avoidance, many real-world tasks impose rich logical and sequential constraints on how an agent must interact with its environment. A warehouse robot must pick and verify an item at one station before proceeding to delivery locations. An autonomous underwater robot must make a surface communications handshake before descending into a deep zone. A surgical robot must exchange the correct tool at a swap station before entering the next anatomical tissue layer. An inspection robot must confirm sensor access before traversing a restricted zone. In general, these are \emph{precedence constraints}: conditions under which parts of an environment become accessible only after prior actions have been completed, such as task execution, confirmation, clearance, or manipulation. 

Encoding and solving motion planning problems with such constraints efficiently and optimally is a major challenge, due to the inherent coupling between combinatorial sequencing (determining the order of actions) and continuous geometric optimization (computing a smooth trajectory through the environment). These problems are coupled tightly: the optimal action sequence depends on the geometry, and the feasibility of a continuous path depends on the action sequence. Existing approaches, discussed in detail in the related work section below, tend to handle one problem well, but rarely both. 

The \emph{graphs of convex sets} (GCS) framework, introduced by Marcucci et al. \cite{marcucci2024shortest, marcucci2023motion, marcucci2025unified}, provides a powerful framework to address this discrete-continuous coupling. By associating convex programs with the nodes and edges of a graph, GCS elegantly unifies combinatorial graph search with continuous convex optimization into a single optimization problem. For shortest path problems, solving the relaxation of a tight mixed-integer convex reformulation yields certifiably near-optimal or globally optimal solutions for many practical motion planning problems, essentially using only convex optimization. 

The goal of the present paper is to extend the GCS framework to a much richer class of \emph{mission} planning problems with precedence constraints, expressed as a fragment of signal temporal logic (STL). An overview of the proposed approach is shown in Fig.~\ref{fig:workflow}. Prior work of You et al.~\cite{you2025motion} introduced \emph{augmented graphs of convex sets} that encode key-door precedence constraints within the GCS framework. The main insight was that by constructing carefully layered and modified copies of the base GCS -- one subgraph per reachable subset of collected keys -- and linking layers via zero-cost directed edges at key nodes, the combinatorial precedence structure is exactly encoded in the graph topology. A shortest path in the augmented GCS simultaneously selects an optimal key-collection sequence and computes an optimal continuous trajectory, solving the discrete-continuous problems jointly. On benchmark instances, our approach runs several orders of magnitude faster and scales to far larger environments than general-purpose temporal logic motion planning tools \cite{kurtz2023temporal}.
\begin{figure*}[t]
  \centering
  \includegraphics[width=0.9\textwidth]{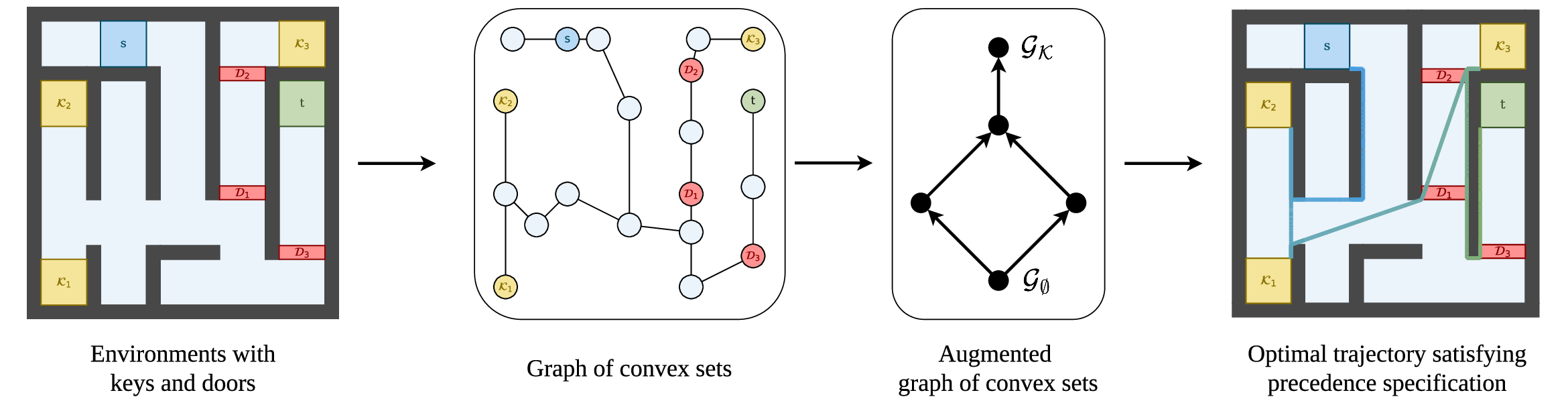}
  \caption{Overview of the proposed approach. The environment is first partitioned and represented as a Graph of Convex Sets (GCS). An augmented GCS is then constructed to encode precedence specifications, and a shortest path problem is solved for the augmented GCS.} 
  \label{fig:workflow}
\end{figure*}

The present paper substantially extends You et al.~\cite{you2025motion}. We provide formal proofs of optimality, establish a precise connection to a landmark classical combinatorial algorithm, develop a library of specification variations, and present an expanded computational study.

\subsection{Contributions}
Prior work \cite{you2025motion} made the following contributions: (1) an exact convex partitioning algorithm (Algorithm 1) that constructs a labeled GCS encoding connectivity among free space, key, and door regions; (2) an algorithm (Algorithm 2) for constructing an augmented GCS with an informal argument that a shortest path in the augmented GCS exactly solves the original motion planning problem; and (3) a key-door benchmark generator and numerical experiments demonstrating orders-of-magnitude speedup over general-purpose temporal logic tools.

The present paper contributes the following beyond \cite{you2025motion}:

\textbf{Formal proofs of correctness and optimality (Theorem 1).} We present proofs of soundness and completeness, establishing a bijection between augmented GCS paths and feasible trajectories, and of cost equivalence.

\textbf{Bellman-Held-Karp (BHK) correspondence (Theorem 2).} We demonstrate that the layer structure and inter-layer edges of the augmented GCS correspond exactly to the BHK dynamic programming recurrence \cite{bellman1962dynamic,held1962dynamic} for the Traveling Salesman Problem (TSP), generalized from discrete city-to-city costs to shortest paths through continuous geometry. The augmented GCS matches the combinatorial complexity class of BHK while additionally solving the continuous geometry problem that BHK takes as preprocessed input.

The BHK correspondence is not just an elegant aesthetic observation, it also opens the door to connect with more than 70 years of research on the TSP, one of the most widely studied combinatorial optimization problems. It strongly motivates translating a wide variety of heuristics into the augmented GCS framework; these heuristics can enable scaling to extremely large instances despite worst-case (exponential) computational complexity.

\textbf{Extended Specification Library (Section \ref{sec:variations}).} We develop eight specification variations extending the key-door precedence logic: mandatory key collection, pure wayset collection (without precedence constraints), ordered key collection, disjunctive keys, conjunctive doors (multi-key doors), items/tools (multi-door keys), timed keys and doors, and conditional (if-then) constraints. For each variation, we provide the modified STL formula, the corresponding modification for augmented GCS construction in Algorithm 2, and a proof of correctness. This extensible library of STL specifications enables moving from vanilla motion planning to increasingly complex \emph{mission} planning and design problems. 

\textbf{Expanded experimental evaluation and refactored codebase.} We present substantially expanded numerical experiments including more varied and larger key-door maze instances, a new benchmark suite for the pure wayset collection (TSP) variation, a hand-designed environment from a classic video game, and a relaxation tightness analysis. For the final version, we will provide a clean refactoring of the codebase to facilitate reproducibility and follow-up research.

\textbf{Improved Exact Convex Partitioning.} Finally, we provide a significant improvement over the convex partitioning algorithm in \cite{you2025motion}. In contrast to obstacles and doors, key regions do not constrain traversability: an agent may pass through a key region freely, and key regions are permitted to overlap with free-space cells. Therefore, only obstacles and doors must contribute hyperplanes to the arrangement that partitions the environment. Key regions are incorporated as labeled nodes after the partition is constructed, by identifying which cells each key region intersects. Excluding key region boundaries from the hyperplane arrangement avoids introducing cuts into the free-space partition that serve no geometric purpose, directly reducing the size of every subgraph in the augmented GCS.

\subsection{Related Work}
\textbf{Motion Planning.} Motion planning comprises sampling-based, combinatorial,
and optimization-based paradigms~\cite{lavalle2006planning, latombe2012robot}.
Sampling-based methods such as RRT($^*$)~\cite{lavalle2001randomized,karaman2011sampling} and
PRM($^*$)~\cite{kavraki2002probabilistic,karaman2011sampling} handle high-dimensional configuration spaces
but provide only probabilistic guarantees and non-smooth trajectories.  Combinatorial methods
give exact solutions in low dimensions but face
exponential complexity in higher dimensions~\cite{canny1988complexity}. Trajectory optimization
approaches~\cite{ratliff2009chomp, zucker2013chomp, schulman2013finding, betts2010practical} compute
high-quality locally optimal solutions but may become trapped in
local minima.  Our approach inherits the ability of optimization-based
methods to compute certifiably near-optimal smooth trajectories while extending
their reach to problems with complex combinatorial logical structure.

\textbf{Graphs of Convex Sets.}
The GCS framework was introduced by Marcucci et
al.~\cite{marcucci2024shortest}, who developed a tight mixed-integer
convex formulation of the shortest path problem in graphs of convex
sets and showed that the convex relaxation is often exact in practice.
The framework was applied to robot motion planning around obstacles
in~\cite{marcucci2023motion}, establishing GCS as a highly effective
tool for motion planning that produces certifiably near-optimal
trajectories on challenging environments. Subsequent work has extended
the GCS framework towards multi-query settings~\cite{morozov2024multi},
contact-rich manipulation~\cite{graesdal2024towards}, non-Euclidean configuration spaces \cite{cohn2025non}, a unified graph optimization framework \cite{marcucci2025unified}, and other directions \cite{natarajan2024implicit,luna2026augmented, philip2024mixed}.

\textbf{Temporal Logic Motion Planning.}
Using formal specification languages such as linear temporal logic (LTL) to express complex robot tasks
has been widely studied~\cite{baier2008principles, fainekos2009temporal, kress2009temporal, lahijanian2011temporal, plaku2015motion, belta2017formal}. The standard approach constructs a product automaton
between a finite abstraction and discrete finite or Büchi automaton for the LTL
formula, facing doubly exponential complexity.  For continuous
environments, combining formal specifications with continuous
trajectory optimization has been pursued through mixed-integer
programming~\cite{wolff2014optimization}, satisfiability modulo
theories~\cite{shoukry2017linear}, and control barrier
functions~\cite{lindemann2018control, lindemann2020barrier}.  Signal temporal logic
(STL)~\cite{maler2004monitoring, lindemann2018control} extends temporal logic to continuous-time
signals, and has been used with robust and risk-based optimization
formulations~\cite{donze2010robust, safaoui2020control}.

\textbf{Temporal Logic with GCS.}
Combining temporal logic specifications with the GCS framework was
first studied by Kurtz and Lin~\cite{kurtz2023temporal}, who represent
the product of an LTL automaton with a GCS structure as a graph
of convex sets.  This top-down approach handles arbitrary LTL
specifications but inherits doubly exponential automaton construction
cost.  Our approach is complementary: rather than applying a general
pipeline to arbitrary formulas, we identify classes of
specifications whose structure can be directly encoded in the GCS
topology, achieving singly exponential construction in the number of
key-door pairs.  The fragment
library in Section~\ref{sec:variations} explores the
boundary between these two approaches.

\textbf{Task and Motion Planning.} 
Our problem can also be framed within the task and motion planning (TAMP) paradigm~\cite{garrett2021integrated, cambon2009hybrid, kaelbling2011hierarchical, kaelbling2013integrated}, which jointly
reasons about discrete task-level decisions and continuous
motion-level execution.  TAMP methods generally rely on interleaving
symbolic planning with geometric feasibility checking~\cite{cambon2009hybrid,
toussaint2015logic}.  The augmented GCS approach differs from classical
TAMP in that the discrete sequencing problem and the continuous
geometric optimization are solved simultaneously rather than
iteratively, and the solution is certifiably near-optimal rather
than satisficing.

\textbf{Traveling Salesman, Vehicle Routing, and Waypoint Collection.}
The Traveling Salesman Problem (TSP) and its variants formalize the
combinatorial structure underlying waypoint collection
tasks~\cite{applegate2011traveling, laporte1992vehicle}.  The BHK dynamic
programming algorithm~\cite{bellman1962dynamic, held1962dynamic} is the
foundational exact method, running in $O(2^n n^2)$ time.
Several works have studied TSP variants in continuous geometric
spaces, including TSP with neighborhoods \cite{dumitrescu2003approximation}, the Dubins TSP for vehicles with turning
constraints~\cite{ny2011dubins, savla2008traveling} and coverage path
planning~\cite{galceran2013survey}.  Our BHK correspondence result in
Section~\ref{sec:theory} precisely characterizes the relationship
between the augmented GCS and BHK, showing that the augmented GCS
solves a continuous-geometry generalization of both the shortest
Hamiltonian path problem and Traveling Salesman Problem within the same combinatorial complexity
class.

\textbf{Convex Partitioning.}
Computing a convex partition of the free space is a classical problem
in computational geometry~\cite{latombe2012robot, guibas1987linear}.  The cell enumeration
approach based on hyperplane arrangements used in
Algorithm~\ref{alg:partition} follows the classic reverse search method of
Avis and Fukuda~\cite{guibas1987linear}. The optimal complexity
reduction problem for minimizing the size of a decomposition was studied by Geyer et al.~\cite{geyer2008optimal}.

\section{Problem Formulation}
Consider a robot operating in a bounded environment $\mathcal{X} \subset \mathbf{R}^d$. The environment contains a set of obstacles $\mathcal{O} = \{\mathcal{O}_i \}_{i=1,...,n_O}$ with $\mathcal{O}_i \subset \mathcal{X}$. The obstacle-free space is $\mathcal{C} = \mathcal{X}  \setminus \mathcal{O}$. The environment also contains a set of \emph{door} regions $\mathcal{D} = \{\mathcal{D}_i \}_{i=1,...,\nK}$, which are obstacles that can be ``unlocked'' and removed by visiting a corresponding set of \emph{key} regions $\mathcal{K} = \{ \mathcal{K}_i \}_{i=1,...,\nK}$. 
All sets are assumed to be polytopes with given half-space representations:
\begin{equation} \nonumber
\begin{aligned}
\mathcal{X} &= \{ x \in \mathbf{R}^d \mid H_{X} x \leq g_{X} \} \\
\mathcal{O}_i &= \{ x \in \mathbf{R}^d \mid H_{O_i} x \leq g_{O_i} \} \\
\mathcal{D}_i &= \{ x \in \mathbf{R}^d \mid H_{D_i} x \leq g_{D_i} \} \\
\mathcal{K}_i &= \{ x \in \mathbf{R}^d \mid H_{K_i} x \leq g_{K_i} \}.
\end{aligned}
\end{equation}
We assume $\nK$ keys and $\nK$ doors with a bijective pairing: key $i$ unlocks door $i$ for $i=1,...,\nK$. The \emph{key-door mapping} $\mathbf{D}: 2^{\mathcal{K}}\rightarrow 2^\mathcal{D}$ specifies which doors are unlocked by a given key subset. In the base setting, this is the identity mapping $\mathbf D(S) = \{D_i : K_i \in S\}$.



Our goal is to find a trajectory $q$ over a time horizon $T$ that solves the following optimization problem:
\begin{equation} \label{optprob}
    \begin{aligned}
        &\underset{q}{\text{minimize}} \quad &&\alpha \int_0^T \| \dot q(t) \|_2 dt + \beta T \\
        &\text{subject to} \quad && q(t) \in \mathcal{C} \cup \mathcal{K} \cup \mathcal{D} \quad \forall t \in [0, T] \\
        & && q(t) \models \phi(\mathcal{K}, \mathcal{D} ) \\
        & && \dot q(t) \in \mathcal{Q} \quad \forall t \in [0, T] \\
        & && q(0) = q_0, \ q(T) = q_T \\
        & && \dot q(0) = \dot q_0, \ \dot q(T) = \dot q_T.
    \end{aligned}
\end{equation}
The objective is a weighted sum of the trajectory length and the time horizon with weights $\alpha$ and $\beta$, respectively. The first constraint requires the trajectory to be in the union of free space sets, key sets, and door sets for all time. The second constraint is a temporal logic specification that enforces the key-door precedence logic to be satisfied. The third constraint requires the velocity to be in the convex set $\mathcal{Q}$ for all time. The remaining constraints specify initial and terminal conditions for the position and velocity.

The optimization problem \eqref{optprob} is infinite-dimensional. Candidate trajectories are parameterized with piecewise B\'ezier curves defined by a finite set of control points, making the problem finite-dimensional and admitting additional convex constraints or objective terms on higher-order derivatives for dynamic feasibility. This facilitates trajectory design for fully actuated and differentially flat systems.


\subsection{Precedence Constraints via Signal Temporal Logic}
We use signal temporal logic (STL) \cite{maler2004monitoring} to encode the key-door precedence specifications. STL is a variation of temporal logic \cite{baier2008principles} that offers a general and expressive framework to specify complex spatiotemporal tasks and constraints. A STL formula can be composed from the following grammar
\begin{equation}
    \varphi ::= \top \mid p \mid \neg \varphi \mid \varphi_{1} \wedge \varphi_{2} \mid \varphi_{1} \vee \varphi_{2} \mid \varphi_{1} \mathcal{U} \varphi_{2} \mid \varphi_{1} \mathcal{R} \varphi_{2},
\end{equation}
by starting from a set of atomic propositions $AP$ with $p\in AP$ and recursively applying boolean operators: $\neg$ (not), $\wedge$ (and), and $\vee$ (or), and temporal operators: $\mathcal{U}$ (until), and $\mathcal{R}$ (release). The until operator $\phi \mathcal{U} \psi$ is satisfied if $\phi$ remains true until $\psi$ becomes true. The release operator $\phi \mathcal{R} \psi$ is satisfied if $\psi$ remains true until and including when $\phi$ becomes true, and if $\phi$ never becomes true, then $\psi$ always remains true; in other words, $\phi$ releases $\psi$. Additional temporal operators include eventually ($\mathcal{F} \varphi := \top \mathcal{U} \varphi $) and always ($G \varphi := \neg \mathcal{F} \neg \varphi$). 

The base key-door precedence specification uses the release operator:
\begin{equation}
  \label{eq:phi_base}
  \varphi_{\mathrm{base}} = \bigwedge_{i=1}^{\nK}
    \bigl(K_i \, \mathcal{R} \, \lnot D_i\bigr) \land \mathcal{F}(\mathcal{T}),
\end{equation}
which requires $\lnot D_i$ to hold until $K_i$ is visited (and visiting
$K_i$ releases the constraint), while making key collection
\emph{optional} unless geometrically required to reach the goal.
The term $\mathcal{F}(\mathcal{T})$ requires the trajectory to
eventually reach the target set, which is guaranteed by the terminal condition in \eqref{optprob}.

Alternatively, key collection can be made \emph{mandatory} using the until operator:
\begin{equation}
  \label{eq:phi_until}
  \varphi_{\mathcal{U}} = \bigwedge_{i=1}^{\nK}
    \bigl(\lnot D_i \, \mathcal{U} \, K_i\bigr) \land \mathcal{F}(\mathcal{T}),
\end{equation}
which prevents passage through $D_i$ until $K_i$ has been collected
and additionally guarantees $K_i$ is eventually collected.
When the geometry requires all keys to reach the target,
$\varphi_{\mathrm{base}}$ and $\varphi_{\mathcal{U}}$ are equivalent.

\subsection{Graphs of Convex Sets}
The GCS framework \cite{marcucci2024shortest, marcucci2025unified} consists of a graph $\mathcal{G}(\mathcal{V}, \mathcal{E})$ where each vertex $v \in \mathcal{V}$ is associated with a variable $x_v \in \mathbf{R}^{n_v}$, a convex set $\mathcal{X}_v \subseteq \mathbf{R}^{n_v}$, and a convex function $f_v : \mathbf{R}^{n_v} \rightarrow \mathbf{R} $. Each edge $e=(u,v) \in \mathcal{E}$ couples the vertex variables through a convex function $f_e : \mathbf{R}^{n_u + n_v} \rightarrow \mathbf{R}$ and a convex constraint set $\mathcal{X}_e \subseteq \mathbf{R}^{n_u + n_v}$. 

A general GCS optimization problem is
\begin{equation} \label{generalgcs}
    \begin{aligned}
        &\underset{H, \ \{x_v\}_{v\in \mathcal{W}}}{\text{minimize}} \quad &&\sum_{v \in \mathcal{W}} f_v(x_v) + \sum_{e=(u,v) \in \mathcal{F}} f_e(x_u, x_v) \\
        &\text{subject to} \quad && H = (\mathcal{W}, \mathcal{F}) \in \mathcal{H}, \\
        & && x_v \in \mathcal{X}_v,\quad \forall v \in \mathcal{W}, \\
        & && (x_u, x_v) \in \mathcal{X}_e, \quad \forall e = (u, v) \in \mathcal{F}, 
    \end{aligned}
\end{equation}
where the variables are the (discrete) subgraph $H$ with vertex set $\mathcal{W} \subseteq \mathcal{V}$ and edge set $\mathcal{F} \subseteq \mathcal{W}^2 \cap \mathcal{E}$ within an admissible subset of graphs $\mathcal{H}$ and the (continuous) variables $x_v$ for each vertex $v \in \mathcal{W}$.

\textbf{Shortest Path Problem in GCS.}
For a given start vertex $s \in \mathcal{V}$ and target vertex $t \in \mathcal{V}$, a path $p$ is a sequence of distinct vertices that connects $s$ to $t$ via an edge subset $\mathcal F \subset \mathcal E$. The general problem \eqref{generalgcs} can be specialized to the Shortest Path Problem (SPP) by taking $\mathcal{H}$ as the set of all paths, $\mathcal{P}$, from $s$ to $t$ in the graph $\mathcal{G}$. The shortest path problem is:
\begin{equation} \label{spp_gcs}
    \begin{aligned}
        &\underset{p, \ x_v}{\text{minimize}} \quad &&\sum_{e=(u,v) \in \mathcal{E}_p} \ell_e(x_u, x_v) \\
        &\text{subject to} \quad && p \in \mathcal{P}, \\
        & && x_v \in \mathcal{X}_v,\quad \forall v \in p, \\
        & && (x_u, x_v) \in \mathcal{X}_e, \quad \forall e \in \mathcal{E}_p
    \end{aligned}
\end{equation}
Although computationally hard in general, a tight mixed-integer formulation using network flow and perspective cone duality \cite{marcucci2024shortest} enables practically efficient solutions. A convex relaxation obtained by dropping binary constraints, combined with an inexpensive rounding scheme, yields certifiably near-optimal or even optimal solutions for many practical motion planning problems \cite{marcucci2023motion}.


\section{Exact Convex Partitioning and \\Labeled GCS Construction}
This section presents an exact convex partitioning algorithm that represents the traversable space as a union of convex sets and constructs a labeled graph $\mathcal{G} (\mathcal{V},\mathcal{E})$ encoding connectivity among free-space, key, and door regions. Key regions do not constrain traversability: an agent may pass through a key region freely, so key regions are permitted to overlap with free-space cells. Accordingly, only obstacles and doors contribute hyperplanes to the arrangement that partitions the environment. Key regions are incorporated as additional labeled nodes after the partition is constructed, by identifying which cells each key region intersects.

\subsection{Cell Enumeration for Hyperplane Arrangements}
We collect all distinct hyperplanes bounding $\mathcal{X}$ $\mathcal{O}$, and $\mathcal{D}$ into a \emph{hyperplane arrangement} $\mathcal{H} = \{ (h_i, g_i) \}_{i=1}^{n_h}$. Define the sign function $\mathcal{S} : \mathcal{X} \rightarrow \{+, - \}^{n_h}$ by
$$ \mathcal{S}(x) = \begin{cases} - \quad h_i^\top x \leq g_i \\ + \quad h_i^\top x \geq g_i \end{cases}\quad \text{for} \ i=\{1,...,n_h\}.$$
This function assigns to each point in $\mathcal{X}$ a sign pattern, called a \emph{marking}, that encodes which side of each hyperplane that point lies. Each \emph{marking} $m \in \{+,-\}^{n_h}$ defines a (convex) polytopic \emph{cell}
$$ \mathcal{P}_m = \{ x \in \mathcal{X} \mid \mathcal{S}(x) = m \}.$$
These sets form an exact convex partition of the environment $\mathcal{X}$. Cells and their associated markings can be enumerated using the reverse search algorithm \cite{avis1996reverse}, which does a depth-first search to construct a spanning tree of the cells. For $n_h$ hyperplanes in $d$-dimensionals, the cell count is bounded by
$$|M(\mathbf{R}^d)| \leq \sum_{i=0}^{d} {n_h \choose i} = O(n_h^d), $$
with the bound attained when the hyperplanes are in \emph{general position} (no pair of hyperplanes is parallel and no point lies on more than $d$ hyperplanes) \cite{buck1943partition}. This exponential dimension dependence is an inherent computational bottleneck, which is mitigated in structured environments.

\subsection{Labeled Graph Construction}
Associating a vertex with each cell defines the vertex set $\mathcal{V}$. 
Two cells share a facet if and only if their markings differ in exactly one entry, so the edge set is $\mathcal{E} = \{(u,v) \mid d_H(m_u, m_v) = 1\}$, where $d_H$ denotes Hamming distance. 
Door cells can be identified by their markings: for the subset $H \subset \mathcal{H}$ of hyperplanes
defining any door, cells with $m_i = -$ for all $i \in H$
correspond to that door and are merged and labeled accordingly, forming the door vertices $\mathcal{V}_\mathcal{D}$.
Obstacle cells and their associated vertices are similarly identified but then removed. The remaining vertices correspond to free space, forming free-space vertices $\mathcal{V}_\mathcal{C}$ and corresponding polytopes $\mathcal{C}_v$. Free-space nodes can be
further merged to reduce graph size using an optimal complexity
reduction algorithm based on branch and bound, or a greedy
variant~\cite{geyer2008optimal,shaikh2025exact}.

\textbf{Key region labeling.} Unlike obstacles and doors, key regions do not contribute hyperplanes to the arrangement and therefore do not influence the cell structure. After the partition is constructed and free-space cells are merged, each key region $K_i$ is associated with a key vertex $i \in \mathcal{V}_\mathcal{K}$ and connected to the graph by identifying all vertices $v \in \mathcal{V}_\mathcal{C}$ whose associated convex set $\mathcal{C}_v$ satisfies $\mathcal{C}_v \cap K_i \neq \emptyset$. These vertices are connected with an edge to the corresponding key vertex. In the common case where $K_i$ is fully contained within a single merged free-space cell, the key vertex is connected to exactly one free space vertex. When $K_i$ spans multiple cells (for instance, when a key region straddles a cell boundary introduced by a door or obstacle hyperplane), its vertex is connected to multiple free space vertices. The intersection test $\mathcal{C}_v \cap K_i \neq \emptyset$ reduces to a linear feasibility check since both sets are polytopes.

Our algorithm for exact convex partitioning and labeled GCS construction is summarized in Algorithm \ref{partition}. The output is a labeled graph of convex sets $\mathcal{G}(\mathcal{V},\mathcal{E})$ with vertex partition $\mathcal{V} = \{\mathcal{V}_\mathcal{C}, \mathcal{V}_\mathcal{K}, \mathcal{V}_\mathcal{D} \}$. The polytopes corresponding to $\mathcal{V}_\mathcal{C}$ form an exact convex partition of the free space, and the edge set encodes connectivity among free-space, key, and door regions.

    \begin{algorithm}[ht] 
        \caption{Exact Convex Partition \& Labeled GCS Construction}
        \label{alg:partition}
        \begin{algorithmic}[1] \label{partition}
            \renewcommand{\algorithmicrequire}{\textbf{Input:}}
            \renewcommand{\algorithmicensure}{\textbf{Output:}}
            \REQUIRE Environment $\mathcal{X}$, obstacles $\mathcal{O}$, keys $\mathcal{K}$, doors $\mathcal{D}$ \\(all in half-space representation)
            \ENSURE Labeled graph of convex sets $\mathcal{G}( \mathcal{V}, \mathcal{E})$ with $\mathcal{V} = \{\mathcal{V}_\mathcal{C}, \mathcal{V}_\mathcal{K}, \mathcal{V}_\mathcal{D} \}$ and exact convex partition $\mathcal{X} \setminus (\mathcal{O} \cup \mathcal{D}) = \{\mathcal{C}_i \}_{i=1}^{n_C}$
            \STATE Form hyperplane arrangement $\mathcal{H}$ from all boundaries of $\mathcal{X}$, $\mathcal{O}$, $\mathcal{D}$ (key boundaries are excluded)
            \STATE Enumerate non-empty cells and generate markings $M(\mathcal{X})$ (via reverse search \cite{avis1996reverse})
            \STATE Define vertex for each cell to form initial vertex set $\mathcal{V}$
            \STATE Set edge set $\mathcal{E}$ between vertices whose markings differ in exactly one entry
            \FOR {$\mathcal{D}_i \in \mathcal{D}$}
            \STATE Merge cells whose markings place them inside $D_i$; label merged vertex as $D_i$ in $\mathcal{V}_\mathcal{D}$
            \ENDFOR
            \FOR {$\mathcal{O}_i \in \mathcal{O}$}
            \STATE Merge and discard corresponding cells and vertices
            \ENDFOR
            \FOR {$\mathcal{K}_i \in \mathcal{K}$}
            \STATE Create vertex $K_i$ in $\mathcal{V}_\mathcal{K}$. Identify all vertices $v \in \mathcal{V}_\mathcal{C} = \mathcal{V} \setminus  (\mathcal{V}_\mathcal{O} \cup \mathcal{V}_\mathcal{D}) $ whose corresponding convex sets intersect $K_i$ and connect them;
            \ENDFOR
            \STATE Merge free space nodes that preserve convexity (via optimal complexity reduction or greedy algorithm)
            \RETURN $\mathcal{G}( \mathcal{V}, \mathcal{E})$ with $\mathcal{V} = \{\mathcal{V}_\mathcal{C}, \mathcal{V}_\mathcal{K}, \mathcal{V}_\mathcal{D} \}$
        \end{algorithmic}
    \end{algorithm}

\begin{remark}[Partition quality and augmented GCS size]
  \label{rem:partition_quality}
  The number of free-space nodes $|\mathcal{V}_\mathcal{C}|$ produced by
  Algorithm~\ref{alg:partition} directly affects the size of every
  subgraph in the augmented GCS and thus total solve time.
  Investing in better free-space merging (e.g., the branch-and-bound
  method of~\cite{geyer2008optimal}) yields downstream computational
  savings that compound with the number of subgraph
  copies. Excluding key region boundaries from the hyperplane arrangement avoids introducing cuts into the free-space partition that serve no geometric purpose, directly reducing $|\mathcal{V}_\mathcal{C}|$ and thus the size of every subgraph in the augmented GCS.
\end{remark}

\section{Augmented GCS Construction}
In this section, we describe how to construct an augmented graph of convex sets that exactly encodes the key-door precedence constraints based on the labeled GCS $\mathcal{G}(\mathcal{V}, \mathcal{E})$ from Section III. The augmented GCS is built as a layered collection of subgraphs, one per reachable key subset. Each subgraph is a copy of the base graph $\mathcal{G}$, with different edge sets according to the current set of collected keys.  The augmented GCS consists of $\nK + 1$ layers, where each layer represents reachable key subsets of a certain cardinality. Augmented GCS construction begins with a start node connected to a free space vertex whose corresponding set contains the initial condition. By construction, every path connecting the starting node and a target node copy in the augmented GCS satisfies the key-door precedence constraints.  

A simple 2-key environment is shown in Fig. \ref{fig:simple key door}, and the associated augmented GCS is shown in Fig. \ref{fig:augGCS}. We will refer to this as a running example throughout our description.
\begin{figure}[htbp]
  \centering
  \includegraphics[width=0.7\linewidth]{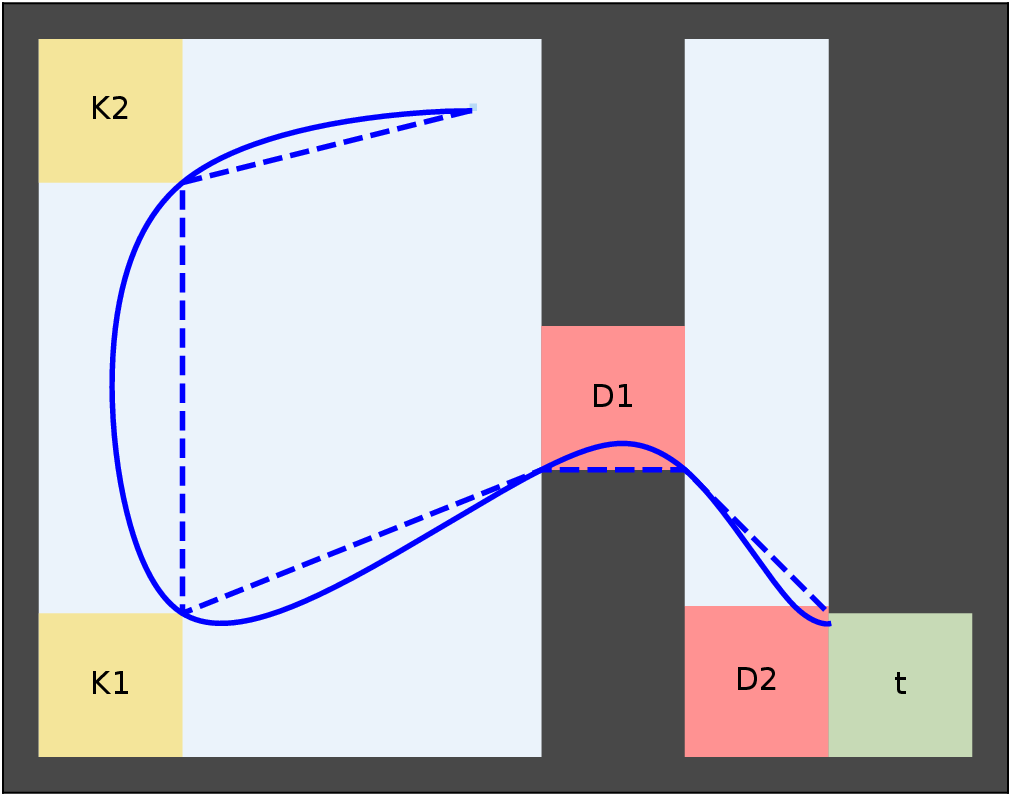}
  \caption{A key-door environment with 2 keys and 2 doors. The dashed blue line shows the optimal linear-piecewise solution, and the solid blue line shows the optimal 4th-order Bézier spline solution with a small penalty on curvature.}
  \label{fig:simple key door}
\end{figure}

\begin{figure}[htbp]
  \centering
  \includegraphics[width=0.9\linewidth]{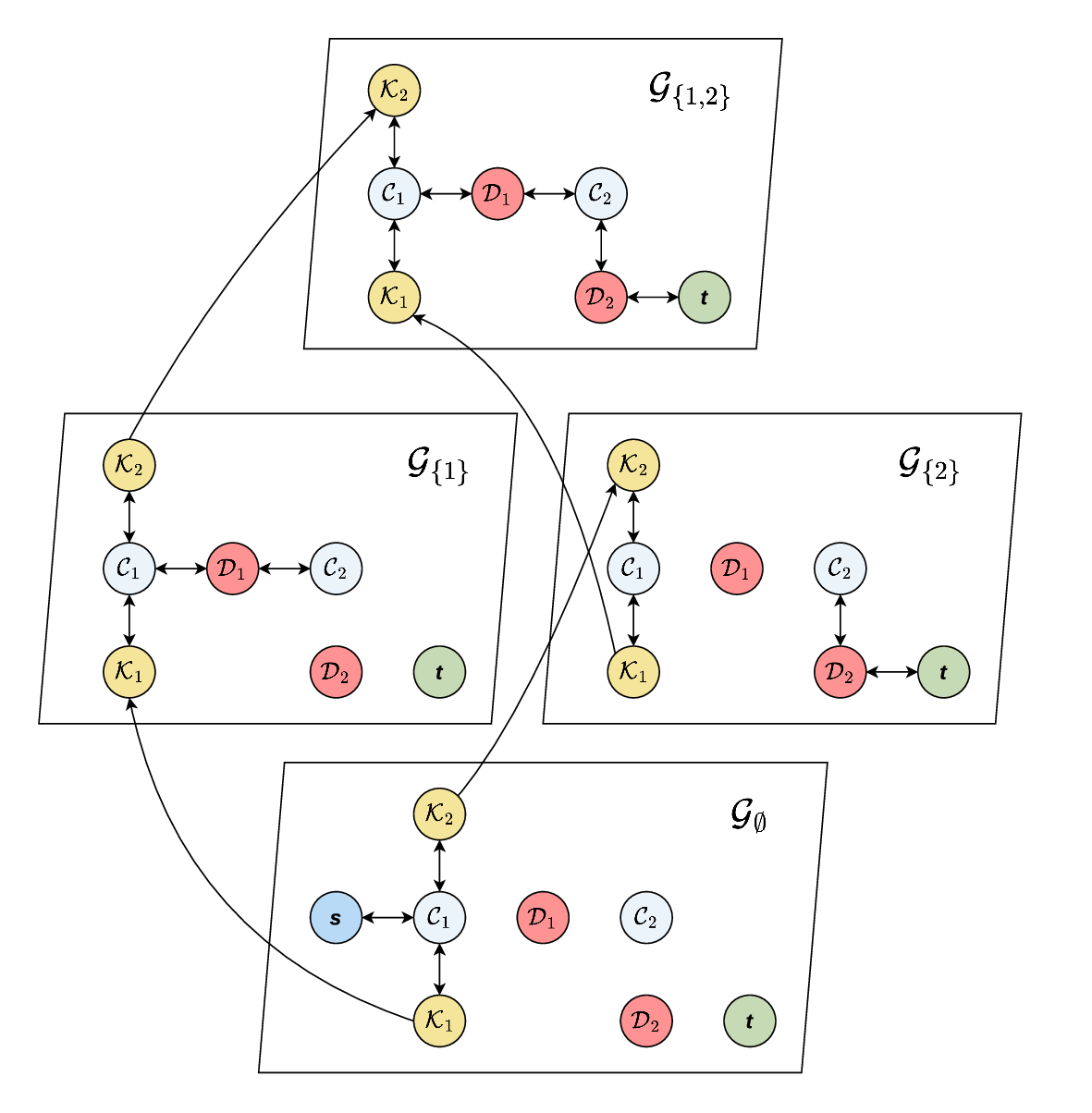}
  \caption{The augmented GCS for a simple 2-key environment. The start point is located in free space node $\mathcal{C}_1$ and the target set is adjacent to $\mathcal{D}_2$.}
  \label{fig:augGCS}
\end{figure}

\subsection{Constructing the Augmented GCS}

\textbf{The Base Layer ($\ell = 0)$.} The base layer of the augmented GCS corresponds to the empty key set $\emptyset \in 2^\mathcal{K}$. The base subgraph $\mathcal{G}_\emptyset(\mathcal{V}_\emptyset, \mathcal{E}_\emptyset)$ has vertex set $\mathcal{V}_\emptyset = \mathcal{V}$ and edge set $$\mathcal{E}_\emptyset = \mathcal{E} \setminus \{ (u, v) \in \mathcal{E} \mid u \in \mathcal{V}_\mathcal{D} \ \text{or} \ v \in \mathcal{V}_\mathcal{D} \},$$
removing all edges incident to door nodes. The base layer for the 2-key environment is shown at the bottom of Fig. \ref{fig:augGCS}.

\textbf{Layer $\ell$ ($1 \leq \ell < n_{\mathcal{K}}$).} Layer $\ell$ consists of subgraphs $\mathcal{G}_S$ for each reachable $\ell$-element key subset $S \in S_\ell \subset 2^\mathcal{K}$. The reachable key subsets are found with breadth- or depth-first search from each subgraph at the previous layer. The \emph{width} of layer $\ell$ is $|S_\ell|$. Each subgraph $\mathcal{G}_S(\mathcal{V}_S, \mathcal{E}_S)$ has vertex set $\mathcal{V}_S =\mathcal{V}$ and edge set $$\mathcal{E}_S = \mathcal{E}_\emptyset \cup \{ (u, v) \in \mathcal{E} \mid u \in \mathbf{D}(S) \ \text{or} \ v \in \mathbf{D}(S) \},$$
reinserting edges incident to doors unlocked by keys in $S$. For each key node $K_v \in S$, a directed edge of zero cost is added from the copy of key $K_v$ in $\mathcal{G}_{S - \{v\} }$ to its copy in $\mathcal{G}_{S}$, encoding the key collection event.

In our running 2-key example, both keys are reachable from the start node, so $S_1 = \{ \{1\}, \{2\} \}$ and layer 1 has width 2. The subgraphs $G_{\{1\}}$ and $G_{\{2\}}$ are shown in the middle of Fig. \ref{fig:augGCS}, along with the directed edges from key nodes in $G_\emptyset$.

\textbf{Layer $\nK$.} The final layer corresponds to the full key set $\mathcal{K} \in 2^\mathcal{K}$ and consists of a single subgraph $\mathcal{G}_\mathcal{K}(\mathcal{V}_\mathcal{K}, \mathcal{E}_\mathcal{K})$ with vertex set $\mathcal{V}_\mathcal{K}=\mathcal{V}$ and edge set $\mathcal{E}_\mathcal{K} = \mathcal{E}$, the same edge set as $\mathcal{G}$. For each key node $v \in \mathcal{V}_\mathcal{K}$, a directed edge of zero cost is added from the key node copy in $\mathcal{G}_{\mathcal{K} - \{v\} }$.

In our 2-key example,  the subgraph $G_{\{1, 2\}}$ associated with the full key set is shown at the top of Fig. \ref{fig:augGCS}.

\textbf{Target Merging.} Finally, given a target node $t \in \mathcal{V}_{\mathcal{C}}$ whose corresponding set contains the terminal condition, we merge all its copies node throughout all layers and subgraphs into a single node that serves as the target node in the augmented GCS. The algorithm for constructing the augmented graph is summarized in Algorithm \ref{Algo1}.

For each subgraph in the augmented graph, we associate the free space, key, and door regions with the corresponding node copies to obtain the augmented GCS. The machinery of the GCS motion planning framework \cite{marcucci2023motion,marcucci2024shortest} can be applied to compute a shortest path from the start node to the (merged) target node in the augmented GCS. 

    \begin{algorithm}[ht] 
        \caption{Augmented GCS Construction}
        \begin{algorithmic}[1] \label{Algo1}
            \renewcommand{\algorithmicrequire}{\textbf{Input:}}
            \renewcommand{\algorithmicensure}{\textbf{Output:}}
            \REQUIRE Labeled graph $\mathcal{G}(\mathcal{V}, \mathcal{E})$, with $\mathcal{V} = \{\mathcal{V}_\mathcal{C}, \mathcal{V}_\mathcal{D}, \mathcal{V}_\mathcal{K} \}$, start node $s$ connected to $v \in \mathcal{V}_\mathcal{C}$, target node $t \in \mathcal{V}_\mathcal{C}$
            \ENSURE Augmented graph $\hat{\mathcal{G}}( \hat{\mathcal{V}}, \hat{\mathcal{E}})$
            \STATE Create subgraph $\mathcal{G}_\emptyset(\mathcal{V}_\emptyset, \mathcal{E}_\emptyset)$, where $\mathcal{V}_\emptyset := \mathcal{V}$, $\mathcal{E}_\emptyset := \mathcal{E} - \{ (u, v) \in \mathcal{E} \mid u \in \mathcal{V}_\mathcal{D} \ \text{or} \ v \in \mathcal{V}_\mathcal{D} \}$ 
            \FOR {$\ell=1$ to $\nK$}
            \STATE Find all reachable $\ell$-element key sets $S_\ell \subset 2^\mathcal{K}$ from start node copy in all subgraphs with $(\ell-1)$-element key sets
            \FOR {$S \in S_\ell$}
            \STATE Create subgraph $\mathcal{G}_S(\mathcal{V}_S, \mathcal{E}_S)$, where $\mathcal{V}_S := \mathcal{V}$, $\mathcal{E}_S := \mathcal{E}_\emptyset \cup \{ (u, v) \in \mathcal{E} \mid u \in \mathbf{D}(S) \ \text{or} \ v \in \mathbf{D}(S) \} $
            \FOR {$v \in S$}
            \STATE Add directed edge $\mathcal{E}_S \leftarrow \mathcal{E}_S \cup \{ (u, v) \} $, where $u$ is the copy of $v$ in  $\mathcal{V}_{S - \{ v \} }$
            \ENDFOR
            \ENDFOR
            \ENDFOR
            \STATE Merge copies of target node in all layers \& subgraphs
            \RETURN $\hat{\mathcal{G}}( \hat{\mathcal{V}}, \hat{\mathcal{E}})$, where $ \hat{\mathcal{V}} = \cup_{S \in \{ S_\ell \}_{\ell = 0}^{\nK}} \mathcal{V}_S $, $ \hat{\mathcal{E}} = \cup_{S \in \{ S_\ell \}_{\ell = 0}^{\nK}} \mathcal{E}_S $ 
        \end{algorithmic}
    \end{algorithm}

\textbf{Size of the augmented GCS.}
The total number of subgraphs in the augmented GCS satisfies
\begin{equation}
\sum_{\ell=0}^{\nK} |S_\ell| \leq \sum_{\ell=0}^{\nK} {\nK \choose \ell} = 2^{\nK}.
\end{equation}
with equality when all keys are reachable from the start node $s$. This exponential dependence on the number of keys represents an important computational challenge. However, the subgraph lattice structure has a close correspondence with the landmark Bellman-Held-Karp algorithm, which will be made precise in Section VI. Fig. \ref{fig:subgraphs} shows subgraph lattices for up to 4 keys when all key subsets are reachable. 

At another extreme, when the environment geometry requires keys to be collected in a fixed sequence, the augmented GCS reduces to a chain of $\nK + 1$ subgraphs. There are degenerate cases where some keys are not reachable at all, which reduces the count further. In intermediate problems, the geometry constrains the reachable key subsets and limits the size of the augmented GCS. The \emph{maximum width} $\max_\ell |S_\ell|$ of the augmented graph is the key computational bottleneck. 

\begin{figure}[htbp] \label{keysubgraphs}
  \centering
  \includegraphics[width=\linewidth]{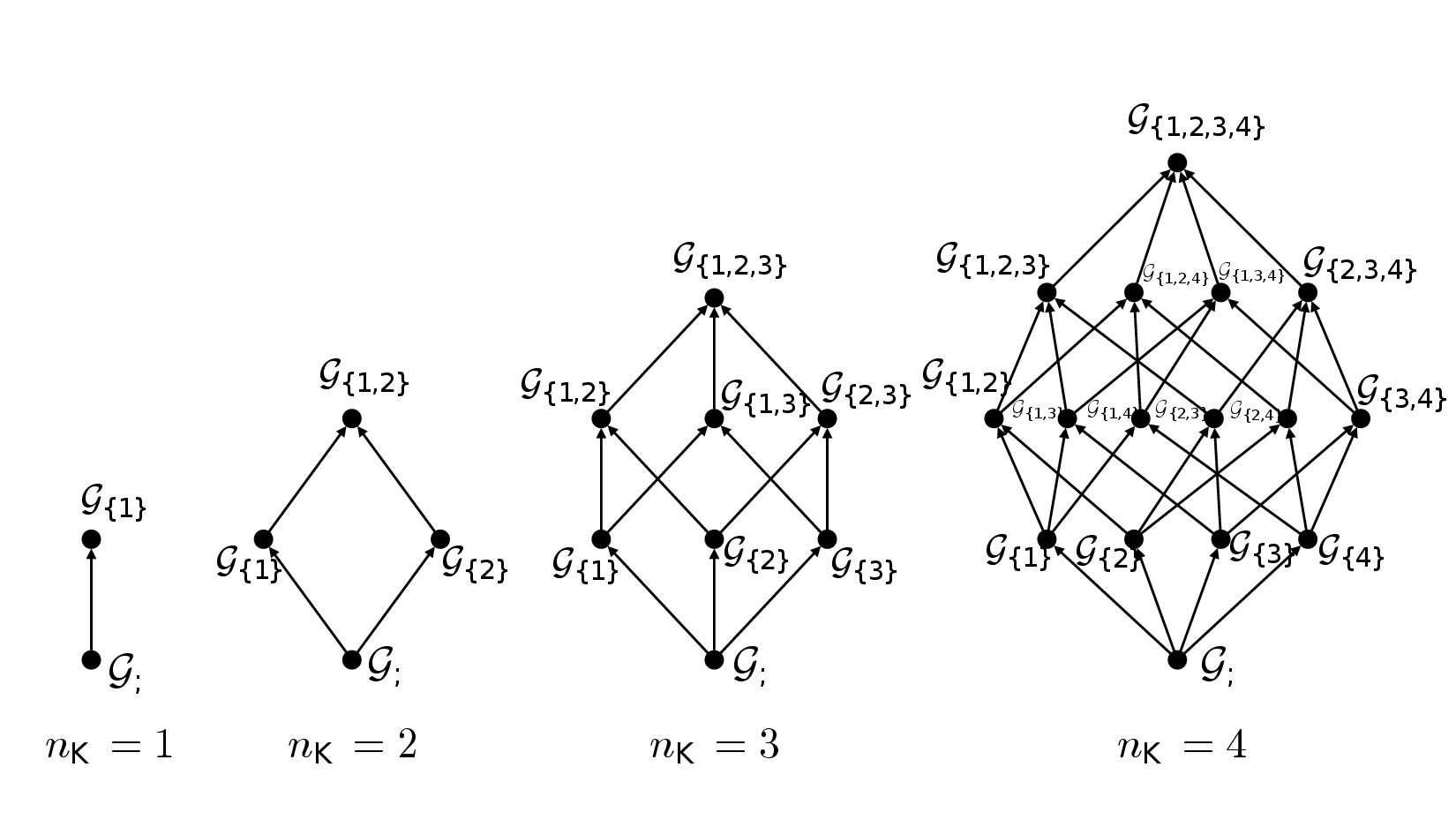}
  \caption{Augmented GCS layer structure for base graphs with up to $\nK = 4$ keys.}
  \label{fig:subgraphs}
\end{figure}

\section{Optimality of an Augmented GCS Shortest Path}
In this section, we formally prove that a shortest path in the augmented GCS corresponds to an optimal solution of the original problem \eqref{optprob}. Thus, the mixed-integer convex reformulation of \cite{marcucci2024shortest,marcucci2023motion} applied to the augmented GCS exactly solves \eqref{optprob}, up to the finite B\'ezier parameterization of the trajectory $q$. We will study the tightness of the convex relaxation in our numerical experiments. In the absence of a bound on the horizon $T$, an optimal solution is found if one exists. Otherwise, infeasibility is certified when the merged target node is disconnected from the start node $s$ in the augmented graph.

We now state the main optimality result. 

\begin{theorem}[Optimality of Augmented GCS Shortest Path]\label{thm:main}
A shortest path from start node $s$ to the merged target node $t$ in $\hat{\mathcal{G}}(\hat{\mathcal V}, \hat{\mathcal E})$ corresponds to an optimal solution of problem~\eqref{optprob}, up to a finite parameterization of the trajectory $q$ by B\'{e}zier curves.
\end{theorem}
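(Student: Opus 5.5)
We will prove Theorem~\ref{thm:main} in three steps, matching the contributions claimed in the introduction: \emph{soundness} (every $s$--$t$ path in $\hat{\mathcal G}$ with feasible vertex variables yields a trajectory feasible for \eqref{optprob}, in particular satisfying $\varphi_{\mathrm{base}}$), \emph{completeness} (every feasible trajectory of \eqref{optprob} within the B\'ezier class yields an $s$--$t$ path in $\hat{\mathcal G}$ with feasible vertex variables), and \emph{cost equivalence} (the two maps preserve the objective, or at least never increase it). Optimality then follows immediately. If $p^\star$ is a shortest path and $q^\star$ its trajectory, then for any feasible $q$ the completeness map gives a path $p$ with cost at most $J(q)$. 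Hence $J(q^\star) = \mathrm{cost}(p^\star) \le \mathrm{cost}(p) \le J(q)$.

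\textbf{Soundness.} We maintain an invariant along a path in $\hat{\mathcal G}$. If the path is currently in subgraph $\mathcal G_S$, then the set of keys whose regions have already been visited contains $S$. Moreover, every door vertex traversed so far lies in $\mathbf D(S')$ for the subgraph $\mathcal G_{S'}$ in which it was traversed. The invariant is proved by induction on the path edges. Intra-layer edges keep $S$ fixed. By the definition of $\mathcal E_S$, an edge incident to a door $D_i$ exists only if $D_i\in\mathbf D(S)$, that is, only if $K_i\in S$. The only edges that change $S$ are the directed zero-cost edges from the copy of $K_v$ in $\mathcal G_{S-\{v\}}$ to the copy in $\mathcal G_S$, and these are taken only while the trajectory sits in $K_v$. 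Consequently, whenever the trajectory enters $D_i$, it has already entered $K_i$, so $K_i\,\mathcal R\,\lnot D_i$ holds. The target-merging step gives $\mathcal F(\mathcal T)$. The convex vertex and edge constraints give $q(t)\in\mathcal C\cup\mathcal K\cup\mathcal D$, $\dot q\in\mathcal Q$, and the boundary conditions; here we use that the free-space cells from Algorithm~\ref{alg:partition} exactly partition $\mathcal X\setminus(\mathcal O\cup\mathcal D)$.

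\textbf{Completeness.} This is the main obstacle. Take a feasible B\'ezier trajectory $q$. We decompose $[0,T]$ into a finite sequence of intervals on which $q$ lies in a single cell $\mathcal C_v$, key region $K_i$, or door $D_i$. Finiteness holds because a polynomial piece crosses each hyperplane finitely often, and we refine the B\'ezier segmentation so that each segment lies in one set. The resulting sequence of regions is a walk in the base graph $\mathcal G$: consecutive regions share a point, and since they come from the arrangement, they are adjacent in $\mathcal E$ (Hamming distance one, or a key--cell intersection edge).

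We lift this walk to $\hat{\mathcal G}$. Start in $\mathcal G_\emptyset$. At the first visit to each new key $K_v$, take the directed edge into $\mathcal G_{S\cup\{v\}}$, and stay in the current layer on repeat visits. Every door edge used is present because the release constraint guarantees $K_i$ was visited before $D_i$. Every subset reached this way is reachable, so it appears among the $S_\ell$.

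The subtle points are as follows. First, a trajectory may pass only through lower-dimensional faces, or cross several cells at a point. We handle this by assigning boundary instants to either adjacent set, which is possible since the sets are closed. Second, a path must visit distinct vertices, whereas the lifted walk may revisit a vertex within one layer. We remove such cycles by shortcutting, which cannot increase a cost that is nonnegative on edges, and which stays within the same layer, so the invariant is preserved. Alternatively, we will argue that the GCS formulation of \cite{marcucci2023motion} allows the repeated-vertex copies that are standard there. Reaching the target from any layer is covered by the merged target node.

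\textbf{Cost equivalence.} The edge costs in \eqref{spp_gcs} are the GCS motion-planning costs of \cite{marcucci2023motion}: the length of each B\'ezier piece and $\beta$ times its duration. These sum exactly to $\alpha\int\|\dot q\|+\beta T$. The inter-layer edges have zero cost and no duration, and they only identify the same point in two copies of $K_v$. Therefore the soundness and completeness maps preserve cost, and the optimality argument above applies, up to the B\'ezier parameterization stated in the theorem.
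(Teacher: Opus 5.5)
Your proposal follows the paper's own proof: soundness by an induction over layers (a subgraph $\mathcal G_S$ is entered only through zero-cost key edges, and door edges are reinserted only for $\mathbf D(S)$), completeness by lifting the trajectory's region sequence with a transition at the first visit to each key, and cost equivalence from the zero-cost inter-layer edges; your one-sided chain $J(q^\star)=\mathrm{cost}(p^\star)\le\mathrm{cost}(p)\le J(q)$ is a slightly cleaner conclusion than the paper's appeal to a bijection. Your treatment of boundary instants and repeated vertices goes beyond what the paper addresses, but note that shortcutting a revisited vertex requires choosing a new vertex variable inside that convex set (justified by convexity and the triangle inequality), not just dropping nonnegative edge terms.
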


The proof proceeds via three lemmas establishing soundness, completeness, and cost equivalence. The proofs for these lemmas are given in Appendix~\ref{app:proofs}.


\begin{lemma}[Soundness]\label{lem:sound}
Every path in $\hat{\mathcal{G}}$ from start node $s$ to merged target node $t$ corresponds to a trajectory that satisfies the key-door precedence specification $\phi$.
\end{lemma}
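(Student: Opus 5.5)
The plan is to prove soundness by tracking a \emph{layer invariant} along the path. To any path $p = (s, v_1, \dots, v_m, t)$ in $\hat{\mathcal{G}}$, each non-merged vertex $v_j$ is a copy of a base vertex in a unique subgraph $\mathcal{G}_{S_j}$. We then show, by induction on $j$, the following invariant: $S_j$ is exactly the set of keys $K_i$ whose key-node copy has been visited at some index $\le j$. Together with $S_j \subseteq S_{j+1}$, this is the core of the argument. The only edges joining distinct subgraphs are the zero-cost directed edges added in Algorithm~\ref{Algo1}. Each of these goes from the copy of key $K_v$ in $\mathcal{G}_{S\setminus\{v\}}$ to its copy in $\mathcal{G}_S$. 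Hence a layer change occurs only while sitting at a key node $K_v$, and it adds exactly $v$. All other edges stay within one subgraph. The base case is immediate, since $s$ connects into $\mathcal{G}_\emptyset$. Edges are directed upward, so $S_j$ is nondecreasing, and keys are never ``lost.''

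Next I extract the trajectory. Every vertex copy carries the same convex set as its base vertex (a free-space cell $\mathcal{C}_v$, a key region $K_i$, or a door region $D_i$). The GCS edge constraints force consecutive Bézier segments to join continuously inside the intersection of consecutive sets. This is the standard GCS correspondence from \cite{marcucci2023motion}, which I will take as given. The piecewise trajectory $q$ therefore lies in $\mathcal{C}\cup\mathcal{K}\cup\mathcal{D}$. It also passes through the region of $v_j$ during the $j$-th time interval. In particular, $q$ enters $K_i$ whenever the path visits a $K_i$ copy.

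Then I verify each conjunct $K_i\,\mathcal{R}\,\lnot D_i$. Suppose $q(t)\in D_i$ at some time $t$, within the segment assigned to a door copy $v_j$ of $D_i$ in $\mathcal{G}_{S_j}$. Door nodes in $\mathcal{G}_{S}$ have incident edges only if $D_i\in\mathbf{D}(S)$, that is, $K_i\in S$. So $K_i\in S_j$, and by the invariant the path visited a $K_i$ copy at an earlier index. By the release semantics, $\lnot D_i$ was only required until (and including) that visit, so the conjunct holds. If $D_i$ is never entered, the conjunct holds trivially. Finally, $\mathcal{F}(\mathcal{T})$ holds because the path ends at the merged target $t$, whose set contains $q_T$.

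The main obstacle is the step from graph visits to the continuous semantics. Because door and free cells share facets, $q$ may touch the boundary of $D_i$ while the path is assigned to a free-space node. A key region may also overlap free cells, so $q$ can pass through $K_i$ without visiting the key node; this harms nothing, since that only means keys are collected ``late.'' For the first issue, I would argue with closed polytopes: a trajectory touching $\partial D_i$ only at points of the shared facet is interpreted as not entering the door's interior. Equivalently, I would adopt the convention (implicit in the partition) that $D_i$ denotes membership in the door vertex segment. This boundary convention is the one delicate point I expect to have to state explicitly.
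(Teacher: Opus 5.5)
Your argument is essentially the paper's. The paper inducts on the layer index to show that $\mathcal{G}_S$ can only be entered through the zero-cost edges at key nodes, so every key in $S$ was visited before any edge incident to a door in $\mathbf{D}(S)$ becomes usable. You run the same induction along path positions, and you additionally make explicit the trajectory extraction and the facet-touching convention, which the paper leaves implicit.

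One correction: your invariant should be the inclusion $S_j \subseteq \{K_i : \text{a copy of } K_i \text{ is visited at some index} \le j\}$, not equality. A path may pass through $K_i^{(S)}$ with $K_i \notin S$ without taking the inter-layer edge, for instance by entering and leaving via two free-space cells that $K_i$ intersects. Your door argument uses only the inclusion, so the proof goes through unchanged.
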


\begin{lemma}[Completeness]\label{lem:complete}
Every feasible trajectory $q^*$ of problem~\eqref{optprob} corresponds to some path in $\hat{\mathcal{G}}$.
\end{lemma}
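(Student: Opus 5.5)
The plan is to take an arbitrary feasible trajectory $q^*$ of~\eqref{optprob}, split its time interval at the moments it changes region, and map each piece to a node of $\hat{\mathcal G}$ whose layer is fixed by the keys collected so far. Because the cells $\{\mathcal C_v\}$ of Algorithm~\ref{alg:partition} together with the door regions exactly partition $\mathcal X\setminus\mathcal O$, and each key region $K_i$ is contained in the union of the cells it intersects, every time $t\in[0,T]$ lies in some free-space cell or door region. I will first assume that $q^*$ crosses region boundaries only finitely many times, so there are times $0=t_0<t_1<\dots<t_N=T$ with $q^*([t_{j-1},t_j])$ contained in a single closed cell or door $R_j$. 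The assumption is justified by perturbing $q^*$ within the open interiors, or by the piecewise-B\'ezier parameterization the theorem already allows. Consecutive regions $R_j,R_{j+1}$ share the boundary point $q^*(t_j)$. Since the cells are cells of a hyperplane arrangement, a continuous crossing can be refined so that it passes through cells whose markings differ in one entry. Each crossing is then an edge of $\mathcal E$ (cells are merged only when their union stays convex, so adjacency is preserved).

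Next, I assign layers. Let $S(t)=\{K_i : \exists\,\tau\le t,\ q^*(\tau)\in K_i\}$ be the set of keys collected by time $t$; this set is monotone in $t$. Refining the partition further, I insert a key node $K_i$ at the first time $q^*$ enters $K_i$. That time lies in some cell $\mathcal C_v$ with $\mathcal C_v\cap K_i\neq\emptyset$, so the edge between $\mathcal C_v$ and $K_i$ exists. At each first visit the path goes from the copy of $\mathcal C_v$ in $\mathcal G_{S}$ to $K_i$ in $\mathcal G_S$, then along the zero-cost directed edge to $K_i$ in $\mathcal G_{S\cup\{K_i\}}$, and back to $\mathcal C_v$ there. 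Segment $j$ is mapped to the copy of $R_j$ in $\mathcal G_{S(t_j^-)}$. Every set $S(t)$ is reachable, because the prefix of the path constructed so far witnesses it, so $\mathcal G_{S(t)}$ is actually created in line~3 of Algorithm~\ref{Algo1}.

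The main step is checking that every edge used lies in $\mathcal E_{S}$ for the current $S$. Edges among free-space and key nodes are in $\mathcal E_\emptyset\subseteq\mathcal E_S$. For an edge touching door $D_i$, the constraint $q^*\models\phi$ gives $K_i\,\mathcal R\,\lnot D_i$ (or $\lnot D_i\,\mathcal U\,K_i$), so $q^*$ cannot be in $D_i$ before visiting $K_i$. Hence $K_i\in S$, $D_i\in\mathbf D(S)$, and the edge was reinserted. Boundary cases such as touching $D_i$ at the same instant $K_i$ is entered need a convention. I will use the closed-set semantics of $\mathcal R$, under which the release happens in the same instant, and place the key node before the door node in the ordering.

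Finally, the walk ends in the copy of the target cell $t$, which is the merged target node. Two repairs turn the walk into a path. Repeated visits to a node within one layer form cycles that can be shortcut. Across layers nodes never repeat, since $S$ only grows and copies of the target are merged. Shortcutting cycles changes the correspondence, so I will state the lemma as requiring a path whose vertex sets contain the traversed points. Alternatively, I will note that the GCS formulation allows walks to be represented with repeated copies, which is the cleanest route. I expect the discretization and boundary-touching issues, rather than the layer bookkeeping, to be the main obstacle.
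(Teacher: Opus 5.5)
Your proposal is correct and takes the same route as the paper's proof. Both trace $q^*$ through the subgraphs $\mathcal{G}_{S_j}$ indexed by the keys collected so far, switch layers on the zero-cost edges at first key entries, use $K_i\,\mathcal{R}\,\lnot D_i$ to show every door edge used lies in $\mathcal{E}_{S_j}$, and end at the merged target. Your extra care about finitely many crossings, reachability of each $S_j$ in line~3 of Algorithm~2, and walks versus paths goes beyond what the paper checks.

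Two small corrections to the walk-to-path repair. First, the merged target $t$ is exactly the node that \emph{can} repeat across layers: this happens if $q^*$ passes through cell $t$ before finishing, and the fix is to truncate at the first arrival there. Second, shortcutting gives a path whose realization costs no more than $q^*$, by convexity of the repeated cells; it does not give one whose vertex sets contain every traversed point.
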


\begin{lemma}[Cost Equivalence]\label{lem:cost}
The cost of a path $\hat{p}$ in $\hat{\mathcal{G}}$ equals the cost of the corresponding trajectory in problem~\eqref{optprob}.
\end{lemma}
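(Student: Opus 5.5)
The statement to prove is the last one displayed, Lemma~\ref{lem:cost} (Cost Equivalence). I would define the correspondence between augmented paths and trajectories concretely, using the same map as in Lemmas~\ref{lem:sound} and~\ref{lem:complete}. A path $\hat p$ in $\hat{\mathcal G}$ has a sequence of vertices, each a copy of some base vertex $v\in\mathcal V$ in some subgraph $\mathcal G_S$, together with continuous variables: the B\'ezier control points $x_{\hat v}$ in the convex set $\mathcal X_{\hat v}$ of that copy, and the durations. The associated trajectory $q$ is obtained by the \emph{projection} that forgets the layer index $S$. We keep every segment attached to a non-inter-layer vertex, in order, and concatenate the segments into a piecewise B\'ezier curve. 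The edge constraints $\mathcal X_e$ on intra-layer edges enforce position and derivative continuity at the junctions, exactly as in the standard GCS motion-planning formulation of \cite{marcucci2023motion}.

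The key steps, in order, are as follows.

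\textbf{Step 1: classify the edges of $\hat p$.} Each edge is one of two kinds:
\begin{itemize}
\item an intra-layer edge $(\hat u,\hat v)$ in some $\mathcal E_S$, which is a copy of a base edge $(u,v)\in\mathcal E$;
\item an inter-layer directed edge from a key copy in $\mathcal G_{S\setminus\{i\}}$ to the same key copy in $\mathcal G_S$.
\end{itemize}

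\textbf{Step 2: intra-layer edges carry the base cost.} By construction, every copy of a vertex and edge inherits the base set and cost from $\mathcal G$. So each intra-layer edge contributes exactly the base contribution $\alpha\int\|\dot q\|_2\,dt+\beta\,\Delta t$ of the corresponding segment of $q$.

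\textbf{Step 3: inter-layer edges cost nothing.} The inter-layer edges have zero cost. They must also contribute no trajectory time or length. For this I would take the two endpoints to be the same key region, and make the edge constraint identify the continuous variables, e.g.\ $x_{\hat u}=x_{\hat v}$ at the shared endpoint. Then the key visit is represented by a single segment in $q$, not two.

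\textbf{Step 4: the merged target.} Merging the target copies only identifies vertices with identical sets and costs. It therefore changes no cost term.

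Summing the contributions over the edges of $\hat p$, with the key-transition terms vanishing, gives $\mathrm{cost}(\hat p)=\alpha\int_0^T\|\dot q\|_2dt+\beta T$. The total time $T$ is the sum of the segment durations, and the length integral is additive over the segments.

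The main obstacle is Step 3: the inter-layer transitions must neither double-count nor drop a segment. The key node appears in both layers, so it could carry two copies of a trajectory segment. I would handle this by fixing the convention that a key copy reached through an inter-layer edge has its segment merged with its predecessor copy. One way to realize this is to let the inter-layer edge equate the two copies' control points and count the vertex cost once. Another is to let the key vertex copies carry only a point, with zero duration, so that duplication costs nothing.

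With that convention fixed, the projection is a bijection on segments, which is compatible with the correspondence used in Lemmas~\ref{lem:sound} and~\ref{lem:complete}. Theorem~\ref{thm:main} then follows by combining the three lemmas: the optimal trajectory and the shortest path have equal cost, over the common finite B\'ezier parameterization.
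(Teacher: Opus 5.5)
Your proposal is correct and follows essentially the same argument as the paper. Both split the path's edges into zero-cost inter-layer key transitions and intra-layer edges whose costs are inherited from the base GCS, where they encode $\alpha\int_0^T\|\dot q(t)\|_2\,dt+\beta T$, and then sum.

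Your additions are refinements rather than a different route. The paper states the zero-displacement (continuity) requirement at inter-layer edges only in its remark on layer transitions, not in the proof itself, and it does not discuss the merged target. Your segment-merging convention is not strictly needed: two consecutive segments in the same key region simply concatenate, and their costs add.
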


\begin{proof}[Proof of Theorem~\ref{thm:main}]
By Lemmas~\ref{lem:sound} and~\ref{lem:complete} together establish a bijection between paths in $\hat{\mathcal{G}}$ and feasible trajectories of \eqref{optprob}. By Lemma~\ref{lem:cost}, both objectives coincide over this bijection. Therefore:
\[
\min_{\hat{p} \in \hat{\mathcal{P}}}\  \text{cost}(\hat{p}) = \min_{q \text{ feasible for } \eqref{optprob}} \left( \alpha \int_0^T \|\dot{q}\|^2\,dt + \beta T \right),
\]
where $\hat{\mathcal{P}}$ denotes the set of all paths from $s$ to $t$ in $\hat{\mathcal{G}}$. A shortest path in $\hat{\mathcal{G}}$ thus recovers an optimal solution to \eqref{optprob}, up to the finite B\'{e}zier curve parameterization of $q$.
\end{proof}

\begin{remark}[Trajectory Continuity at Layer Transitions]
When $\hat{p}$ crosses a directed inter-layer edge at a key node, the corresponding trajectory transitions between two copies of the same convex key set. Since both copies share identical geometry and the transition incurs zero positional displacement, the trajectory $q$ remains $C^k$-continuous for any finite order $k$ supported by the B\'{e}zier parameterization. This continuity condition must be explicitly included in the B\'{e}zier curve boundary constraints.

\end{remark}

\begin{remark}[Repeated Key Visits]
If a trajectory visits the same key region $K_i$ multiple times, subsequent visits do not alter the accumulated key set $S_j$. The path $\hat{p}$ remains in the current subgraph without triggering additional layer transitions, which is consistent with the construction. A unique subgraph assignment is obtained by using the first visit to each key as the layer-transition trigger.
\end{remark}
 
\begin{remark}[Finite Parameterization]
It is worth emphasizing that optimality holds up to the B\'{e}zier curve parameterization of $q$. Strictly, Theorem~\ref{thm:main} recovers a finitely parameterized optimal trajectory, not a global optimum over all smooth curves. The gap introduced by finite-order B\'{e}zier parameterization is a separate approximation not addressed by the augmented GCS construction. However, the gap is small in practice for moderate B\'{e}zier order.
\end{remark}

\section{Correspondence with the Bellman--Held--Karp Algorithm}
\label{sec:theory}

The Bellman--Held--Karp (BHK) algorithm~\cite{bellman1962dynamic,held1962dynamic} solves the Traveling Salesman Problem (TSP), finding a minimum cost tour amongst $n$ cities in $O(2^nn^2)$ time via dynamic programming. Let $g(S, v)$ denote the minimum cost of a path that starts at a fixed origin $0$, visits exactly the cities in $S \subseteq \{1, \ldots, n\}$, and ends at city $v \in S$. The recurrence is:
\begin{align}
g(\{v\}, v) &= c_{0v}, \label{eq:bhk_base}\\
g(S, v) &= \min_{u \in S \setminus \{v\}} \left[ g(S \setminus \{v\}, u) + c_{uv} \right], \label{eq:bhk_rec}
\end{align}
and the optimal tour cost is $\min_{v \neq 0}\left[g(\{1,\ldots,n\}, v) + c_{v0}\right]$.

To demonstrate the cleanest correspondence between BHK and the augmented GCS structure, we consider the until-based specification for mandatory key collection in \eqref{eq:phi_until}, and further assume there are no doors, so that the door constraints are trivially satisfied. The specification then becomes 
\begin{equation}
  \label{eq:TSPspec}
  \varphi_{\mathcal{U}} = \bigwedge_{i=1}^{\nK}
    \bigl(\top \, \mathcal{U} \, K_i\bigr) \land \mathcal{F}(\mathcal{T}) = \bigwedge_{i=1}^{\nK}
    \bigl( \mathcal{F}(K_i) \bigr) \land \mathcal{F}(\mathcal{T})
\end{equation}
requiring all keys to eventually be collected but not specifying a collection order.

\subsection{Structural Correspondence and Correspondence Theorem}

\begin{definition}[BHK--GCS Dictionary]
Table~\ref{tab:bhk_dict} summarizes the correspondence between BHK and the augmented GCS:

\begin{center}
\begin{table}[h]
  \caption{BHK--GCS Correspondence Dictionary}
  \label{tab:bhk_dict}
  \centering
  \begin{tabular}{ll}
    \toprule
    \textbf{BHK (TSP)} & \textbf{Augmented GCS} \\
    \midrule
    City $i$ & Key node $K_i \in \Vcal_\Kcal$ \\
    City subset $S$ & Key subset $S \in 2^\Kcal$ \\
    Subproblem state $(S,v)$ & Subgraph $\Gcal_S$, active vertex $v$ \\
    Transition cost $c_{uv}$ & GCS edge cost $\ell_e(x_u, x_v)$ \\
    Discrete pairwise distances & Shortest path through continuous geometry \\
    DP table entry $g(S,v)$ & Shortest path cost to $K_v^{(S)}$ in $\Ghat$ \\
    \bottomrule
  \end{tabular}
\end{table}
\end{center}
\end{definition}

\begin{theorem}[BHK--Augmented GCS Correspondence]\label{thm:bhk}
Let the environment contain $\nK$ keys and no doors, with all keys reachable from the start node. Then:
\begin{enumerate}
    \item The layer structure of $\hat{\mathcal{G}}$ implements exactly the BHK subset enumeration over $2^{\nK}$ key subsets.
    \item The inter-layer directed edges of $\hat{\mathcal{G}}$ implement exactly the BHK recurrence relation~\eqref{eq:bhk_rec}.
    \item A shortest path in $\hat{\mathcal{G}}$ solves the same optimal subset-sequencing problem as BHK, generalized from discrete transition costs to shortest paths through continuous geometry.
\end{enumerate}
\end{theorem}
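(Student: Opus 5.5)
The plan is to prove the three parts in order. Each part is a direct reading of Algorithm~\ref{Algo1} in the special case of no doors, combined with Lemmas~\ref{lem:sound}--\ref{lem:cost}.

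\textbf{Part 1: layers are the BHK subsets.} With no doors, $\mathbf{D}(S)=\emptyset$ for every $S$. Hence every subgraph has the same edge set $\mathcal{E}_S=\mathcal{E}_\emptyset=\mathcal{E}$. Since all keys are reachable from $s$ and no door ever blocks motion, a simple induction on $\ell$ shows that the reachability search in line 3 returns every $\ell$-element subset. So $S_\ell=\binom{\mathcal{K}}{\ell}$, and the subgraphs are indexed bijectively by $2^{\mathcal{K}}$. Layer $\ell$ then holds exactly the subsets of cardinality $\ell$, which is the stage-$\ell$ table BHK fills. The only edges leaving a copy $\mathcal{G}_{S\setminus\{v\}}$ for a subgraph $\mathcal{G}_S$ are the directed edges at key $K_v$. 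The layer partial order is therefore the Hasse diagram of the subset lattice, which is the dependency order of the DP table.

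\textbf{Part 2: inter-layer edges implement the recurrence.} Let $\hat g(S,v)$ be the shortest-path cost in $\hat{\mathcal G}$ from $s$ to the copy $K_v^{(S)}$, reached by first entering $\mathcal G_S$ through the directed edge at $K_v$. Any such path has the form: a path to $K_u^{(S\setminus\{v\})}$ for some last-collected key $u\in S\setminus\{v\}$, then a path inside $\mathcal{G}_{S\setminus\{v\}}$ from $K_u$ to $K_v$, then the zero-cost directed edge. Define $c_{uv}$ as the GCS shortest-path cost from $K_u$ to $K_v$ within a single layer. Optimal substructure (Bellman's principle on a DAG of layers) then gives the recurrence
\[
\hat g(S,v)=\min_{u\in S\setminus\{v\}}\bigl[\hat g(S\setminus\{v\},u)+c_{uv}\bigr],
\]
together with the base case $\hat g(\{v\},v)=c_{sv}$. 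These match \eqref{eq:bhk_base}--\eqref{eq:bhk_rec}.

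The main obstacle is the continuous coupling. In a GCS, the cost of the segment from $K_u$ to $K_v$ depends on the point $x\in K_u$ chosen where the previous segment ended. It is therefore not a fixed scalar $c_{uv}$. I would handle this by lifting the DP state to $(S,v,x)$ with $x\in K_v$, defining $c_{uv}(x,y)$ as the in-layer geometric distance, and taking the minimum jointly over $u$ and the continuous point. The zero-cost inter-layer edges with the continuity constraint (the Remark on trajectory continuity) guarantee that the point is carried across layers unchanged. When key regions are points, this collapses to scalar BHK. In general it is a BHK recurrence with a continuous state, i.e.\ a TSP-with-neighborhoods-type generalization. I would state the correspondence at this level.

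\textbf{Part 3: shortest path solves the sequencing problem.} The merged target $t$ has copies in every layer. However, under \eqref{eq:TSPspec} only arrivals from $\mathcal{G}_{\mathcal{K}}$ satisfy $\bigwedge_i\mathcal F(K_i)$, so the target merge is restricted to the top layer; I would note that this is the mandatory-collection variant of the construction. The optimal cost is then $\min_{v}[\hat g(\mathcal{K},v)+c_{vt}]$, which is the Hamiltonian-path analogue of BHK's final step $\min_v[g(\cdot,v)+c_{v0}]$. The TSP tour is recovered by setting $t=s$. Finally, Lemmas~\ref{lem:sound}--\ref{lem:cost}, i.e.\ Theorem~\ref{thm:main}, ensure this path cost equals the optimal trajectory cost. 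Replacing the scalar $c_{uv}$ by in-layer GCS shortest paths is exactly the stated generalization.
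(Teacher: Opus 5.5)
Your proposal follows the paper's three-part skeleton: layer-$\ell$ subgraphs $\leftrightarrow$ BHK subsets of size $\ell$; directed key edges $\leftrightarrow$ the recurrence, with intra-subgraph distances replacing $c_{uv}$; then collapse to a shortest Hamiltonian path and close the tour. It is sharper than the paper in two places.

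In Part 2, the paper writes \eqref{eq:gcs_rec} as an equality with a scalar distance $d_{\hat{\mathcal G}}(K_u^{(S\setminus\{v\})},K_v^{(S\setminus\{v\})})$. It only remarks afterwards that the problems coincide for singleton keys. Your point that each leg's cost depends on where the previous leg ended inside $K_u$ is correct. For extended key regions the scalar recurrence is in general only a lower bound: take a long thin $K_u$ whose point nearest $s$ is far from its point nearest $K_v$. Lifting the state to $(S,v,x)$ is what makes Bellman's principle exact, at the price of a continuous-state DP. With higher-order B\'ezier segments and derivative continuity, the carried state must also include the boundary derivatives, not just the point.

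In Part 3, you make explicit what the paper leaves implicit. If the target is merged across all layers as in Algorithm~\ref{Algo1} and there are no doors, the shortest path would skip every key. The statement therefore needs the top-layer target of Construction~\ref{con:until}. Two small fixes follow from this. For cost equivalence, cite Theorem~\ref{thm:until} or~\ref{thm:tsp} rather than Theorem~\ref{thm:main}. Also write your terminal step $\min_v[\hat g(\mathcal K,v)+c_{vt}]$ in the lifted form, to match your Part 2.

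Finally, your closure $t=s$, with $s$ outside all keys, reproduces BHK's fixed-origin tour directly. The paper's choice $q_0\in\mathcal K_1$ instead folds the origin into a key and halves the subgraph count.
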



\begin{proof}[Proof of 1: Layer Structure = BHK Subset Enumeration]
In BHK, the dynamic programming table is indexed by all subsets $S \subseteq \{1,\ldots,n\}$, processed in order of increasing cardinality $|S| = 0, 1, 2, \ldots, n$.

In the augmented GCS, layer $\ell$ consists of subgraphs $\{\mathcal{G}_S : S \in \mathcal{S}_\ell\}$, one per reachable $\ell$-element key subset. The total number of subgraphs satisfies:
\[
\sum_{\ell=0}^{\nK} |\mathcal{S}_\ell| \leq \sum_{\ell=0}^{\nK} \binom{\nK}{\ell} = 2^{\nK},
\]
with equality when all keys are reachable from the start node. The subgraphs at layer $\ell$ are in bijection with the BHK subproblems of cardinality $\ell$, and the construction processes them in order of increasing $\ell$. The enumeration structures are identical.
\end{proof}

\begin{proof}[Proof of 2: Directed Inter-Layer Edges = BHK Recurrence]
The BHK recurrence~\eqref{eq:bhk_rec} encodes the optimal cost to reach state $(S, v)$ is achieved by choosing the best predecessor $u$, taking the optimal path to state $(S \setminus \{v\}, u)$, then paying transition cost $c_{uv}$.

In Algorithm~2 (lines 6--8), for each $S \in \mathcal{S}_\ell$ and each $K_v \in S$, a directed edge is added from $K_v^{(S \setminus \{v\})}$ in $\mathcal{G}_{S \setminus \{v\}}$ to $K_v^{(S)}$ in $\mathcal{G}_S$. The augmented GCS shortest path cost to $K_v^{(S)}$ satisfies:
\begin{multline}\label{eq:gcs_rec}
\hat{g}(S, K_v) = \min_{u \in S \setminus \{v\}} \Big[
\hat{g}(S \setminus \{v\}, K_u) \\
+ d_{\hat{\mathcal{G}}}(K_u^{(S \setminus \{v\})}, K_v^{(S \setminus \{v\})})
\Big].
\end{multline}
where $\hat{g}(S, K_v)$ is the shortest path cost to $K_v^{(S)}$ in $\hat{\mathcal{G}}$ and $d_{\hat{\mathcal{G}}}(\cdot, \cdot)$ denotes intra-subgraph shortest path distance through the convex free space. Equation~\eqref{eq:gcs_rec} is the BHK recurrence~\eqref{eq:bhk_rec} with discrete transition costs $c_{uv}$ replaced by continuous geodesic distances $d_{\hat{\mathcal{G}}}(K_u^{(S\setminus\{v\})}, K_v^{(S\setminus\{v\})})$ through the obstacle-laden convex geometry.
\end{proof}

\begin{proof}[Proof of Part 3: Optimal Path = Generalized BHK Solution]
Consider the restriction of the augmented GCS shortest path problem to key-node-to-key-node transitions. Collapse each subgraph $\mathcal{G}_S$ to a complete graph on its key nodes, with edge weights given by intra-subgraph shortest path distances. The resulting combinatorial problem is:
\[
\min_{\text{sequence } (K_{i_1}, \ldots, K_{i_m})} \sum_{j=0}^{m} d_{\mathcal{G}}(v_j, K_{i_{j+1}}),
\]
where $v_0 = s$, $v_{m+1} = t$, and $d_\mathcal{G}$ denotes free-space shortest path distance. This is a Shortest Hamiltonian Path problem on a subset of key sets with geodesic distances through convex obstacle-laden geometry replacing Euclidean city-to-city distances. A pure TSP tour can be obtained by setting the boundary constraints in \eqref{optprob} with $q(0) = q(T)=q_0\in\Kcal_1$.

The augmented GCS shortest path problem therefore subsumes BHK: it solves the combinatorial key-sequencing problem (as BHK does for city ordering) and simultaneously optimizes the continuous trajectory within each sequencing decision. The two problems coincide precisely when the intra-subgraph geodesics are replaced by precomputed pairwise distances for singleton key regions, recovering the standard BHK setting.
\end{proof}

\subsection{Key Generalizations Beyond BHK}

The augmented GCS framework strictly generalizes BHK along three axes.

\paragraph{Specification}
Classical TSP requires visiting all cities, which is analogous to the until-based specification with no doors. The augmented GCS framework allows much richer specifications via alternative temporal logic fragments, several of which will be discussed further in the next section. Including door regions incorporates precedence structure, and the release operator $K_i \,\mathcal{R}\, \neg D_i$ makes key collection optional unless geometrically necessary to reach the goal. This corresponds to a \emph{generalized TSP on a subset}: the algorithm self-selects which keys to collect based on whether doing so shortens the path to the terminal condition. 

\paragraph{Geometry}
    BHK operates on a complete graph with precomputed pairwise distances. The augmented GCS operates directly on the continuous free space, computing geodesic distances through convex obstacle-free regions as part of the same optimization. Crucially, when door regions are present, the effective ``distance'' between two keys depends on which doors have been unlocked — a state-dependent geometric coupling that BHK, which takes fixed distances as input, cannot express.

\paragraph{Continuous-Combinatorial Coupling}
In BHK, the combinatorial sequencing problem and the metric distances are decoupled; distances are fixed inputs. In the augmented GCS, the two are solved jointly: the convex optimization over B\'{e}zier curve control points within each subgraph is coupled to the shortest path search across subgraphs through shared vertex variables at subgraph boundaries. This coupling enables certifiably near-optimal solutions via convex relaxation, a guarantee that BHK does not provide in non-metric settings.

\subsection{Complexity Alignment}
The augmented GCS operates within the same combinatorial complexity class as BHK. The worst-case number of subgraphs in the augmented GCS is $2^{\nK}$, matching the $O(2^n)$ BHK table size. The additional $O(n^2)$ factor in BHK's $O(2^n n^2)$ complexity corresponds to the $\nK$ inter-layer edge additions per subgraph and the intra-subgraph shortest path solves, both of which are polynomial in $\nK$ for fixed graph structure. By incorporating continuous geometry, the shortest path problem on the augmented GCS, via exact mixed-integer convex reformulation, inherits the same additional computational complexity challenges (NP-hardness) as the base GCS framework \cite{marcucci2024shortest}. However, in instances where the optimality gap is small between convex relaxation and rounding schemes, the continuous geometry optimization is carried out at no additional exponential cost.

\section{Specification Variations}
\label{sec:variations}

We develop eight variations of the base key-door precedence
specification, each obtained by modifying the STL formula and the
corresponding augmented GCS construction.  For each variation we
state the specification, describe the modification to
Algorithm~\ref{Algo1}, summarize the correctness result, and
note the worst-case subgraph count.  Full proofs of all correctness theorems appear in Appendix ~\ref{app:variation_proofs}.

\subsection{Variation 1: Mandatory Key Collection (Until Operator)}
\label{subsec:until}

\subsubsection*{Specification}
Replace the release operator with the until operator:
\begin{equation}
  \label{eq:phi_U}
  \varphi_{\mathcal{U}} = \bigwedge_{i=1}^{\nK}
    (\lnot D_i \, \mathcal{U} \, K_i) \land \mathcal{F}(\mathcal T).
\end{equation}
This is strictly stronger than~\eqref{eq:phi_base} when keys are not
geometrically required: it both prevents passage through $D_i$ before
$K_i$ is collected and guarantees $K_i$ is eventually collected.

\subsubsection*{Augmented graph modification}

\begin{construction}
  \label{con:until}
  The layer structure and inter-layer directed edges are identical to
  Algorithm~\ref{Algo1}.  The sole modification is to
  line~11: instead of merging all copies of $t$ across all layers,
  instantiate the target node only in the top-layer subgraph
  $\Gcal_\Kcal$:
  \[
    t^* = t^{(\Kcal)} \in \Gcal_\Kcal.
  \]
  All other copies $t^{(S)}$ for $S \subsetneq \Kcal$ are removed
  from $\Ghat$.
\end{construction}

\begin{theorem}
  \label{thm:until}
  A shortest path from start node $s$ to target node $t^*$ in the augmented GCS of
  Construction~\ref{con:until} corresponds to an optimal solution
  of~\eqref{optprob} with specification $\varphi_{\mathcal{U}}$.
\end{theorem}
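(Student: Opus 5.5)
The plan is to mirror the proof of Theorem~\ref{thm:main}. We re-establish soundness, completeness, and cost equivalence for the graph of Construction~\ref{con:until} against the specification $\varphi_{\mathcal{U}}$. The only structural change from Algorithm~\ref{Algo1} is where the target lives, so we reuse Lemmas~\ref{lem:sound}--\ref{lem:cost} wherever possible and argue only about the extra conjunct: every $K_i$ must eventually be visited.

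\textbf{Step 1: layer invariant.} First we record an invariant of the layered structure that was already used in Lemmas~\ref{lem:sound} and~\ref{lem:complete}. Along any path $\hat p$ from $s$, the current subgraph index $S$ changes only through a directed inter-layer edge from $K_v^{(S\setminus\{v\})}$ to $K_v^{(S)}$. Each such edge sits at a copy of the key region $K_v$, so the trajectory is then inside $K_v$. By induction along $\hat p$:
\begin{itemize}
\item if $\hat p$ is in $\Gcal_S$ at some point, every key in $S$ has been visited by the corresponding trajectory up to that time;
\item door vertices $D_i$ are available only in subgraphs with $K_i\in S$, by the definition of $\mathcal{E}_S$.
\end{itemize}

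\textbf{Step 2: soundness.} Take any $s$--$t^*$ path. Since $t^*=t^{(\Kcal)}$ is the only target copy, the path ends in $\Gcal_\Kcal$. By the invariant, every $K_i$ has been visited before $T$. Lemma~\ref{lem:sound} already gives that $D_i$ is not entered before the first visit to $K_i$. Together these give $\lnot D_i\,\mathcal{U}\,K_i$ for every $i$, and $\mathcal{F}(\mathcal T)$ holds by the terminal condition. So the trajectory satisfies $\varphi_{\mathcal{U}}$.

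\textbf{Step 3: completeness.} Take any trajectory $q$ feasible for \eqref{optprob} with $\varphi_{\mathcal{U}}$.
\begin{enumerate}
\item Since $\varphi_{\mathcal{U}}$ implies $\varphi_{\mathrm{base}}$ (until implies release), Lemma~\ref{lem:complete} gives a path $\hat p$ in the graph of Algorithm~\ref{Algo1}. We build it using the first-visit rule from the remark on repeated key visits, so that the subgraph index at time $t$ equals the set of keys visited by $q$ on $[0,t]$.
\item Because $q\models \mathcal{F}(K_i)$ for all $i$, that set is $\Kcal$ at time $T$. Hence $\hat p$ terminates at the copy $t^{(\Kcal)}$.
\item That copy is exactly $t^*$, which survives in Construction~\ref{con:until}. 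The removed copies $t^{(S)}$, $S\subsetneq\Kcal$, are never used as endpoints.
\item Intermediate passes through cell $t$ in lower layers also survive: only the merged terminal role is changed, and non-terminal copies are retained as ordinary free-space vertices.
\end{enumerate}

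\textbf{Step 4: cost equivalence and conclusion.} Edge sets and costs are unchanged apart from the deleted target copies, so Lemma~\ref{lem:cost} applies verbatim. The minimization argument of Theorem~\ref{thm:main} then yields optimality up to B\'ezier parameterization. If $\Kcal$ is not reachable, no $s$--$t^*$ path exists, which certifies infeasibility.

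\textbf{Main obstacle.} The hardest step is the reading of line~11 in item 4 of Step 3. If ``removed'' means deleting vertex $t^{(S)}$ entirely, then trajectories that must pass through the target cell before all keys are collected would be lost. I would resolve this by interpreting the construction as removing only the \emph{terminal} copies and keeping $t$ as an ordinary free-space vertex in every layer. Alternatively, one can add an assumption that the terminal set is isolated as its own cell. Either way, the precise correspondence between graph layers and the visited-key set needs a careful statement, since Step 2 depends on this correspondence.
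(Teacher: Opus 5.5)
Your proposal is correct and takes the same route as the paper. Soundness comes from the fact that any $s$--$t^*$ path must end in $\mathcal{G}_{\mathcal{K}}$, which can only be entered through the zero-cost key edges, combined with base soundness. Completeness comes from $\varphi_{\mathcal{U}}$ forcing every key to be visited, so the base-completeness path reaches layer $n_{\mathcal{K}}$ and ends at $t^{(\mathcal{K})}=t^*$, and cost equivalence is inherited from Lemma~\ref{lem:cost}.

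Your caveat about deleting the lower-layer copies $t^{(S)}$ is a real subtlety that the paper's completeness argument passes over: a trajectory that must cross the target cell before collecting all keys would be lost. Reading the construction as removing only the terminal role of those copies, or assuming the target cell is isolated, is exactly what that step needs.
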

\begin{proof} See Appendix~\ref{app:variation_proofs}. \end{proof}

\textit{Worst-case subgraph count:} $O(2^{\nK})$, unchanged from the base construction.

\begin{remark}
  When the geometry forces all keys to be collected,
  $\varphi_{\mathrm{base}}$ and $\varphi_{\mathcal{U}}$ are
  equivalent and both augmented graphs yield the same optimal
  solution.
\end{remark}

\subsection{Variation 2: Pure Wayset Collection (TSP Variation)}
\label{subsec:tsp}

As described in the previous section, this variation removes all door regions and requires the agent to
visit every key region (wayset) before reaching the goal.  It is the
cleanest instantiation of the BHK correspondence: the augmented GCS
implements BHK exactly, without the state-dependent geometry effect
of door unlocking.

\subsubsection*{Specification}
\begin{equation}
  \label{eq:phi_tsp}
  \varphi_{\mathrm{TSP}} = \bigwedge_{i=1}^{\nK}
    \mathcal{F}(K_i) \land \mathcal{F}(\mathcal T).
\end{equation}
Equivalently, $\varphi_{\mathrm{TSP}} = \bigwedge_i (\top\,
\mathcal{U}\, K_i) \land \mathcal{F}(\mathcal T)$.  No door regions are present in the environment.

\subsubsection*{Augmented graph modification}

\begin{construction}
  \label{con:tsp}
  Remove all door nodes and edges from $\Gcal$; the remaining graph
  has $\mathcal V = \{\mathcal V_{\mathcal C}, \mathcal V_{\mathcal K}\}$.  Apply Construction~\ref{con:until}
  (target restricted to the top layer), with the door-reinsertion
  step (Algorithm~\ref{Algo1}, line~5) vacuous since no doors
  exist.  The result is $\Gcal_\Kcal$ at the top layer being
  identical to the original graph, and all inter-layer edges being
  zero-cost transitions at key nodes.
\end{construction}

\begin{theorem}
  \label{thm:tsp}
  A shortest path from $s$ to $t^*$ in Construction~\ref{con:tsp}
  corresponds to an optimal solution of the continuous-geometry
  shortest Hamiltonian path problem: visiting all $\nK$ waysets
  in obstacle-free convex space with minimum total cost.
\end{theorem}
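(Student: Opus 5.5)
The plan is to obtain Theorem~\ref{thm:tsp} as a specialization of Theorem~\ref{thm:until}, and so ultimately of the soundness, completeness and cost-equivalence argument behind Theorem~\ref{thm:main}. There are three steps: check that Construction~\ref{con:tsp} is a legitimate instance of Construction~\ref{con:until}; identify what specification~\eqref{eq:phi_tsp} means for problem~\eqref{optprob} when no doors are present; and invoke the earlier optimality result.

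\textbf{Step 1: reduction to Construction~\ref{con:until}.} With $\mathcal{D}=\emptyset$, every door term in $\varphi_{\mathcal{U}}$ is trivially true, since $\lnot D_i$ holds identically. Hence $\lnot D_i\,\mathcal{U}\,K_i$ reduces to $\top\,\mathcal{U}\,K_i = \mathcal{F}(K_i)$, and $\varphi_{\mathcal{U}}$ coincides with $\varphi_{\mathrm{TSP}}$, as in~\eqref{eq:TSPspec}. On the graph side, $\mathbf{D}(S)=\emptyset$ for every $S$, so $\mathcal{E}_S=\mathcal{E}_\emptyset=\mathcal{E}$ for every subgraph. Each layer is therefore an identical copy of the door-free labeled GCS, and the only inter-layer edges are the zero-cost directed key transitions. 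Consequently Construction~\ref{con:tsp} is exactly Construction~\ref{con:until} applied to a door-free environment, and Theorem~\ref{thm:until} gives the following. A shortest $s$--$t^*$ path corresponds to an optimal solution of~\eqref{optprob} under $\varphi_{\mathrm{TSP}}$, up to the B\'ezier parameterization.

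\textbf{Step 2: identification with the shortest Hamiltonian path problem.} Here I will argue directly that the feasible set of~\eqref{optprob} under $\varphi_{\mathrm{TSP}}$ is the set of trajectories from $q_0$ to $q_T$ in $\mathcal{C}\cup\mathcal{K}$ that enter every wayset $K_i$ at least once. For any such trajectory, record the order of first entries $(K_{i_1},\dots,K_{i_{\nK}})$; this is a permutation of the keys, i.e.\ a Hamiltonian ordering. Following the Remark on repeated key visits, only first entries trigger layer transitions. The graph path then climbs the lattice along the chain $\emptyset\subset\{i_1\}\subset\{i_1,i_2\}\subset\cdots\subset\mathcal{K}$ and terminates at $t^*$ in $\mathcal{G}_\mathcal{K}$.

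Conversely, any $s$--$t^*$ path must traverse exactly $\nK$ inter-layer edges, because every inter-layer edge raises $|S|$ by exactly one and $t^*$ lives only in the top layer. Each such edge sits at a distinct key, so every path visits all waysets, in some order. Combining this with Lemma~\ref{lem:cost}, the minimum over paths equals the minimum over orderings of the optimal continuous trajectory cost realizing that ordering. That quantity is precisely the continuous-geometry shortest Hamiltonian path cost, consistent with Part~3 of Theorem~\ref{thm:bhk}.

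\textbf{Main obstacle.} The delicate point is the bijection in Step 2, specifically the claim that the ``first-visit'' ordering is well defined and matches a path whose cost equals the trajectory cost. Two situations complicate this. A trajectory may pass through several overlapping key regions simultaneously, and a key region may span several free cells. If two keys are entered at the same instant, I would break ties by key index and route the path through consecutive zero-cost transitions. This is possible because both key nodes are adjacent to the current free cell, and the zero-length segment inside the key sets leaves the cost unchanged, so Lemma~\ref{lem:cost} still applies. I would also note that the problem requires neither distinct keys nor that "Hamiltonian" forbid revisits, only that each wayset be visited at least once, which is exactly what $\varphi_{\mathrm{TSP}}$ encodes.
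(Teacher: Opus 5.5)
Your proposal is correct and follows essentially the same route as the paper. The paper likewise views Construction~\ref{con:tsp} as Construction~\ref{con:until} with the door conditions vacuous, observes that $\Gcal_\Kcal=\Gcal$ and that restricting the target to the top layer forces every wayset to be visited, and then appeals to Lemmas~\ref{lem:sound}--\ref{lem:cost}. Your explicit count of inter-layer edges and your tie-breaking for simultaneous key entries are extra detail the paper omits.

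One subtlety that you and the paper both pass over: Construction~\ref{con:until} deletes every copy $t^{(S)}$ with $S\subsetneq\Kcal$, so your trajectory-to-path step tacitly assumes the trajectory never has to cross the target cell before the last wayset is collected. That assumption fails, for example, when the free space is a single convex cell: that cell is then $t$, and every lower layer loses its only free-space vertex. So in general the target should be a dedicated node, like $s$.
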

\begin{proof} See Appendix~\ref{app:variation_proofs}. \end{proof}

\textit{Worst-case subgraph count:} $O(2^{\nK})$.

\begin{remark}
The Hamiltonian path begins at $q_0$, visits each wayset, and ends at $q_T$. To produce a pure TSP tour among waysets, the boundary constraints in \eqref{optprob} can be replaced with $q(0) = q(T)=q_0\in\Kcal_1$. In this case, where a wayset is the source, the number of subgraphs reduces from $2^{\nK}$ to $2^{\nK-1}$.
\end{remark}

\begin{remark}[$k$-of-$n$ wayset collection and connections to
  combinatorial optimization]
  \label{rem:k_of_n}
 A natural generalization of Variation 2 requires the robot to visit \emph{any} $k$ out of the $\nK$ key regions before reaching the target. The augmented graph modification is simply to truncate the subgraph lattice after layer $k$ and merge target nodes across all subgraphs at that layer. The shortest path selects the optimal $k$-element key subset to collect. The worst-case subgraph count reduces from
  $O(2^{n_\mathcal{K}})$ to $O(\sum_{\ell=0}^{k} \binom{n_\mathcal{K}}{\ell})
  = O(n_\mathcal{K}^k)$ for fixed $k$, a substantial reduction when
  $k \ll n_\mathcal{K}$.
  
  In combinatorial optimization, the discrete
  counterpart of this variation is the \emph{$k$-TSP} (selective
  TSP), which seeks the minimum-cost Hamiltonian
  path visiting exactly $k$ out of $n$ cities.  The augmented GCS provides
  a continuous-geometry generalization of $k$-TSP.  
\end{remark}

\subsection{Variation 3: Ordered Key Collection}
\label{subsec:ordered}

\subsubsection*{Specification}
Keys must be collected in the fixed sequence $K_1, K_2, \ldots,
K_{\nK}$:
\begin{equation}
  \label{eq:phi_ord}
  \varphi_{\mathrm{ord}} = (\lnot D_1 \, \mathcal{U} \, K_1)
  \land \bigwedge_{i=2}^{\nK}
  \bigl(\lnot D_i \, \mathcal{U} \, (K_{i-1} \land \mathcal{F} K_i)\bigr)
  \land \mathcal{F}(\mathcal T).
\end{equation}

\subsubsection*{Augmented graph modification}

\begin{construction}
  \label{con:ordered}
  The augmented graph is a \emph{linear chain} of $\nK + 1$ subgraphs:
  $\Gcal_\emptyset \to \Gcal_{\{1\}} \to \Gcal_{\{1,2\}} \to \cdots
  \to \Gcal_{\{1,\ldots,\nK\}}$.  At layer $\ell$, create only the
  single subgraph $\Gcal_{\{1,\ldots,\ell\}}$ and add a directed edge
  only at $K_\ell$.  The target $t^*$ is instantiated only in
  $\Gcal_{\{1,\ldots,\nK\}}$.
\end{construction}

\begin{theorem}
  \label{thm:ordered}
  A shortest path from $s$ to $t^*$ in Construction~\ref{con:ordered}
  corresponds to an optimal solution of~\eqref{optprob} with
  $\varphi_{\mathrm{ord}}$.
\end{theorem}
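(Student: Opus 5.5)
The argument follows the template of Theorem~\ref{thm:main}: we establish soundness, completeness, and cost equivalence for Construction~\ref{con:ordered} with respect to $\varphi_{\mathrm{ord}}$, and then apply the same minimization argument. For each trajectory we track the \emph{collection index} $j(t)$, the largest $\ell$ such that $K_1,\dots,K_\ell$ have been visited in this order by time $t$. The proof rests on one invariant: a path segment lying in subgraph $\Gcal_{\{1,\ldots,\ell\}}$ corresponds to a trajectory segment with $j(t)=\ell$. Inside that segment the only passable doors are $D_1,\dots,D_\ell$, since $\mathbf{D}(\{1,\ldots,\ell\})=\{D_1,\ldots,D_\ell\}$.

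\textbf{Soundness.} Take a path $\hat p$ from $s$ to $t^*$. The only inter-subgraph edges are the directed zero-cost edges $K_\ell^{(\{1,\ldots,\ell-1\})}\to K_\ell^{(\{1,\ldots,\ell\})}$. The target exists only in the top subgraph, so $\hat p$ must traverse every chain edge, in the order $\ell=1,\dots,\nK$. We partition the corresponding trajectory at these transition times $\tau_1<\dots<\tau_{\nK}$, with $q(\tau_\ell)\in K_\ell$.

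On $[0,\tau_i)$ the path lies in subgraphs indexed by sets that exclude $i$. Those edge sets contain no edge incident to $D_i$, so $\lnot D_i$ holds there. At $\tau_i$, the point $q(\tau_i)$ lies in $K_i$. For $i\ge 2$ we also need $K_{i-1}$ to have held earlier and $\mathcal F K_i$ to hold afterwards. This follows because $\tau_{i-1}<\tau_i$ and the trajectory reaches $K_i$ at $\tau_i$. Reaching $t^*$ gives $\mathcal F(\mathcal T)$.

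The delicate point is the semantics of $\lnot D_i\,\mathcal U\,(K_{i-1}\land\mathcal F K_i)$. Read literally, the door $D_i$ is released once $K_{i-1}$ is visited, provided $K_i$ is visited at some later time. Before writing the soundness step, I will fix the intended reading: $D_i$ stays locked until $K_i$ has been collected after $K_{i-1}$. This is what the construction enforces, and under it the soundness check above is immediate.

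\textbf{Completeness.} This is the main obstacle. Let $q$ be a feasible trajectory. Define $\tau_1$ as the first visit to $K_1$, and $\tau_\ell$ as the first visit to $K_\ell$ after $\tau_{\ell-1}$. These times exist because $\mathcal F K_i$ is nested inside the until chain, and they are increasing.

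We decompose $q$ into pieces by the convex cells it occupies, as in the proof of Lemma~\ref{lem:complete}, and assign each piece on $[\tau_\ell,\tau_{\ell+1})$ to $\Gcal_{\{1,\ldots,\ell\}}$. We then check that every cell transition in that interval is an edge of $\mathcal E_{\{1,\ldots,\ell\}}$. The only possible failure is a crossing through some $D_i$ with $i>\ell$, and the until clause for $i$ forbids that before $\tau_i$.

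Visits to $K_i$ out of order, or repeated visits, do not trigger transitions; this follows the repeated-key-visit remark after Theorem~\ref{thm:main}. Such a visit is simply a traversal of a key node inside the current subgraph. Whenever the trajectory's cell sequence revisits a cell, we invoke the Bézier parameterization argument of Lemma~\ref{lem:complete} to obtain a graph path. Gluing the pieces at the chain edges, with continuity handled as in the trajectory-continuity remark, yields a path from $s$ to $t^*$.

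\textbf{Cost and conclusion.} Lemma~\ref{lem:cost} applies verbatim, because every added edge has zero cost and zero displacement. The minimization step from the proof of Theorem~\ref{thm:main} then shows that a shortest $s$--$t^*$ path attains the optimal value of~\eqref{optprob} under $\varphi_{\mathrm{ord}}$, up to the Bézier parameterization.
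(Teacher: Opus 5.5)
Your proof has the same structure as the paper's, with more detail filled in. The paper argues soundness from the fact that the only edge into $\Gcal_{\{1,\ldots,\ell\}}$ is the one at $K_\ell$. It argues completeness by asserting that any trajectory feasible under $\varphi_{\mathrm{ord}}$ ``visits keys in order $K_1,\ldots,K_{\nK}$'' and therefore traverses the chain.

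Your semantic observation is correct, and it is exactly what the paper's proof passes over. Read literally, $\lnot D_i\,\mathcal U\,(K_{i-1}\land\mathcal F K_i)$ releases $D_i$ at a visit to $K_{i-1}$. So a trajectory that collects $K_1$, crosses $D_2$, and then collects $K_2$ satisfies $\varphi_{\mathrm{ord}}$ but has no counterpart path, because no edge of $\Gcal_{\{1\}}$ is incident to $D_2$. If $K_2$ can only be reached through $D_2$, the chain can even be infeasible while the literal problem is not. Your soundness argument does work for the literal formula: take the witness of clause $i$ to be $\tau_{i-1}$, since $D_i$ is avoided on $[0,\tau_i)$. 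Only completeness needs the stronger reading.

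Your completeness step, however, still cites the literal formula for the two facts it needs.

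\textbf{Existence of the times.} The increasing times $\tau_1<\tau_2<\cdots$ do not exist ``because $\mathcal F K_i$ is nested inside the until chain.'' Take $\nK=3$ and a door-avoiding trajectory that visits $K_2,K_3,K_1,K_2$ and then $\mathcal T$. It satisfies every clause of $\varphi_{\mathrm{ord}}$: the $K_1$-visit is followed by a $K_2$-visit, and the first $K_2$-visit is followed by the $K_3$-visit. Yet there is no visit to $K_3$ after $\tau_2$, and no path to $t^*$ corresponds to this trajectory.

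\textbf{Door exclusion.} ``The until clause for $i$ forbids crossing $D_i$ before $\tau_i$'' is precisely the failure of the literal reading that you yourself identified under Soundness.

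Both facts must come from the reading you adopt, so state that reading as a formula. For example, use the nested until $\psi_i=\bigl(\bigwedge_{j\ge i}\lnot D_j\bigr)\,\mathcal U\,(K_i\land\psi_{i+1})$ with $\psi_{\nK+1}=\mathcal F(\mathcal T)$, and take $\psi_1$ as the specification. Its witnesses form a nondecreasing chain of $K_i$-visits with $D_i$ avoided before the $i$-th one. Your greedy first-after times are componentwise no later than these witnesses, so both completeness facts follow. This is the statement the chain construction actually encodes.

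Finally, Lemma~\ref{lem:complete} contains no argument for cell revisits; it only asserts that vertex sequences exist. So citing it does not settle that point. For the optimality claim with a length objective, you can instead shortcut a revisit inside the convex cell. On the chain this never removes a transition, because no subgraph is re-entered.
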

\begin{proof} See Appendix~\ref{app:variation_proofs}. \end{proof}

\begin{remark}
  The ordered construction requires exactly $\nK + 1$ subgraphs,
  reducing the worst-case augmented GCS size from $O(2^{\nK})$ to
  $O(\nK)$ subgraphs. When key order is only partially specified, the augmented graph is
  a partial-order directed acyclic graph (DAG) on key subsets with complexity intermediate
  between $O(\nK)$ and $O(2^{\nK})$.
\end{remark}

\subsection{Variation 4: Disjunctive Keys}
\label{subsec:disjunctive}

\subsubsection*{Specification}
Either key $K_i^A$ or key $K_i^B$ suffices to unlock door $D_i$:
\begin{equation}
  \label{eq:phi_disj}
  \varphi_{\mathrm{disj}} = \bigwedge_{i=1}^{n_D}
    \bigl((K_i^A \lor K_i^B) \, \mathcal{R} \, \lnot D_i\bigr)
    \land \mathcal{F}(\mathcal T).
\end{equation}

\subsubsection*{Augmented graph modification}

\begin{construction}
  \label{con:disj}
  Introduce abstract keys $\tilde{\Kcal} = \{\tilde{K}_1, \ldots,
  \tilde{K}_{n_D}\}$.  Assign both $K_i^A$ and $K_i^B$ the abstract
  label $\tilde{K}_i$.  Build the subset lattice over $\tilde{\Kcal}$
  rather than individual keys.  For each $\tilde{K}_i \in S$, add
  directed inter-layer edges at both physical key nodes:
  $K_i^{A,(S\setminus\{\tilde{K}_i\})} \to K_i^{A,(S)}$ and
  $K_i^{B,(S\setminus\{\tilde{K}_i\})} \to K_i^{B,(S)}$.
\end{construction}

\begin{theorem}
  \label{thm:disj}
  A shortest path from $s$ to $t$ in Construction~\ref{con:disj}
  corresponds to an optimal solution of~\eqref{optprob} with
  $\varphi_{\mathrm{disj}}$.
\end{theorem}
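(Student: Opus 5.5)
The plan is to reduce Theorem~\ref{thm:disj} to Theorem~\ref{thm:main} by a relabeling argument, and then reprove the three ingredients (Lemmas~\ref{lem:sound}, \ref{lem:complete}, \ref{lem:cost}) for the abstract-key lattice. Define the abstract collection state of a trajectory $q$ at time $\tau$ as
\[
\tilde S(\tau)=\{\tilde K_i : \exists\, t\le\tau,\ q(t)\in K_i^A\cup K_i^B\}.
\]
Observe that $(K_i^A\lor K_i^B)\,\mathcal{R}\,\lnot D_i$ holds if and only if $q$ avoids $D_i$ at every time $\tau$ with $\tilde K_i\notin \tilde S(\tau)$. Here we use the paper's convention that the release is inclusive: visiting the key releases the door from that instant on. Hence $\varphi_{\mathrm{disj}}$ has exactly the form of $\varphi_{\mathrm{base}}$, with the atomic proposition $K_i$ replaced by the union region $K_i^A\cup K_i^B$ and the key-door map $\mathbf D(\tilde S)=\{D_i:\tilde K_i\in\tilde S\}$. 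The abstract keys and doors are in bijection, so the edge set $\mathcal{E}_S$ of each subgraph is well defined over $\tilde{\Kcal}$.

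\textbf{Soundness.} Take a path $\hat p$ from $s$ to $t$ in Construction~\ref{con:disj}. Its subgraph index $S$ changes only along directed inter-layer edges. Each such edge sits at a physical copy of $K_i^A$ or $K_i^B$ and adds $\tilde K_i$. By induction along $\hat p$, the current index $S$ is always contained in the set of abstract keys whose physical regions the corresponding trajectory has already entered. Inside $\Gcal_S$, edges incident to $D_i$ exist only when $\tilde K_i\in S$. So the trajectory enters $D_i$ only after visiting $K_i^A$ or $K_i^B$, and $\varphi_{\mathrm{disj}}$ holds. The term $\mathcal F(\mathcal T)$ is supplied by the merged target.

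\textbf{Completeness.} Take a feasible $q$ and decompose it into the sequence of convex regions it traverses. Lift this sequence into $\Ghat$ by tracking $\tilde S(\tau)$. At the first entry into either $K_i^A$ or $K_i^B$, take the directed edge at that physical key node from $\Gcal_{\tilde S\setminus\{\tilde K_i\}}$ to $\Gcal_{\tilde S}$; this edge exists because the construction places it at both physical copies. A later visit to the other key of the pair causes no transition, by the same first-visit convention as in the Repeated Key Visits remark. Every door traversal happens when $\tilde K_i\in\tilde S$, so the corresponding edge lies in $\mathcal E_{\tilde S}$. The subset $\tilde S$ is reached by the actual trajectory, so it is a reachable subset and its subgraph is instantiated.

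\textbf{Cost equivalence.} This is unchanged from Lemma~\ref{lem:cost}, since inter-layer edges have zero cost and identical geometry.

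The main obstacle I expect is the simultaneous-entry edge case, where $K_i^A\cap K_i^B\neq\emptyset$ or a single convex cell meets both keys. The first-visit trigger must still give a unique, valid lift. I would resolve this by fixing a tie-breaking rule, for example preferring $A$, and noting that either choice yields a path of equal cost. With that settled, the optimality statement follows exactly as in the proof of Theorem~\ref{thm:main}.
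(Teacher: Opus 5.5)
Your proposal is correct and follows essentially the same route as the paper's proof. Soundness holds because edges incident to $D_i$ exist only in subgraphs with $\tilde K_i\in S$, and such subgraphs can be entered only through a directed edge at $K_i^A$ or $K_i^B$; completeness follows by lifting the first visit to either physical key onto the corresponding directed edge, with cost equivalence inherited from Lemma~\ref{lem:cost}; your extra points (recasting $\varphi_{\mathrm{disj}}$ as $\varphi_{\mathrm{base}}$ over the union region, checking that lifted subsets are reachable and hence instantiated, handling a later visit to the partner key, and tie-breaking for overlapping keys) spell out details the paper leaves implicit without changing the argument.
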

\begin{proof} See Appendix~\ref{app:variation_proofs}. \end{proof}

\textit{Worst-case subgraph count:} $O(2^{n_D})$ over abstract keys.  The only
overhead relative to the base construction is a doubling of
inter-layer edges per abstract key.

\begin{remark}
  This construction easily generalizes to more than two disjunctive keys. This variation also enables the framework to incorporate non-convex key regions that can be decomposed into a union of polytopes; the components then correspond to a set of disjunctive keys.
\end{remark}

\subsection{Variation 5: Multi-Key Doors (Conjunctive Keys)}
\label{subsec:multikey}

\subsubsection*{Specification}
Door $D_i$ requires a specified subset $\mathcal{K}_i \subseteq
\Kcal$ of keys:
\begin{equation}
  \label{eq:phi_multi}
  \varphi_{\mathrm{multi}} = \bigwedge_{i=1}^{n_D}
    \Bigl(\bigwedge_{j:\, K_j \in \mathcal{K}_i} K_j\Bigr)
    \mathcal{R}\, \lnot D_i \land \mathcal{F}(\mathcal T).
\end{equation}

\subsubsection*{Augmented graph modification}
The sole modification to Algorithm~\ref{Algo1} is the
door-unlocking mapping:
\begin{equation}
  \label{eq:multi_unlock}
  \mathbf D(S) = \{D_i : \mathcal{K}_i \subseteq S\}.
\end{equation}
Edges incident to $D_i$ are reinserted only when $S$ contains the
\emph{entire} required key set $\mathcal{K}_i$.

\begin{theorem}
  \label{thm:multi}
  A shortest path from $s$ to $t$ in the augmented GCS with
  mapping~\eqref{eq:multi_unlock} corresponds to an optimal solution
  of~\eqref{optprob} with $\varphi_{\mathrm{multi}}$.
\end{theorem}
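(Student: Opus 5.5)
The plan is to reduce Theorem~\ref{thm:multi} to Theorem~\ref{thm:main}. The proof of Theorem~\ref{thm:main} uses the key-door mapping $\mathbf D$ only through one invariant. In the subgraph $\Gcal_S$, the edges incident to a door node $D_i$ are present exactly when the precedence specification allows the trajectory to be in $D_i$, given that the set of keys collected so far is $S$. So I will re-verify Lemmas~\ref{lem:sound}, \ref{lem:complete} and~\ref{lem:cost} with $\mathbf D(S)$ replaced by \eqref{eq:multi_unlock}. The layer structure, the inter-layer key edges, and target merging are unchanged, so only the door-edge invariant needs new work.

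The semantic step comes first. I will show that a trajectory $q$ satisfies $\bigl(\bigwedge_{j: K_j\in\Kcal_i} K_j\bigr)\mathcal R\,\lnot D_i$ if and only if $q(t)\notin D_i$ for every $t$ at which the set $S(t)$ of keys visited on $[0,t)$ does not contain $\Kcal_i$.

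\emph{Main obstacle.} Read literally, the release operator is triggered by a single time instant at which the conjunction of all $K_j$ holds. That would require the key regions to overlap simultaneously, which is not the intended semantics. I will therefore first fix the intended interpretation, namely that each key $K_j\in\Kcal_i$ has been visited at some earlier time. Equivalently, the release argument can be rewritten as $\bigwedge_j \mathcal{F}^{-}K_j$, with a past/once operator; alternatively, interpret $K_j$ as the latched proposition ``key $j$ collected''. I will state this as the convention, consistent with the base case $|\Kcal_i|=1$. Under it, the equivalence follows directly from the semantics of $\mathcal R$. A boundary detail needs care: when the last required key is collected at time $t^*$, the door may be entered only after $t^*$. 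This matches the path entering $D_i$ only after the inter-layer transition at the last key node.

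For \emph{soundness}, take an $s$--$t$ path $\hat p$ in $\Ghat$. The key set $S$ indexing the current subgraph increases only through directed edges at key nodes. Hence at every point along $\hat p$, $S$ equals the set of keys visited so far; this uses the first-visit convention of the repeated-visits remark. A door node $D_i$ can be reached in $\Gcal_S$ only via edges that exist when $D_i\in\mathbf D(S)$, that is, when $\Kcal_i\subseteq S$. The trajectory is therefore in $D_i$ only after all of $\Kcal_i$ has been collected, which is the specification by the semantic step.

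For \emph{completeness}, take a feasible $q$. Partition $[0,T]$ by the first-visit times of the keys, and let $S_0=\emptyset\subset S_1\subset\cdots$ be the accumulated key sets. Lift the cell sequence traversed on each interval into $\Gcal_{S_k}$, switching subgraphs via the zero-cost key edges. Every door $D_i$ entered during the $k$-th interval satisfies $\Kcal_i\subseteq S_k$, so its incident edges are present in $\Gcal_{S_k}$. Each $S_k$ is reachable, being realized by $q$, so $\Gcal_{S_k}$ exists in the lattice. The final cell is a copy of $t$, which is merged. Cost equivalence is verbatim Lemma~\ref{lem:cost}, since inter-layer edges cost zero and intra-subgraph costs are unchanged. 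Combining the three lemmas as in the proof of Theorem~\ref{thm:main} gives the result, up to B\'ezier parameterization.
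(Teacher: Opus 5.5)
This is essentially the paper's proof. It too re-runs Lemmas~\ref{lem:sound} and~\ref{lem:complete} with $\mathbf D(S)=\{D_i:\Kcal_i\subseteq S\}$ and leaves cost equivalence unchanged. It also tacitly uses the ``every $K_j\in\Kcal_i$ visited earlier'' reading of the release argument, which you rightly make explicit, since the literal conjunction would demand simultaneous occupancy of all key regions.

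One small correction: along an arbitrary path in $\Ghat$, the index $S$ is only a \emph{subset} of the keys physically visited so far, not equal to it. A path can cross a key region without taking the inter-layer edge, and the first-visit convention governs only the lifting in completeness. However, $\Kcal_i\subseteq S\subseteq\{\text{visited keys}\}$ is all that soundness needs.
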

\begin{proof} See Appendix~\ref{app:variation_proofs}. \end{proof}

\textit{Worst-case subgraph count:} $O(2^{\nK})$, unchanged. This variation
composes naturally with Variation~4 to express arbitrary Boolean
combinations of key requirements per door.

\subsection{Variation 6: Tools/Items (Multi-Door Keys)}
\label{subsec:onekey}

\subsubsection*{Specification}
Key $K_i$ unlocks a set $D(K_i) = \{D_{i_1}, D_{i_2}, \ldots\}$ of doors, so $\nK < n_D$ in general, with assignment $\sigma : \{1,\ldots,n_D\} \to \{1,\ldots,\nK\}$:
\begin{equation}
  \label{eq:phi_open}
  \varphi_{\mathrm{tool}} = \bigwedge_{j=1}^{n_D}
    \bigl(K_{\sigma(j)} \, \mathcal{R} \, \lnot D_j\bigr)
    \land \mathcal{F}(\mathcal T).
\end{equation}

\subsubsection*{Augmented graph modification}
The subset lattice is built over keys $\Kcal$ (not doors). The door-unlocking mapping becomes non-injective:
\[
  \mathbf D(S) = \bigcup_{K_i \in S} \mathbf D(K_i)
       = \{D_j : K_{\sigma(j)} \in S\}.
\]
When $K_i$ is added to $S$, all doors in $\mathbf D(K_i)$ are simultaneously
unlocked.

\begin{theorem}
  \label{thm:onekey}
  A shortest path from $s$ to $t$ in this augmented GCS corresponds
  to an optimal solution of~\eqref{optprob} with
  $\varphi_{\mathrm{tool}}$.
\end{theorem}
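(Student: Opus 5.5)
The plan is to reduce Theorem~\ref{thm:onekey} to the three-lemma argument behind Theorem~\ref{thm:main}. Lemmas~\ref{lem:sound}, \ref{lem:complete}, and~\ref{lem:cost} should use only one property of the key-door mapping. Namely, a door vertex of $\Gcal_S$ has incident edges exactly when that door lies in $\mathbf D(S)$, and membership in $\mathbf D(S)$ is exactly the condition under which the release constraint for that door has been discharged. So I would recheck the three lemmas with $\mathbf D(S) = \{D_j : K_{\sigma(j)} \in S\}$ and the release clauses $K_{\sigma(j)}\,\mathcal R\,\lnot D_j$. The lattice itself, the inter-layer edges at physical key nodes, and the target merging are unchanged.

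\emph{Soundness.} Take a path $\hat p$ from $s$ to $t$. Each edge of $\hat p$ either stays inside some $\Gcal_S$ or is a zero-cost directed edge $K_i^{(S\setminus\{i\})}\to K_i^{(S)}$. Since the layer index only increases along $\hat p$, the sequence of subgraphs visited is a chain $\emptyset = S_0 \subsetneq S_1 \subsetneq \cdots$, and each increment $S_{k}\setminus S_{k-1}=\{i_k\}$ is produced by a visit to the key region $K_{i_k}$. Suppose $\hat p$, and hence the trajectory, occupies a door region $D_j$ while in $\Gcal_S$. Then $D_j$ has an incident edge in $\Gcal_S$, so $D_j \in \mathbf D(S)$, which means $K_{\sigma(j)} \in S$. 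Consequently the trajectory visited $K_{\sigma(j)}$ at the earlier transition that added it, so the clause $K_{\sigma(j)}\,\mathcal R\,\lnot D_j$ holds. A door that is never entered satisfies its release clause trivially, and $\mathcal F(\mathcal T)$ holds by target merging.

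\emph{Completeness.} Given a feasible trajectory, I would record the ordered sequence of first visits to keys, as in the Remark on repeated key visits. This sequence determines the chain of sets $S_k$. The trajectory segment between consecutive first visits traverses free-space cells, key regions, and only doors $D_j$ with $K_{\sigma(j)}$ already collected; the release clause guarantees this. Every such door lies in $\mathbf D(S_k)$, so the segment is a walk in $\Gcal_{S_k}$, and the first visit to the next key is realized by the directed edge. Reachability of each $S_k$ follows because the trajectory itself witnesses it, so $S_k$ is among the sets enumerated by the search in Algorithm~\ref{Algo1}. Walks are reduced to paths as in Lemma~\ref{lem:complete}.

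\emph{Cost equivalence.} This carries over verbatim from Lemma~\ref{lem:cost}, because intra-subgraph edge costs are the same convex costs as in $\Gcal$ and the inter-layer edges cost zero. Combining the three parts as in the proof of Theorem~\ref{thm:main} gives the result.

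The main obstacle is the completeness step with a non-injective mapping. I need to confirm that unlocking several doors at once when $K_i$ enters $S$ creates no spurious or missing edges. In particular, an edge between two door vertices $D_j, D_{j'}$ must be present only when both doors are in $\mathbf D(S)$. The reinsertion rule in Algorithm~\ref{Algo1} adds any edge with at least one endpoint in $\mathbf D(S)$. I would argue that with this rule a door-to-door edge with one locked endpoint still leaves the locked door reachable, so I would restrict reinsertion to edges whose door endpoints all lie in $\mathbf D(S)$. I would then check that this restricted rule coincides with the original rule whenever doors are pairwise non-adjacent, which is the case produced by Algorithm~\ref{alg:partition}.
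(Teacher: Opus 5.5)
Your proposal follows the same route as the paper: its appendix proof of Theorem~\ref{thm:onekey} also re-runs the soundness and completeness arguments of Theorem~\ref{thm:main}, with cost equivalence implicit, under the non-injective map $\mathbf D(S)=\{D_j : K_{\sigma(j)}\in S\}$, only more tersely than you do. Your door-to-door caveat is a real soundness subtlety that the paper's one-line claim (``edges incident to $D_j$ are present in $\mathcal{G}_S$ only when $K_{\sigma(j)}\in S$'') also silently relies on; note, however, that the issue is not specific to non-injectivity, and that pairwise non-adjacency of doors is an assumption on the environment rather than something Algorithm~\ref{alg:partition} guarantees, since two doors sharing a facet produce adjacent door vertices.
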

\begin{proof} See Appendix~\ref{app:variation_proofs}. \end{proof}

\begin{remark}
  When $\nK < \nD$, the augmented GCS has at most $2^{\nK}$ subgraphs rather than $2^{\nD}$, yielding an exponential reduction relative
  to a naive door-indexed construction.
\end{remark}

\begin{remark}[Reusable manipulation capabilities]
  \label{rem:tools}
  This variation admits a natural and rich interpretation beyond the literal
  key-door metaphor: a single ``key'' $K_i$ may represent a \textbf{tool} or \textbf{item}: a
  \emph{reusable manipulation capability} that reconfigures the
  environment at multiple locations.  A robot carrying a powered
  screwdriver can remove fasteners at any number of access panels;
  one equipped with a glass-breaking tool can breach multiple glazed
  barriers; a decontamination sprayer can neutralize hazardous zones
  at several points along a route. Similarly, a wire-cutting attachment
  can sever restraining cables at multiple junctions, a lubricant
  dispenser can free seized valves or hinges throughout a facility,
  and an RFID programmer can re-credential any number of
  electronically controlled gates sharing a common lock standard. In each case, acquiring the
  capability once  (visiting $K_i$) simultaneously unlocks all
  corresponding ``doors'' $\mathbf D(K_i)$, exactly as encoded by the
  non-injective mapping $\mathbf D(S) = \bigcup_{K_i \in S} \mathbf D(K_i)$.  The
  exponential reduction in subgraph count noted above (from $O(2^{n_D})$ to $O(2^{\nK})$
  when $\nK < n_D$) reflects the practical observation that a small
  number of distinct tool types may unlock a much larger number of
  individual obstacles throughout the environment.  This variation
  therefore provides a formal basis for manipulation planning
  problems in which the robot must reason jointly about which
  capabilities to acquire and in what order to deploy them across
  spatially distributed obstacles.
\end{remark}

\subsection{Variation 7: Timed and Bounded-Horizon Constraints}
\label{subsec:timed}

\subsubsection*{Specification}
Key $K_i$ must be collected and door $D_i$ passed within specified
time intervals:
\begin{equation}
  \label{eq:phi_time}
  \varphi_{\mathrm{time}} = \bigwedge_{i=1}^{\nK}
    \bigl(K_i \, \mathcal{R}_{[a^D_i, b^D_i]} \, \lnot D_i\bigr)
    \land \bigwedge_{i=1}^{\nK}
    \mathcal{F}_{[a^K_i, b^K_i]}(K_i) \land \mathcal{F}(\mathcal T).
\end{equation}

\subsubsection*{Augmented graph modification}
Timed constraints are encoded in convex sets rather than graph topology; the layer structure is unchanged.

\begin{construction}
  \label{con:timed}
  (i) Extend each vertex variable to include a time component:
  $x_v = (q_v, \tau_v)$ with $\tau_v \geq 0$.  (ii) For key node
  $K_i^{(S)}$, add convex constraint $\tau_{K_i} \in [a^K_i, b^K_i]$;
  for door node $D_i^{(S)}$, add $\tau_{D_i} \in [a^D_i, b^D_i]$.
  (iii) For each edge $(u,v) \in \Ehat$, add monotonicity constraint
  $\tau_v \geq \tau_u$.
\end{construction}

\begin{theorem}
  \label{thm:timed}
  A shortest path from $s$ to $t$ in Construction~\ref{con:timed}
  corresponds to an optimal solution of~\eqref{optprob} with
  $\varphi_{\mathrm{time}}$.
\end{theorem}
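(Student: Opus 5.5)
The plan is to reduce Theorem~\ref{thm:timed} to Theorem~\ref{thm:main} by reusing the soundness, completeness, and cost-equivalence structure of Lemmas~\ref{lem:sound}--\ref{lem:cost}. Construction~\ref{con:timed} leaves the graph topology of $\Ghat$ unchanged and only enlarges each vertex set to $\mathcal{X}_v\times\mathbf{R}_{\ge 0}$, intersected with the interval constraints on key and door nodes and the edge constraints $\tau_v\ge\tau_u$. All added constraints are affine in $x_v=(q_v,\tau_v)$, so the augmented GCS is still a convex GCS and the machinery of \eqref{spp_gcs} applies. The argument therefore has to show three things: the untimed path--trajectory correspondence still holds; the extra time variables encode exactly the timed operators in \eqref{eq:phi_time}; and the objective is unchanged.

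\textbf{Soundness.} Take a path $\hat p$ from $s$ to $t$ with feasible vertex variables. Lemma~\ref{lem:sound} already gives a trajectory $q$ satisfying the untimed precedence $K_i\,\mathcal{R}\,\lnot D_i$. I would identify $\tau_v$ with the time at which the B\'ezier segment associated with $v$ is traversed, for instance its entry time, using the time-scaling variable of the GCS Bézier formulation \cite{marcucci2023motion}. The monotonicity edge constraints then make $\tau$ along $\hat p$ a nondecreasing clock consistent with the timing of $q$. Next comes the timed operators. Any path satisfying the $\mathcal{F}_{[a^K_i,b^K_i]}(K_i)$ conjunct must visit a copy of $K_i$, so the target has to be restricted as in Construction~\ref{con:until}. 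The constraint $\tau_{K_i}\in[a^K_i,b^K_i]$ then certifies a visit to $K_i$ inside its window. For $K_i\,\mathcal{R}_{[a^D_i,b^D_i]}\,\lnot D_i$, the layer structure already forbids entering $D_i$ before $K_i$ has been collected. The window constraint on door copies restricts when passage through $D_i$ can occur, and combining the two gives the bounded release semantics.

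\textbf{Completeness and cost.} Given a feasible $q$ satisfying $\varphi_{\mathrm{time}}$, Lemma~\ref{lem:complete} produces the path $\hat p$ from the cell sequence visited, using first key visits as transitions. I would set $\tau_v$ equal to the actual time $q$ spends in the corresponding region, choosing a time in $[a,b]$ for key and door nodes, which exists because $q$ satisfies the timed formula. Monotonicity holds automatically because time increases along $q$. Cost equivalence follows from Lemma~\ref{lem:cost}: the objective does not depend on $\tau$ beyond $T=\tau_t$, which is already encoded as the $\beta T$ term. Optimality then follows exactly as in the proof of Theorem~\ref{thm:main}.

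\textbf{Main obstacle.} The hard step is matching the discrete time stamps to the continuous-time STL semantics. A single $\tau_v$ per vertex only certifies one instant in a region, whereas bounded release requires $\lnot D_i$ on a whole interval and door traversal spans an interval of time. My first attempt would be to carry an entry time and an exit time per vertex, constraining the full door occupancy interval to lie within the allowed window, so that soundness holds. I would then check that completeness is preserved, which it is because the true occupancy times of a feasible $q$ satisfy these stronger constraints. A secondary subtlety is repeated visits to a door copy, which cannot occur since GCS paths have distinct vertices. Failing that, I would state the theorem for this interval-based reading of the timed specification.
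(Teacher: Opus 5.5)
Your outline (soundness from the windows, monotonicity and Lemma~\ref{lem:sound}; completeness by reading off actual times; cost unchanged) is the same skeleton as the paper's appendix proof. Your observation that the merged target must be replaced by the top-layer target of Construction~\ref{con:until} is correct: without it, a path can skip a key and violate the conjunct $\mathcal{F}_{[a^K_i,b^K_i]}(K_i)$, and the paper's construction and proof do not make this restriction.

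\textbf{Soundness gap.} The step that fails is the claim that monotonicity makes $\tau$ ``a nondecreasing clock consistent with the timing of $q$.'' In Construction~\ref{con:timed} the $\tau_v$ are constrained only by $\tau_v\ge 0$, the windows, and $\tau_v\ge\tau_u$. Nothing ties them to the segment durations (time-scaling variables) that determine when the B\'ezier trajectory is actually inside $K_i$ or $D_i$. The solver may therefore pick any monotone labeling that fits the windows. For instance, take $\mathcal{Q}$ the unit ball, a key region at distance at least $10$ from $q_0$, and $[a^K_1,b^K_1]=[0,1]$. Setting $\tau_{K_1}=1/2$ gives a feasible augmented-GCS path, although no trajectory satisfies $\varphi_{\mathrm{time}}$. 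Identifying $\tau_v$ with an entry time is therefore an additional constraint, not a consequence of the construction. One choice is $\tau_s=0$, $\tau_v=\tau_u+h_u$ on every edge including the zero-cost inter-layer edges ($h_u$ the duration of the segment at $u$), and $T=\tau_t+h_t$. These constraints are linear, so convexity is preserved. They also give exit times $\tau_v+h_v$, which settles your instant-versus-interval concern: impose $a^D_i\le\tau_{D_i}$ and $\tau_{D_i}+h_{D_i}\le b^D_i$ for doors, and $\tau_{K_i}\le b^K_i$ and $\tau_{K_i}+h_{K_i}\ge a^K_i$ for keys. You should say explicitly that you are proving the theorem for this modified construction.

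\textbf{Completeness gap.} Even with coupled clocks, your completeness step fails as written. You transition at the \emph{first} visit to $K_i$ and then assert that a time in $[a^K_i,b^K_i]$ is available ``because $q$ satisfies the timed formula.'' But $\mathcal{F}_{[a^K_i,b^K_i]}(K_i)$ only guarantees \emph{some} visit inside the window. Every copy $K_i^{(S)}$ carries the window. So consider a trajectory that touches $K_i$ before $a^K_i$, then passes $D_i$, and later returns to $K_i$ inside the window. It satisfies $\varphi_{\mathrm{time}}$ but has no corresponding path. Transitioning at the early visit violates the window, and deferring the transition leaves the door passage in a layer where $D_i$ is locked.

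Similarly, under standard STL semantics, $K_i\,\mathcal{R}_{[a^D_i,b^D_i]}\,\lnot D_i$ only forbids occupying $D_i$ at times in $[a^D_i,b^D_i]$ before $K_i$ has been visited. It permits entering $D_i$ outside that window, even without the key, which the construction excludes. So ``combining the two gives the bounded release semantics'' is not right.

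What the coupled construction actually encodes is a stronger reading: a key counts only if collected inside its window, and a door may be used only inside its window after its key. You can either state and prove the theorem for that specification, or enlarge the per-key state to obtain completeness for $\varphi_{\mathrm{time}}$ as written. The latter means separating ``door released'' from ``in-window visit done,'' at up to $3^{\nK}$ subgraphs. The paper's own appendix argument has the same holes: it simply asserts that the window and monotonicity constraints suffice.
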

\begin{proof} See Appendix~\ref{app:variation_proofs}. \end{proof}

\textit{Worst-case subgraph count:} $O(2^{\nK})$.  All added constraints
(time windows, monotonicity) are linear and preserve the GCS convex
optimization structure.

\subsection{Variation 8: Conditional (If-Then) Specification}
\label{subsec:if-then}

\subsubsection*{Specification}
If the agent visits \emph{trigger region} $A_i$, then key-door precedence $K_i \,\mathcal{R}\, \neg D_i$ becomes active. Prior to visiting $A_i$, door $D_i$ is freely passable:
\begin{equation}\label{eq:cond_spec}
  \varphi_{\mathrm{cond}} = \bigwedge_{i=1}^{n}\Bigl(A_i \Rightarrow (K_i \,\mathcal{R}\, \neg D_i)\Bigr) \wedge \mathcal{F}(\mathcal{T}).
\end{equation}
This can be rewritten in pure STL as $\mathcal{G}(\neg A_i \vee (K_i \,\mathcal{R}\, \neg D_i))$, which enforces the precedence globally once the trigger is encountered.

\subsubsection*{Augmented graph modification}
\begin{construction}\label{con:conditional}
Index subgraphs by a joint state $(\mathbf{a}, S)$ where $\mathbf{a} \in \{0,1\}^n$ is a binary trigger vector and $S \subseteq \mathcal{K}$ is the collected key set. The total subgraph lattice has at most $2^n \cdot 2^{\nK} = 2^{n + \nK}$ subgraphs.

\textbf{Door-unlocking mapping:}
\begin{equation}\label{eq:cond_D}
  \mathbf{D}(\mathbf{a}, S) = \{D_i : a_i = 0\} \cup \{D_i : a_i = 1 \text{ and } K_i \in S\}.
\end{equation}
The first term captures doors whose precedence has not yet been triggered and are therefore freely passable. The second captures doors whose triggered precedence has been satisfied by collecting the corresponding key.

\textbf{Trigger transition edges:} For each trigger region $A_i$, add a directed edge at the trigger node from subgraph $\mathcal{G}_{(\mathbf{a}, S)}$ to $\mathcal{G}_{(\mathbf{a} \cup \{i\},\, S)}$ upon visiting $A_i$:
\[
  A_i^{(\mathbf{a},\, S)} \;\to\; A_i^{(\mathbf{a} \cup \{i\},\, S)}.
\]
This transition removes the freely-passable status of $D_i$ (setting $a_i = 1$), activating the precedence.

\textbf{Key transition edges:} Identical to the base construction, now additionally indexed by $\mathbf{a}$:
\[
  K_i^{(\mathbf{a},\, S \setminus \{K_i\})} \;\to\; K_i^{(\mathbf{a},\, S)}.
\]

\textbf{Edge sets:} For subgraph $\mathcal{G}_{(\mathbf{a},S)}$, the edge set is:
\[
  \mathcal E_{(\mathbf{a},S)} = \mathcal E_\emptyset \;\cup\; \{(u,v) \in \mathcal E \mid u \in \mathbf{D}(\mathbf{a},S) \text{ or } v \in \mathbf{D}(\mathbf{a},S)\}.
\]
All other aspects of Algorithm~2 are unchanged.
\end{construction}

\begin{theorem}\label{thm:conditional}
  A shortest path from $s$ to $t$ in the augmented GCS of Construction~\ref{con:conditional} corresponds to an optimal solution of problem~\eqref{optprob} with specification $\varphi_{\mathrm{cond}}$.
\end{theorem}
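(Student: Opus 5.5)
We prove Theorem~\ref{thm:conditional} the same way as Theorem~\ref{thm:main}: establish soundness, completeness, and cost equivalence for the joint-state augmented graph, then conclude exactly as in the proof of Theorem~\ref{thm:main}.

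The central device is a \emph{monitor} for $\varphi_{\mathrm{cond}}$. Along any trajectory $q$ we define two quantities at each time $t$. The first is the trigger vector
\[
\mathbf a(t)_i = \mathds{1}\{\exists\, \tau\le t : q(\tau)\in A_i\}.
\]
The second is the collected key set $S(t)$, which records the keys visited; for keys we use first visits, as in the repeated-key-visit remark. Both are monotone nondecreasing in $t$. We then show that $\varphi_{\mathrm{cond}}$ holds if and only if, for every $t$ and every $i$,
\[
q(t)\in D_i \;\Rightarrow\; D_i\in\mathbf D(\mathbf a(t),S(t)),
\]
with $\mathbf D$ as in \eqref{eq:cond_D}. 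This is a per-conjunct check.
\begin{itemize}
\item If $a_i(t)=0$, the implication $A_i\Rightarrow\cdots$ imposes nothing at time $t$, matching the first term of \eqref{eq:cond_D}.
\item If $a_i(t)=1$, the active release $K_i\,\mathcal R\,\neg D_i$ forbids $D_i$ until $K_i$ is visited. This matches the second term, provided a key visit made before the trigger counts as satisfying the triggered release.
\end{itemize}
That last proviso is the main obstacle. The formula $\mathcal G(\neg A_i\vee(K_i\,\mathcal R\,\neg D_i))$, read literally, re-evaluates the release at every trigger visit. We intend to fix the semantics so that $S$ is cumulative over the whole trajectory, which is the intended reading and the one Construction~\ref{con:conditional} encodes. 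The argument therefore begins by stating that semantic convention explicitly and proving the monitor equivalence under it. The same convention also disposes of repeated trigger visits: once $a_i=1$, further visits to $A_i$ cause no transition.

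\textbf{Soundness.} Take an $s$--$t$ path $\hat p$. The subgraph index $(\mathbf a,S)$ changes only along trigger or key transition edges, and those edges sit at copies of $A_i$ or $K_i$, so the index along $\hat p$ equals the monitor state $(\mathbf a(t),S(t))$ of the induced trajectory. Any door node $D_i^{(\mathbf a,S)}$ on $\hat p$ has incident edges only if $D_i\in\mathbf D(\mathbf a,S)$, by the definition of $\mathcal E_{(\mathbf a,S)}$. By the monitor lemma, the trajectory therefore satisfies $\varphi_{\mathrm{cond}}$. Continuity across transition edges follows as in the trajectory-continuity remark, since the transitions join copies of the same convex set.

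\textbf{Completeness.} Take a feasible $q$. By Lemma~\ref{lem:complete}, $q$ corresponds to a region sequence in the base graph $\mathcal G$. We lift each region to the copy indexed by the current monitor state, inserting a transition edge at the first visit to each $A_i$ and each $K_i$. Every door occurrence lies in an unlocked layer by the monitor lemma, so every lifted edge exists. The lifted state $(\mathbf a,S)$ is reachable because the states reachable from $s$ are exactly those realized by some trajectory.

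Two technical points need care. First, we must ensure trigger and key regions appear as labeled nodes, so that a visit corresponds to traversing the node. This is done as in the key-labeling step of Algorithm~\ref{alg:partition}, with triggers labeled exactly like keys. Second, if a trigger and a key region overlap, a simultaneous visit is handled by taking the two transitions in sequence at zero cost.

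\textbf{Cost equivalence.} All transition edges have zero cost and zero displacement, and every other edge carries the base-GCS cost, so Lemma~\ref{lem:cost} applies unchanged. Combining the three parts exactly as in the proof of Theorem~\ref{thm:main} gives the result.
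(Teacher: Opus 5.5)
Your decomposition (a monitor lemma, then soundness, completeness and cost equivalence) is the same as the paper's. Your explicit treatment of the semantics is sharper than the paper's. Under the literal reading $\mathcal{G}(\neg A_i \vee (K_i\,\mathcal{R}\,\neg D_i))$, a key collected before the trigger does not discharge the release. So Construction~\ref{con:conditional} can only be correct for the cumulative convention you adopt, which the paper's Case~2 assumes silently.

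The genuine gap is in the soundness step ``the index along $\hat p$ equals the monitor state.'' Transition edges sitting on copies of $A_i$ and $K_i$ give only one direction: every change of index is witnessed by a physical visit. Hence $\mathbf{a}_{\hat p}\le\mathbf{a}(t)$ and $S_{\hat p}\subseteq S(t)$. The converse fails for two reasons.
\begin{enumerate}
\item With triggers labeled exactly like keys, $A_i$ may overlap free-space cells. The B\'ezier segment assigned to some $\mathcal{C}_v$ can then pass through $A_i$ without $\hat p$ visiting any $A_i$ node.
\item Even when $\hat p$ enters $A_i^{(\mathbf{a},S)}$, that node keeps its ordinary intra-subgraph edges. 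So $\hat p$ can leave without taking the trigger edge.
\end{enumerate}
For keys this lag is harmless, because $\mathbf{D}(\mathbf{a},S)$ is increasing in $S$: a lagging $S$ only closes doors. That is why the base argument survives it. But $\mathbf{D}(\mathbf{a},S)$ is decreasing in $\mathbf{a}$, so a lagging trigger bit keeps $D_i$ in the first term of \eqref{eq:cond_D}.

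Concretely, suppose $A_i$ lies in the start cell on the straight segment toward $D_i$. Then the path $s\to\mathcal{C}_1^{(\mathbf{0},\emptyset)}\to D_i^{(\mathbf{0},\emptyset)}\to\cdots\to t$ is admissible. Its optimal trajectory crosses $A_i$ and then $D_i$ without ever visiting $K_i$. It violates $\varphi_{\mathrm{cond}}$ and can be strictly cheaper than every feasible trajectory, so both soundness and the optimality claim fail. The paper's Case~1 equates ``$A_i$ is never visited'' with ``$a_i=0$ throughout,'' so it has the same hole; mirroring it will not close this.

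What is missing is a construction step that makes triggers unavoidable obligations rather than optional opportunities. Two changes suffice.
\begin{enumerate}
\item Include the boundary hyperplanes of each $A_i$ in the arrangement of Algorithm~\ref{alg:partition}, and merge the cells inside $A_i$ into a trigger vertex, as is done for doors. Then no free-space cell meets the interior of $A_i$, and every passage through $A_i$ traverses an $A_i$ node.
\item In each subgraph with $a_i=0$, delete the outgoing intra-subgraph edges of $A_i^{(\mathbf{a},S)}$, so the trigger transition edge is its only exit.
\end{enumerate}
With these changes the index coincides with the monitor in the $\mathbf{a}$-coordinate and lags it only in the $S$-coordinate. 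Your monitor lemma plus the monotonicity of $\mathbf{D}$ in $S$ then yields soundness. Your completeness lifting (now using the forced transition at the first visit) and your cost argument go through as written.
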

\begin{proof} See Appendix~\ref{app:variation_proofs}. \end{proof}

\begin{remark}
  The conditional specification subsumes the base release specification as the special case $A_i \equiv \top$ (trigger always active from the start), recovering $a_i = 1$ for all $i$ at $t = 0$.
\end{remark}

\begin{remark}
  Chained conditionals of the form $A_i \Rightarrow (B_i \Rightarrow (K_i \,\mathcal{R}\, \neg D_i))$ are encodable by composing trigger vectors: the state becomes $(\mathbf{a}^{(1)}, \mathbf{a}^{(2)}, S)$ with $|\mathbf{a}^{(1)}| = |\mathbf{a}^{(2)}| = n$, yielding subgraph count $O(4^n \cdot 2^{\nK})$. Each level of nesting multiplies the trigger component of the state space by a factor of 2, remaining singly exponential in total atom count.
\end{remark}

\begin{remark}
    Two sub-variations replace the release consequent with stronger obligations.
    The first uses an until consequent,
    \begin{equation}\label{eq:cond_spec_until}
      \varphi_{\mathrm{cond},\mathcal{U}} = \bigwedge_{i=1}^{n}\Bigl(A_i \Rightarrow (\neg D_i \,\mathcal{U}\, K_i)\Bigr) \wedge \mathcal{F}(\mathcal{T}),
    \end{equation}
    where the trigger incurs an \emph{obligation} to collect the key before the door may be passed.
    The second omits the door entirely,
    \begin{equation}\label{eq:cond_spec_future}
      \varphi_{\mathrm{cond},\mathcal{F}} = \bigwedge_{i=1}^{n}\Bigl(A_i \Rightarrow \mathcal{F}(K_i)\Bigr) \wedge \mathcal{F}(\mathcal{T}),
    \end{equation}
    requiring only that $K_i$ be visited before mission completion.
    Both sub-variations are naturally interpreted as a sensor-based event when the expected measurement is known at planning time: the trigger fires when a measurement at the current position exceeds a threshold, incurring an obligation to perform a subsequent action $K_i$.
    For example, a decontamination robot detecting radiation above a threshold at location $A_i$ incurs an obligation to visit a decontamination station $K_i$ before returning to base; or an inspection drone detecting a blade anomaly at turbine $A_i$ incurs an obligation to acquire a confirmatory close-range observation at $K_i$. When the trigger condition is unknown until execution, the obligation $\mathcal{F}(K_i)$ can be injected dynamically via event-triggered replanning.
\end{remark}

\begin{remark}[Combining and Composing Variations]
The variations are largely orthogonal and can be combined for form highly expressive mission specifications. Timed constraints compose cleanly with all other variations because
  they modify only convex sets, not graph topology.  For example,
  combining with Variation~2 (ordered keys) requires only appending
  time windows to the linear chain node sets.
\end{remark}

\subsection{Fragment Admissibility and the Double-Exponential Barrier}
\label{subsec:fragment}

The computational advantage of the augmented GCS over general-purpose temporal logic motion planning tools stems from a structural property of the specification fragments it targets.  The classical pipeline for arbitrary STL or LTL motion planning faces two sequential exponential blowups: constructing a finite automaton from the formula costs $2^{O(|\varphi|)}$ in formula length, and forming the product of that automaton with the planning state space multiplies the exponential again, yielding double-exponential complexity in the worst case \cite{baier2008principles}.  The augmented GCS avoids the double blowup by building the subgraph structure directly from the semantics of the formula, exploiting the fact that each specification variation in Section~\ref{sec:variations} admits a \emph{progress monitor} (a deterministic finite-state tracker of which actions have been completed and which regions are currently accessible) whose state space is at most singly exponential in the number of logical atoms. The augmented GCS subgraph lattice is precisely this progress monitor embedded in the continuous geometry, giving a construction that is singly exponential in $\nK$ by design.

This \emph{bottom-up} philosophy --- identifying specific fragments whose structure can be exploited directly, rather than applying a top-down pipeline to arbitrary formulas --- is the organizing principle behind the specification library developed here.  Each variation in Section~\ref{sec:variations} represents a fragment for which the progress monitor is known, the modification to the augmented GCS construction is explicit and tractable, and the correctness proof is tractable. The natural question is how far this library can be extended: which additional STL fragments admit singly exponential progress monitors and therefore tractable augmented GCS encodings, and where does the double-exponential barrier become unavoidable?  Precisely characterizing this \emph{admissibility boundary}, including a formal definition of admissible fragments, a hierarchy organizing them by the algebraic structure of their progress states, and syntactic sufficient conditions for admissibility, is the subject of ongoing work and will be developed in a companion paper expanding the specification library.

\section{Numerical Experiments}

In this section, we demonstrate and evaluate the augmented GCS framework for solving two key-door benchmarks from \cite{kurtz2023temporal}, procedurally generated key-door mazes, and a hand-designed key-door environment from a classic video game. Experiments were performed on a laptop with an Apple M2 CPU and 8GB RAM, unless otherwise noted. The underlying convex optimization solver was MOSEK \cite{mosek}, accessed via the GCS components from the Drake \cite{drake} Python bindings. The relaxed optimality gap is reported as
\[
  \delta_{\mathrm{relax}} = \frac{C_{\mathrm{round}} -
    C_{\mathrm{relax}}}{C_{\mathrm{relax}}},
\]
where $C_{\mathrm{relax}}$ lower-bounds the optimal value and
$C_{\mathrm{round}}$ is the rounded solution cost.  We use the rounding implementation from \cite{marcucci2023motion} with 100 trials and 10 max paths. All experiments minimize path length only, with $\alpha = 1$ and $\beta=0$ in problem \eqref{optprob}. The source code is
publicly available at
\path{https://github.com/TSummersLab/augmented-gcs.git}.

\subsection{Key-Door Environments from \cite{kurtz2023temporal}}
\textbf{2-Key Environment.} We first consider the simple key-door environment with two keys and two doors shown in Fig. \ref{fig:simple key door}. The robot must pick up two keys to open two doors in order to reach the target set, which can be written as the STL formula $\varphi= \mathcal{K}_1 \mathcal{R} \neg \mathcal{D}_1 \wedge \mathcal{K}_2 \mathcal{R} \neg \mathcal{D}_2 \wedge \mathcal{F} \, \mathcal{T}$.

We created an exact convex partition and labeled GCS by inspection and used Algorithm \ref{Algo1} to construct the augmented GCS, shown in Fig. \ref{fig:augGCS}. The augmented GCS construction time is \textbf{0.0108} seconds and the solving time is \textbf{0.249} seconds, certified globally optimal with $\delta_{\mathrm{relax}}=0$. Compared with an approach that uses general-purpose temporal logic tools \cite{kurtz2023temporal}, our augmented GCS construction is about 80$\times$ faster. 

\textbf{5-Key Environment.}
We also solved a 5 key-door environment from \cite{kurtz2023temporal}. The STL formula can be written as
\begin{equation}
\begin{split}
    \varphi = \bigwedge_{i=1}^{5}
    \bigl(K_i \, \mathcal{R} \, \lnot D_i\bigr) \land \mathcal{F}(\mathcal{T}).
\end{split}
\end{equation}
We extracted an exact convex partition by inspection, which produces a graph with 7 free space cells, 5 key cells, 5 door cells. The augmented GCS constructed by Algorithm \ref{Algo1} contains six layers with nine subgraphs.  The augmented GCS construction time is \textbf{0.0423} seconds and the solving time is \textbf{1.573} seconds, certified globally optimal with $\delta_{\mathrm{relax}}=0$. Compared with an approach that uses general-purpose temporal logic tools \cite{kurtz2023temporal}, our augmented GCS construction is more than 50000$\times$ faster, and our solve time is about 3.5$\times$ faster.

More generally, compared with the top-down, full-grammar pipeline targeted by \cite{kurtz2023temporal}, our bottom-up approach is \emph{exponentially} faster by exploiting the structure in the specification fragment, which will compound further with problem size. This is due to the fundamental doubly exponential complexity of converting an arbitrary temporal logic specification into a finite automaton versus the singly exponential complexity of our fragments, matching BHK. 

\begin{table*}
\centering
\caption{Numerical Experiments with Key-Door Mazes}
\begin{tabular}{>{\centering\arraybackslash}m{0.03\textwidth}
                >{\centering\arraybackslash}m{0.07\textwidth}
                >{\centering\arraybackslash}m{0.05\textwidth}
                >{\centering\arraybackslash}m{0.05\textwidth}
                >{\centering\arraybackslash}m{0.04\textwidth}
                >{\centering\arraybackslash}m{0.04\textwidth}
                >{\centering\arraybackslash}m{0.07\textwidth}
                >{\centering\arraybackslash}m{0.07\textwidth}
                >{\centering\arraybackslash}m{0.09\textwidth}
                >{\centering\arraybackslash}m{0.09\textwidth}
                >{\centering\arraybackslash}m{0.06\textwidth}}
\toprule
ID & Maze Size & $|\mathcal{V}|$ & $|\mathcal{E}|$ & $\#$ of keys & Max Width & $|\hat{\mathcal{V}}|$ & $|\hat{\mathcal{E}}|$ & Form GCS (sec) & Solve (sec) & $\delta_{relax}$ $(\%)$ \\
\midrule
1 & \multirow{3}{*}{9 $\times$ 9} & 18 & 34 & 1 & 1 & 37 & 68 & 0.00433 & 0.0396 & 0 \\
2 &  & 21 & 40 & 2 & 2 & 85 & 154 & 0.00561 & 0.0495 & 0 \\
3 &  & 23 & 44 & 3 & 3 & 185 & 330 & 0.0123 & 0.132 & 0 \\
\midrule
4 & \multirow{5}{*}{19 $\times$ 19} & 63 & 124 & 1 & 1 & 127 & 248 & 0.00763 & 0.0566 & 0 \\
5 &  & 66 & 130 & 2 & 1 & 199 & 384 & 0.0119 & 0.0695 & 0 \\
6 &  & 72 & 142 & 3 & 2 & 361 & 690 & 0.0225 & 0.166 & 0 \\
7 &  & 74 & 146 & 5 & 10 & 2369 & 4514 & 0.169 & 2.71 & 0 \\
8 &  & 69 & 136 & 4 & 6 & 1105 & 2114 & 0.0764 & 1.71 & 1.06 \\
\midrule
9 & \multirow{8}{*}{29 $\times$ 29} & 129 & 256 & 1 & 1 & 259 & 512 & 0.0152 & 0.0975 & 0 \\
10 &  & 145 & 288 & 3 & 3 & 1161 & 2282 & 0.0749 & 1 & 0 \\
11 &  & 149 & 296 & 6 & 2 & 1193 & 2294 & 0.0711 & 0.652 & 0 \\
12 &  & 149 & 320 & 6 & 14 & 7153 & 15104 & 0.568 & 35.8 & 0 \\
13 &  & 149 & 320 & 6 & 10 & 5365 & 11280 & 0.415 & 26.1 & 1.97 \\
14 &  & 154 & 306 & 10 & 2 & 2157 & 4042 & 0.128 & 3.74 & 0 \\
15 &  & 163 & 330 & 10 & 6 & 6358 & 12362 & 0.439 & 29.6 & 0.106 \\
16 &  & 162 & 328 & 10 & 20 & 12799 & 24458 & 0.981 & 106 & 1.72 \\
\midrule
17 & \multirow{8}{*}{39 $\times$ 39} & 242 & 482 & 1 & 1 & 485 & 964 & 0.0297 & 0.184 & 0 \\
18 &  & 246 & 490 & 5 & 1 & 1477 & 2892 & 0.0867 & 0.613 & 0 \\
19 &  & 251 & 500 & 5 & 10 & 8033 & 15842 & 0.612 & 45.5 & 0.282 \\
20 &  & 243 & 484 & 7 & 2 & 2188 & 4252 & 0.129 & 1.41 & 0 \\
21 &  & 249 & 496 & 7 & 35 & 31873 & 62594 & 2.7 & 597 & 1.93 \\
22 &  & 255 & 508 & 10 & 3 & 5101 & 9838 & 0.325 & 18.9 & 0.915 \\
23 &  & 264 & 536 & 10 & 6 & 8185 & 16282 & 0.584 & 48.8 & 0 \\
24 &  & 264 & 552 & 10 & 35 & 37489 & 75922 & 3.22 & 871 & 1.1 \\
\midrule
25 & \multirow{2}{*}{99 $\times$ 99} & 1458 & 2914 & 1 & 1 & 2917 & 5828 & 0.182 & 1.45 & 0 \\
26 &  & 1490 & 2978 & 3 & 2 & 7451 & 14874 & 0.468 & 9.64 & 0 \\
\midrule
27 & \multirow{3}{*}{199 $\times$ 199} & 5946 & 11890 & 3 & 1 & 23785 & 47544 & 1.45 & 41.2 & 0 \\
28 &  & 5988 & 11974 & 3 & 3 & 47905 & 95770 & 3.28 & 552 & 0 \\
29 &  & 5992 & 11982 & 3 & 2 & 29961 & 59890 & 1.87 & 119 & 0 \\
\bottomrule
\end{tabular}
\label{tab:experiments results}
\end{table*}

\begin{figure}
  \centering
  \begin{subfigure}[b]{0.45\linewidth}
    \centering
    \includegraphics[width=\linewidth]{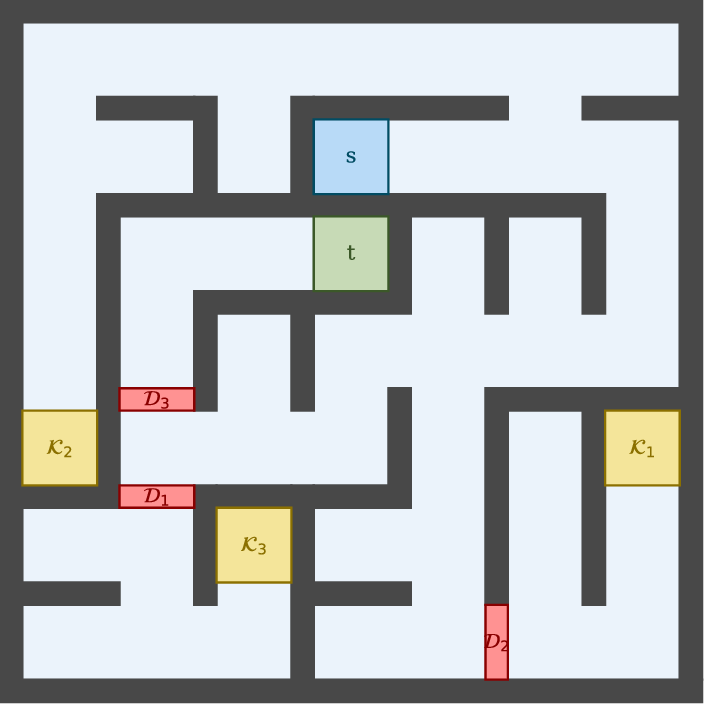}
    \caption{}
    \label{fig:maze_for_graph}
  \end{subfigure}
  \hfill
  \begin{subfigure}[b]{0.45\linewidth}
    \centering
    \includegraphics[width=\linewidth]{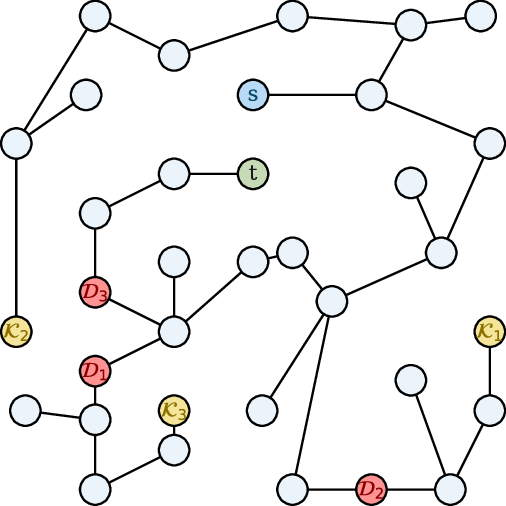}
    \caption{}
    \label{fig:graph_of_maze}
  \end{subfigure}
  \caption{(a) key-door maze, (b) graph of maze.}
  \label{fig:maze_graph_representation}
\end{figure}

\subsection{Numerical Experiments on Key-Door Mazes}
To further evaluate the augmented GCS framework, we developed a maze benchmark generator (described in
Appendix~\ref{app:maze_generator}) that produces key-door maze environments
of controllable size and width. Fig. \ref{fig:maze_graph_representation} shows a partitioned and merged maze and its associated graph.




We evaluate the performance of the proposed approach on a set of key-door mazes created by the our generator. The convex free space partition and labeled GCS are constructed from the maze generation process without requiring Algorithm \ref{alg:partition}. From the labeled GCS for each maze, we used Algorithm \ref{Algo1} to construct the augmented GCS and then solved a shortest path problem on the augmented GCS based on \cite{marcucci2023motion,marcucci2024shortest} using Drake and Mosek. Table \ref{tab:experiments results} shows the results of 29 experiments. For each experiment, we report the maze size; number of vertices in the base GCS $|\mathcal{V}|$; number of edges in the base GCS $|\mathcal{E}|$; number of keys $\nK$; maximum width of the augmented GCS; number of vertices in the augmented GCS $|\hat{\mathcal{V}}|$; number of edges in the augmented GCS $|\hat{\mathcal{E}}|$; the augmented GCS construction time; the augmented GCS solving time; and an upper bound on the optimality gap $\delta_{\mathrm{relax}}$. Fig.~6 shows a 3-key, 3-door, 99$\times$99 maze, which we solve we solve to global optimality within $1.5s$.

\begin{figure}
  \centering
  \includegraphics[width=0.85\linewidth]{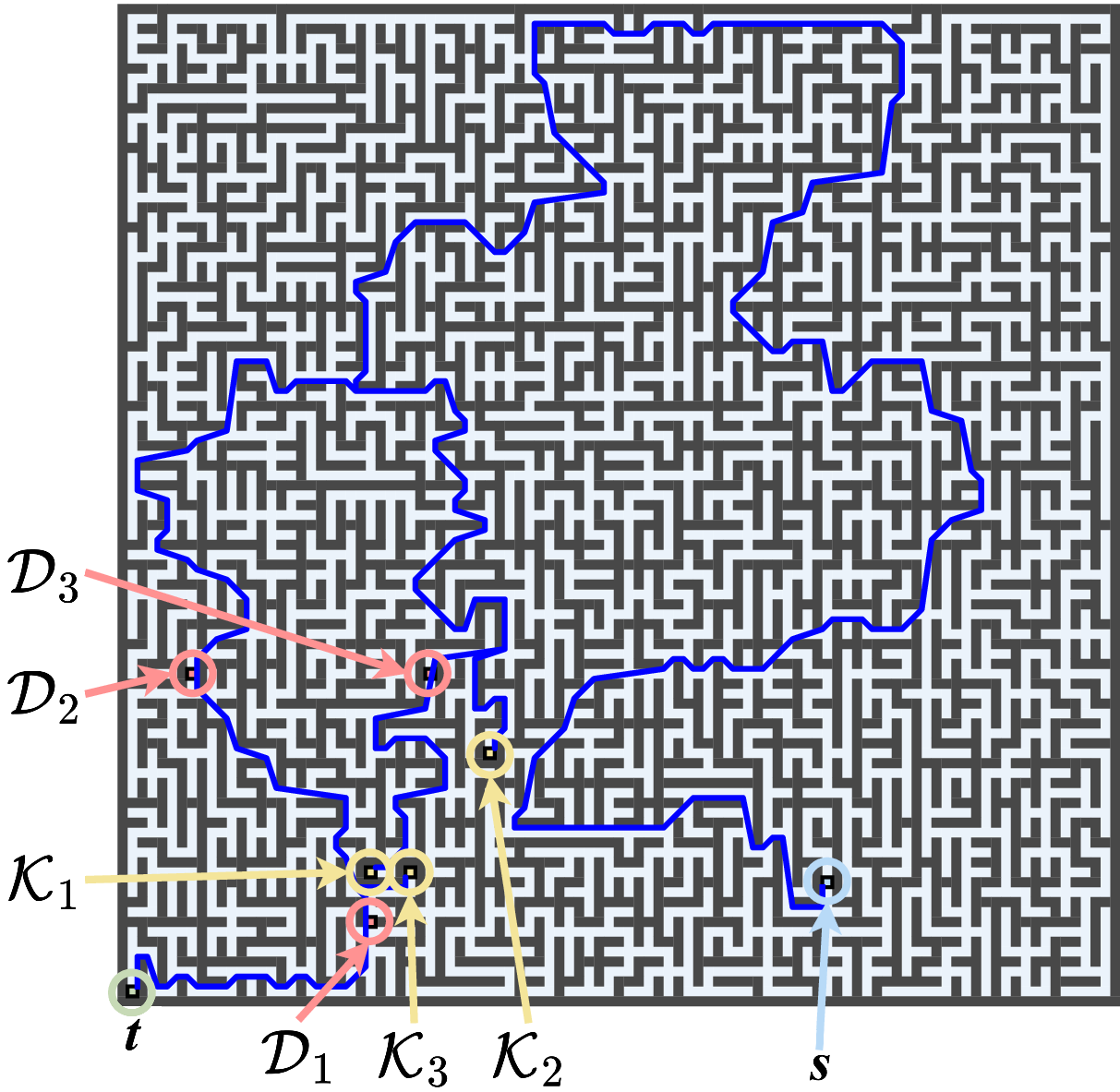}
  \caption{A 99$\times$99 maze with 3 keys and 3 doors.}
  \label{fig:large maze}
\end{figure}

\textbf{Discussion.}
We make several observations based on the results in Table \ref{tab:experiments results}. First, our approach scales to far larger problems than reported in \cite{kurtz2023temporal}, obtaining optimal or near-optimal solutions for up to $6$k-node base graphs, up to 10 keys, and up to an augmented GCS max width of 35. The augmented GCS construction times are within a few seconds across all instances, and the solve times scale mainly with the number of augmented GCS nodes and edges, which strongly depend on the maximum width and reflect the complexity of mission logic; this is illustrated graphically in Fig.~\ref{fig:scaling plot}. The optimality gap is certified to within 2\% of the global optimum across all instances, with many cases globally optimal.

\begin{figure}
  \centering
  \hspace*{-0.8cm}
  \includegraphics[width=0.9\linewidth]{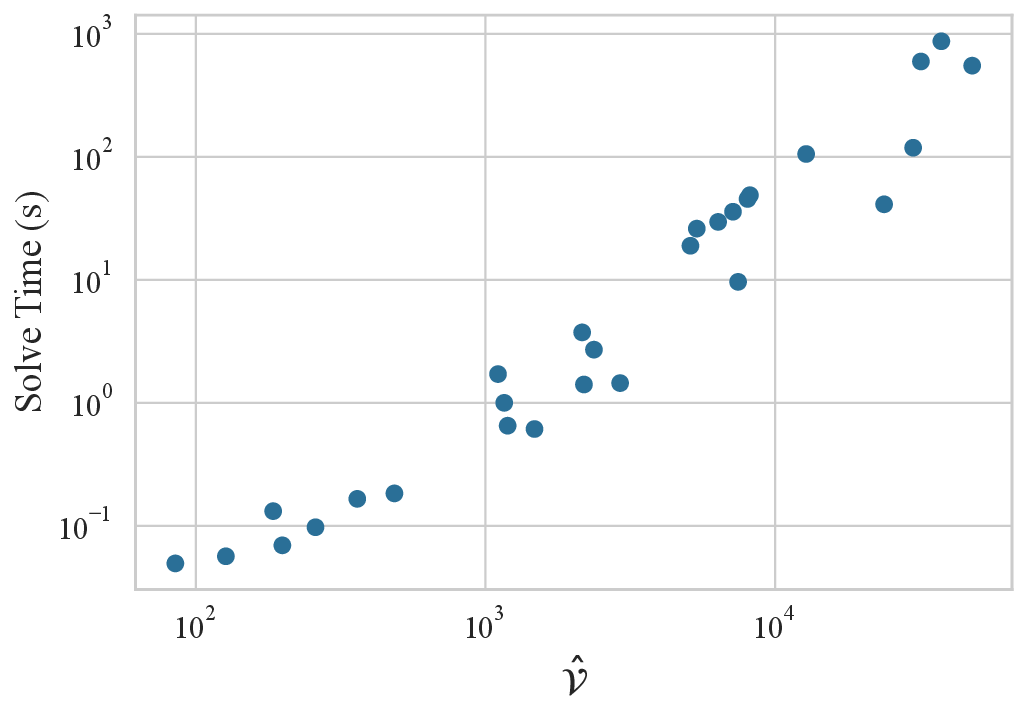}
  \caption{Log-log plot showing singly exponential growth of solve time vs. augmented GCS node count, which depends strongly on max width.} 
  \label{fig:scaling plot}
\end{figure}

\subsection{Pure Wayset Collection (TSP) Experiments}
\label{subsec:tsp_exp}
We now illustrate Variation 2 (pure wayset collection (TSP)) from the specification library in Section \ref{sec:variations}. Here, the augmented GCS produces the full $2^{\nK}$ BHK subgraph lattice with maximum width, which allows us to push Algorithm \ref{Algo1} to its limit. These experiments were performed on a laptop with an Intel Core Ultra 7 258V and 32 GB of RAM.

We generated random instances of region-TSP with $\nK \in \{3,5,7,9,11\}$ waysets and no obstacles. Each instance uniformly places $\nK$ center points in $[0, 1]^2$ and then creates polytopic waysets from a convex hull of random points around the center. Since there are no obstacles, each subgraph is a complete graph on key nodes with $\nK(\nK-1)$ edges. 
For each instance the shortest path problem on the augmented GCS is solved using Drake \cite{drake} and Mosek \cite{mosek}, using the first key vertex in $\Vcal$ as both the start and target. Table \ref{tab:tsp_exp} shows the results from the 500 experiments, 100 per $\nK \in \{3,5,7,9,11\}$. For each $\nK$ we report the number of vertices and edges in the augmented graph, the augmented GCS construction time, and the mean and medium solve time and optimality gap. Figure \ref{fig:tsp_11} shows one instance of $\nK=11$, with certified globally optimal trajectory.


\begin{table}[t]
  \caption{Pure Waypoint Collection (TSP) Experiments}
  \label{tab:tsp_exp}
  \centering
  \small
  \setlength{\tabcolsep}{3pt}
  \begin{tabular}{cccccccc}
    \toprule
    $\nK$ & $|\hat\Vcal|$ & $|\hat\Ecal|$ 
    & Form (s) 
    & \multicolumn{2}{c}{Solve (s)} 
    & \multicolumn{2}{c}{$\delta_\text{relax}$ (\%)} \\
    \cmidrule(lr){5-6} \cmidrule(lr){7-8}
    & & 
    & 
    & mean & median 
    & mean & median \\
    \midrule
    3 & 12 & 28 & 0.001 & 0.016 & 0.016 & 0.000 & 0.000 \\
    5 & 80 & 352 & 0.008 & 0.204 & 0.200 & 0.117 & 0.000 \\
    7 & 448 & 2880 & 0.057 & 2.194 & 2.116 & 0.005 & 0.000 \\
    9 & 2304 & 19456 & 0.395 & 31.17 & 28.04 & 0.113 & 0.000 \\
    11 & 11264 & 117760 & 4.172 & 191.6 & 173.7 & 0.047 & 0.000 \\
    \bottomrule
  \end{tabular}
\end{table}



\begin{figure}
  \centering
  \includegraphics[width=0.85\linewidth]{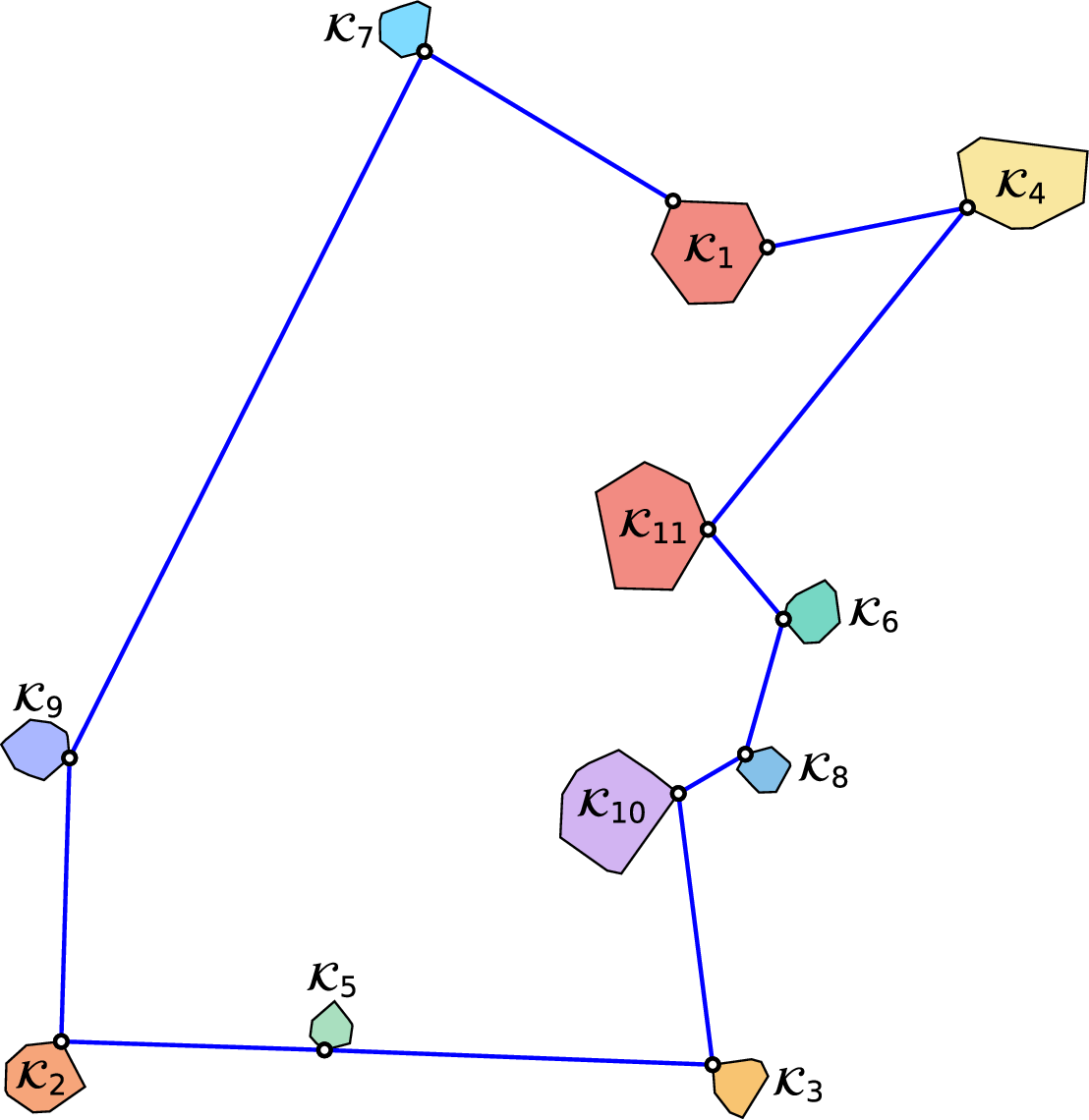}
  \caption{Optimal wayset solution for $\nK=11$.}
  \label{fig:tsp_11}
\end{figure}

\textbf{Discussion.} Table \ref{tab:tsp_exp} shows singly exponential scaling with $\nK$ for both augmented GCS size and solve times, matching the BHK algorithm. The optimality gaps show that most instances are certified globally optimal. Further analysis reveals that non-zero optimality gap is typically due to the relaxation, not the rounding. 

Although the singly exponential scaling of our approach is far better than the doubly exponential scaling of top-down approaches for arbitrary STL formulas, these results strongly motivate the use of heuristics and pruning strategies that trade solution quality for speed. The BHK correspondence suggests that 70 years of research on TSP approximation algorithms may provide inspiration for scalable
approximation schemes with suboptimality bounds.
An exploration of how such heuristics could be translated to the augmented GCS framework was initiated in \cite{luna2026augmented}, where high quality solutions can be obtained for up to $\nK = 100$ waysets within one minute solve times.

\subsection{Hand-Designed Maze: Legend of Zelda Eagle Dungeon}
\label{subsec:zelda}
 
Procedurally generated mazes provide statistical coverage of the
augmented GCS parameter space, but their
key-door dependencies are placed algorithmically rather than
designed by a human expert.  To complement the maze benchmark, we
consider the Eagle Dungeon (Level 1) from \emph{The Legend of
Zelda} (Nintendo, NES, 1986. Designers: Shigeru Miyamoto and Takashi Tezuka), a canonical example of a
human-designed environment in which sequential access constraints
arise intentionally from the designer's puzzle logic.  Video game
dungeons are a particularly apt benchmark
class for key-door planning frameworks, since game designers invest
substantial effort calibrating precedence structures to be
non-trivial but tractable, a regime where
algorithmic methods should demonstrate clear value.
 
\subsubsection*{Environment description}
The Eagle Dungeon, shown in Fig.~\ref{fig:eagle dungeon}, consists of a grid of interconnected rooms arranged
on a $16 \times 16$ tile map.  Each room contains obstacles made of one or more axis-aligned tiles. 
We manually extracted the half-space representations of all obstacles and obtained a polytopic
partition of the free space using
Algorithm~\ref{alg:partition} with greedy merging \cite{shaikh2025exact}. The dungeon contains $\nK = 6$ small keys,
$n_D = 6$ locked doors; the terminal goal is the Triforce room at the dungeon's end (top right).
The start position is the dungeon entrance at the bottom. The resulting labeled graph, shown in Fig.~\ref{fig:zelda dungeon GCS}, has $|\mathcal V| = 153$
vertices and $|\mathcal E| = 440$ edges. The dungeon's room-graph and obstacle structure produces a labeled
GCS with a substantial number of free-space vertices relative to the
number of logical constraints, placing it in a qualitatively
different regime from the mazes of
Table~\ref{tab:experiments results}. 

\begin{figure}
  \centering
  \includegraphics[width=0.95\linewidth]{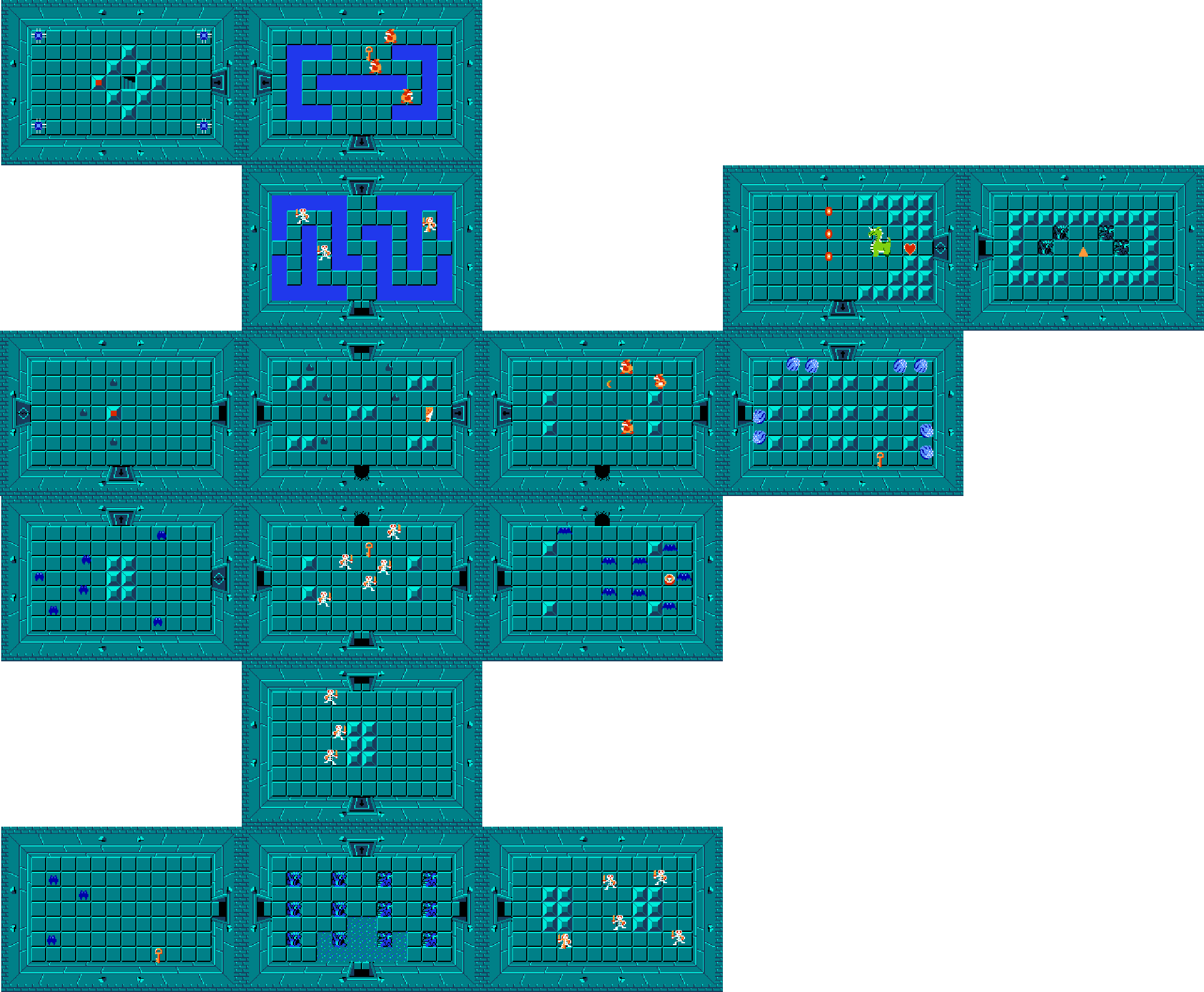}
  \caption{\emph{The Legend of Zelda} Eagle Dungeon (Nintendo, 1986).}
  \label{fig:eagle dungeon}
\end{figure}

\begin{figure}
  \centering
  \includegraphics[width=0.95\linewidth]{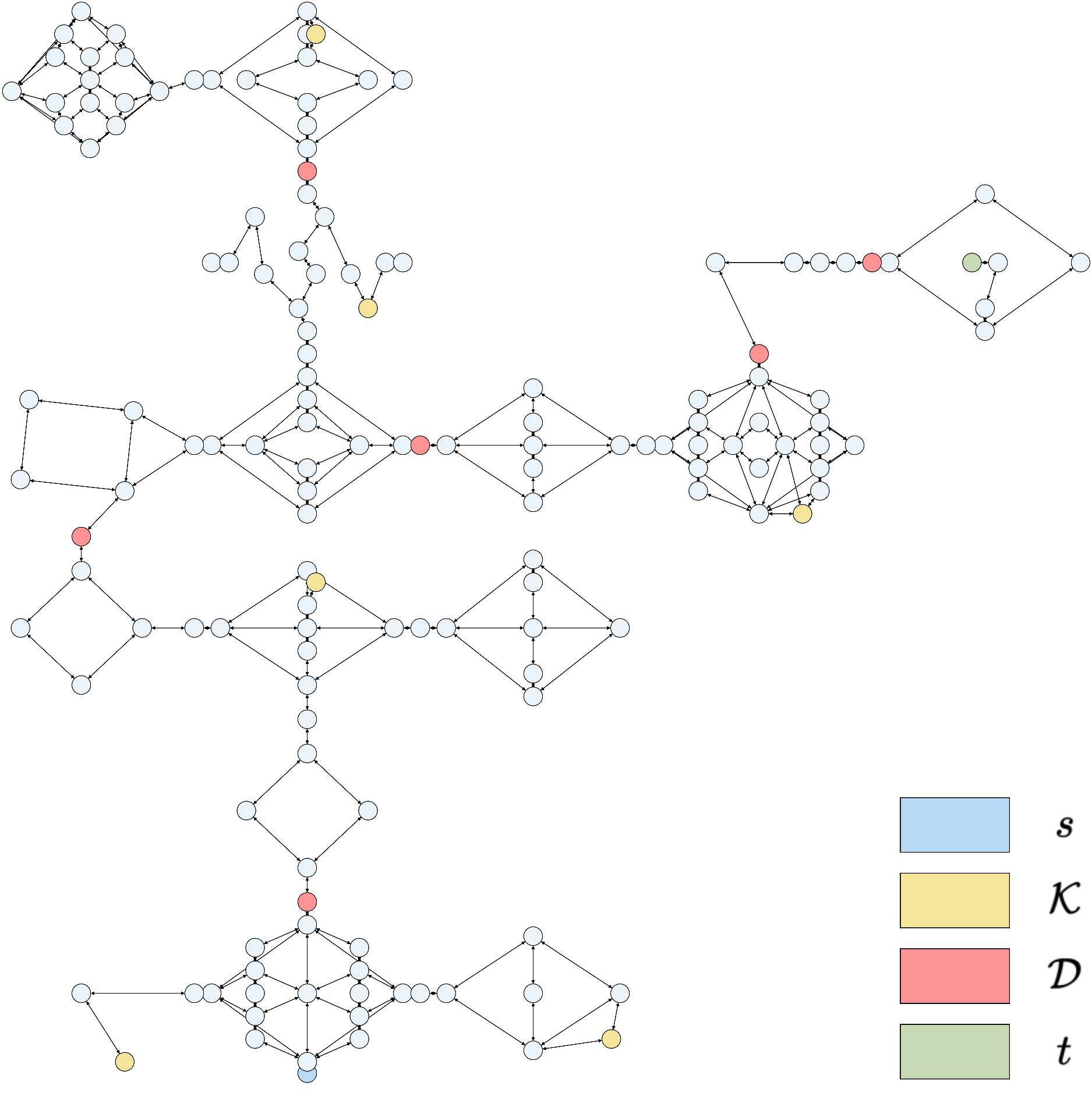}
  \caption{Base GCS of the Eagle Dungeon.}
  \label{fig:zelda dungeon GCS}
\end{figure}

\begin{figure}
  \centering
  \includegraphics[width=0.7\linewidth]{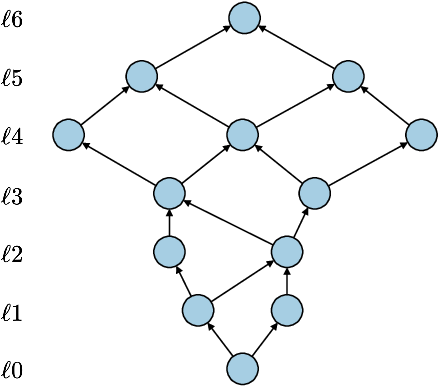}
  \caption{Augmented GCS of the Eagle Dungeon.}
  \label{fig:zelda dungeon AGCS}
\end{figure}

\begin{figure}
  \centering
  \includegraphics[width=0.95\linewidth]{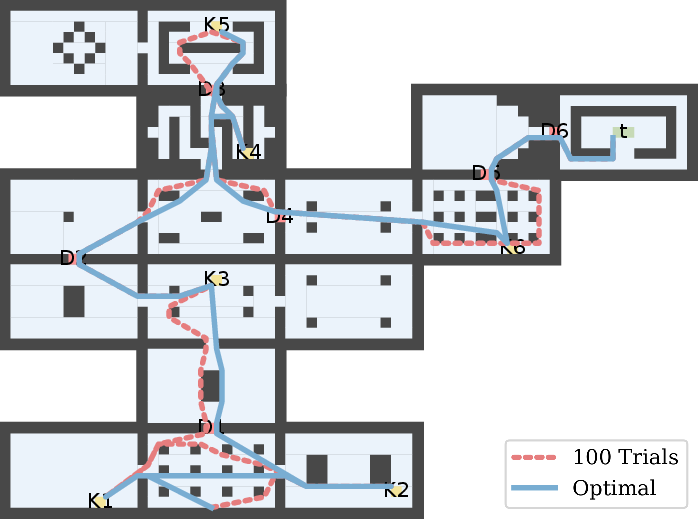}
  \caption{The solid blue line shows the global optimal solution. The dashed red line shows a feasible solution obtained from 100 rounded trials, with relaxation  gap 17.9\% and true optimality gap 9.53\%.}
  \label{fig:zelda dungeon linear solution}
\end{figure}

\begin{table}[t]
\centering
\caption{Zelda Dungeon Experiments}
\label{tab:zelda_experiments}
\renewcommand{\arraystretch}{1.3}
\setlength{\tabcolsep}{8pt}
\begin{tabular}{cccc}
\toprule
Rounded Trials & 
Solve (s) & 
$\delta_{\text{relax}}$ (\%) &
$\delta$ (\%) \\
\midrule
10   & 10.26  & 24.2 & 15.1\\
100  & 73.61  & 17.9  & 9.53\\
1000 & 713.2  & 11.9  & 3.67\\
\bottomrule
\end{tabular}
\end{table}
 
 
\subsubsection*{Specification}
 
The small keys in the Eagle Dungeon are \emph{fungible} and \emph{consumable}: any small
key opens any locked small-key door within the dungeon, and a single small key is consumed upon passage through
any locked door and cannot be reused.  For the
purposes of this experiment we adopt a simplified specification that
assigns each small key a fixed identity (key $K_i$ unlocks door
$D_i$)\footnote{We also omit the dungeon's key item (in this case, the bow) and move one locked door from the top left towards the item room to the top right towards the triforce room. In the original game, the Eagle Dungeon can be completed without acquiring the bow.}, which is an instance of the base release specification
$\varphi_{\mathrm{base}}$ from~\eqref{eq:phi_base}:
\begin{equation}
  \label{eq:phi_zelda}
  \varphi_{\mathrm{Eagle}} =
    \bigwedge_{i=1}^{6}
    \bigl(K_i \, \mathcal{R} \, \lnot D_i\bigr)
    \land \mathcal{F}(\text{triforce}).
\end{equation}
The full Zelda mechanic is a richer
specification than~\eqref{eq:phi_zelda} and lies outside the
variation library of the current paper.  The simplified
specification~\eqref{eq:phi_zelda} therefore represents a relaxation
of the true game logic in which key identities are fixed rather than
fungible.  We discuss the implications of this gap and the path
toward closing it below.
The augmented GCS associated with this specification is obtained from Algorithm \ref{Algo1}, has max width 3, and is shown in Fig.~\ref{fig:zelda dungeon AGCS}.
Solving a convex relaxation of the shortest path problem on the augmented GCS with 100 rounding trials in about 10 seconds yields the feasible trajectory in Fig.~\ref{fig:zelda dungeon linear solution} with relaxation gap $\delta_{\mathrm{relax}} = $ 17.9\%. Also shown is a globally optimal path, obtained by solving the exact mixed-integer convex reformulation of the shortest path problem on the augmented GCS, requiring about 6 hours.
 
\subsubsection*{Discussion}
 
The convex relaxation exhibits a relatively large relaxation gap on the Eagle Dungeon compared to procedurally generated maze instances. We attribute this mainly to the environment geometry. In particular, certain rooms contain numerous small obstacles, resulting in a fine partition of the free space into many small interconnected convex cells. This induces many distinct paths between room entrances and exits. In the GCS relaxation, flow is distributed fractionally across these alternative routes, and the rounding procedure recovers a discrete path by sampling edges according to their fractional weights. With a limited number of trials, the algorithm explores only a small portion of this combinatorial path space, often yielding suboptimal or duplicated paths. Increasing the number of rounding trials (e.g., from 100 to 1000) improves coverage of candidate paths and reduces the optimality gap to $11.9\%$, at the cost of substantially increased computation time; see Table \ref{tab:zelda_experiments}.

Furthermore, we observe that the relaxation itself is relatively loose in this instance. Solving the mixed-integer program yields a globally optimal cost of $192.8$, while the relaxation cost is $178.6$. Thus, it is worth emphasizing that the relaxation gap may be attributed to either suboptimality or poor quality of the lower bound (or both). This highlights that the observed optimality gap is influenced not only by the effectiveness of the rounding procedure, but also by the tightness of the convex relaxation, both of which are adversely affected by the highly fragmented structure of the environment.

Investigating relaxation tightness on human-designed environments with irregular geometry (as opposed to procedurally generated mazes) is an interesting direction for future work, and motivates developing tighter mixed-integer formulations or problem-specific rounding strategies for this instance class.
 
\subsubsection*{Specification gap and future directions}
 
The Eagle Dungeon highlights a natural limit of the current
specification library: the fungible-consumable-key mechanic, in which a
single key is expended upon door passage and cannot be reused,
is not captured by any of the eight variations in
Section~\ref{sec:variations}.  Extending the augmented GCS
framework to fungible and consumable keys, and more generally developing the
full specification library needed to handle the rich precedence
structures found in human-designed environments, is a natural
direction for future work.
 
More broadly, human-designed environments (game dungeons,
building floor plans with access-controlled zones, multi-stage
inspection routes) represent a benchmark class qualitatively
distinct from procedurally generated instances.  Their key-door
structures are crafted to be challenging yet solvable, their
geometry is irregular and purposeful, and their optimal solutions
carry semantic meaning that procedural benchmarks lack.  We view
the Eagle Dungeon as an initial representative of this class and
anticipate that a systematic benchmark library drawn from
hand-designed environments will be a productive avenue for
evaluating and extending the augmented GCS framework.

\section{Conclusion}
\label{sec:conclusion}

We presented a framework for motion planning with logical precedence
specifications based on augmented graphs of convex sets.  The
augmented GCS encodes key-door precedence constraints exactly in the graph
topology, enabling the GCS shortest-path machinery to simultaneously
optimize key-collection sequencing and continuous trajectory geometry.
We provided formal proofs of optimality, established a precise
correspondence with the Bellman--Held--Karp dynamic programming
algorithm that reveals the augmented GCS as a continuous-geometry
generalization of the shortest Hamiltonian path and TSP problems, and
developed a library of eight systematic variations based on STL fragments that admit
tractable augmented GCS encodings.  Extensive numerical experiments
confirmed exponential speedup over general-purpose
temporal logic tools and solution quality within 2\% of globally
optimal across almost all tested instances.

Several limitations of the current framework motivate future work.
Scaling to environments with large numbers of keys and high augmented
GCS width will likely require heuristics or approximations that
trade off computation time with solution quality; the BHK
correspondence suggests that TSP approximation algorithms may provide inspiration for systematic
approximation schemes with suboptimality bounds. The convex partition
quality (number of free-space nodes) significantly affects solve
time, and adaptive or online partition refinement strategies would
be valuable.  Extending the framework to dynamic environments,
stochastic settings, and multi-agent problems, where the
key-door precedence structure and GCS convexity can still be
exploited, are natural directions.  Finally, integrating the
offline augmented GCS planning with real-time feedback control and
online replanning for environments with uncertainty and disturbances
remains an important open challenge.

\section*{Acknowledgements}
\label{sec:acknowledgements}
The authors gratefully acknowledge the contributions of J.~Shaikh, D.~Gostin, Y.~Xiang, and J.~Koeln to the base framework in \cite{you2025motion}.

During the final phase of manuscript prepration, following the completion of core theoretical development and numerical experiments, the authors utilized Claude Sonnet 4.6 (Anthropic) as a tool for high-level conceptual exploration, and additionally for prose refinement. This workflow was built on highly specific, non-trivial technical prompts grounded in the rich context of the authors' prior work and expertise across optimization, motion planning, dynamics, and feedback control, and especially the corresponding author's extensive research experience and deep domain expertise. Critically, the AI served as an amplifier and concentrating lens of focused human expertise, functioning not only as a force multiplier but also as a kind of intellectual microscope: enabling deeper, faster navigation of the theoretical landscape without substituting for the authors' own reasoning. All AI-facilitated suggestions underwent rigorous manual interpretation and independent verification by all authors. In particular, all mathematical formulations, theoretical results, and proofs originated from and were validated by the human authors; all core algorithmic logic in the numerical experiments was handcrafted by the authors. The authors assume full intellectual responsibility for all claims in this work.

\bibliographystyle{IEEEtran}
\bibliography{bibliography}

@article{marcucci2023motion,
  title={Motion planning around obstacles with convex optimization},
  author={Marcucci, Tobia and Petersen, Mark and Von Wrangel, David and Tedrake, Russ},
  journal={Science robotics},
  volume={8},
  number={84},
  pages={eadf7843},
  year={2023},
  publisher={American Association for the Advancement of Science}
}

@article{marcucci2024shortest,
  title={Shortest paths in graphs of convex sets},
  author={Marcucci, Tobia and Umenberger, Jack and Parrilo, Pablo and Tedrake, Russ},
  journal={SIAM Journal on Optimization},
  volume={34},
  number={1},
  pages={507--532},
  year={2024},
  publisher={SIAM}
}

@article{marcucci2025unified,
  title={A Unified and Scalable Method for Optimization over Graphs of Convex Sets},
  author={Marcucci, Tobia},
  journal={arXiv preprint arXiv:2510.20184},
  year={2025}
}

@article{kurtz2023temporal,
  title={Temporal logic motion planning with convex optimization via graphs of convex sets},
  author={Kurtz, Vince and Lin, Hai},
  journal={IEEE Transactions on Robotics},
  volume={39},
  number={5},
  pages={3791--3804},
  year={2023},
  publisher={IEEE}
}

@article{you2025motion,
  title={Motion Planning with Precedence Specifications via Augmented Graphs of Convex Sets},
  author={You, Shilin and Luna, Gael and Shaikh, Juned and Gostin, David and Xiang, Yu and Koeln, Justin and Summers, Tyler},
  journal={arXiv preprint arXiv:2510.22015},
  year={2025}
}

@book{belta2017formal,
  title={Formal methods for discrete-time dynamical systems},
  author={Belta, Calin and Yordanov, Boyan and Gol, Ebru Aydin},
  volume={89},
  year={2017},
  publisher={Springer}
}

@inproceedings{shaikh2025exact,
  title={Exact Obstacle-Free Space Representation Using Hybrid Zonotopes},
  author={Shaikh, Juned and Gostin, David and Koeln, Justin P},
  booktitle={2025 American Control Conference (ACC)},
  pages={2522--2529},
  year={2025},
  organization={IEEE}
}

@book{baier2008principles,
  title={Principles of model checking},
  author={Baier, Christel and Katoen, Joost-Pieter},
  year={2008},
  publisher={MIT press}
}

@inproceedings{maler2004monitoring,
  title={Monitoring temporal properties of continuous signals},
  author={Maler, Oded and Nickovic, Dejan},
  booktitle={International symposium on formal techniques in real-time and fault-tolerant systems},
  pages={152--166},
  year={2004},
  organization={Springer}
}

@book{lavalle2006planning,
  title={Planning algorithms},
  author={LaValle, Steven M},
  year={2006},
  publisher={Cambridge University Press}
}

@book{latombe2012robot,
  title={Robot motion planning},
  author={Latombe, Jean-Claude},
  year={2012},
  publisher={Springer Science \& Business Media}
}

@article{bellman1962dynamic,
  title={Dynamic programming treatment of the travelling salesman problem},
  author={Bellman, Richard},
  journal={Journal of the ACM (JACM)},
  volume={9},
  number={1},
  pages={61--63},
  year={1962},
  publisher={ACM New York, NY, USA}
}

@article{held1962dynamic,
  title={A dynamic programming approach to sequencing problems},
  author={Held, Michael and Karp, Richard M},
  journal={Journal of the Society for Industrial and Applied mathematics},
  volume={10},
  number={1},
  pages={196--210},
  year={1962},
  publisher={SIAM}
}

@article{buck1943partition,
  title={Partition of space},
  author={Buck, Robert Creighton},
  journal={The American Mathematical Monthly},
  volume={50},
  number={9},
  pages={541--544},
  year={1943},
  publisher={Taylor \& Francis}
}

@article{geyer2008optimal,
  title={Optimal complexity reduction of polyhedral piecewise affine systems},
  author={Geyer, Tobias and Torrisi, Fabio D and Morari, Manfred},
  journal={Automatica},
  volume={44},
  number={7},
  pages={1728--1740},
  year={2008},
  publisher={Elsevier}
}

@article{avis1996reverse,
  title={Reverse search for enumeration},
  author={Avis, David and Fukuda, Komei},
  journal={Discrete applied mathematics},
  volume={65},
  number={1-3},
  pages={21--46},
  year={1996},
  publisher={Elsevier}
}

@book{buck2015mazes,
  title={Mazes for Programmers: Code Your Own Twisty Little Passages},
  author={Buck, Jamis},
  year={2015},
  publisher={Pragmatic Bookshelf}
}

@manual{mosek,
   author = "MOSEK",
   title = "The MOSEK Python Fusion API manual. Version 11.0.",
   year = 2025,
   url = "https://docs.mosek.com/latest/pythonfusion/index.html"
 }

@misc{drake,
 author = "Russ Tedrake and the Drake Development Team",
 title = "Drake: Model-based design and verification for robotics",
 year = 2019,
 url = "https://drake.mit.edu"
}

@article{garrett2021integrated,
  title={Integrated task and motion planning},
  author={Garrett, Caelan Reed and Chitnis, Rohan and Holladay, Rachel and Kim, Beomjoon and Silver, Tom and Kaelbling, Leslie Pack and Lozano-P{\'e}rez, Tom{\'a}s},
  journal={Annual review of control, robotics, and autonomous systems},
  volume={4},
  number={1},
  pages={265--293},
  year={2021},
  publisher={Annual Reviews}
}

@article{lavalle2001randomized,
  title={Randomized kinodynamic planning},
  author={LaValle, Steven M and Kuffner Jr, James J},
  journal={The international journal of robotics research},
  volume={20},
  number={5},
  pages={378--400},
  year={2001},
  publisher={SAGE Publications}
}

@article{karaman2011sampling,
  title={Sampling-based algorithms for optimal motion planning},
  author={Karaman, Sertac and Frazzoli, Emilio},
  journal={The international journal of robotics research},
  volume={30},
  number={7},
  pages={846--894},
  year={2011},
  publisher={Sage Publications Sage UK: London, England}
}

@article{kavraki2002probabilistic,
  title={Probabilistic roadmaps for path planning in high-dimensional configuration spaces},
  author={Kavraki, Lydia E and Svestka, Petr and Latombe, J-C and Overmars, Mark H},
  journal={IEEE transactions on Robotics and Automation},
  volume={12},
  number={4},
  pages={566--580},
  year={2002},
  publisher={IEEE}
}

@book{canny1988complexity,
  title={The complexity of robot motion planning},
  author={Canny, John},
  year={1988},
  publisher={MIT press}
}

@inproceedings{ratliff2009chomp,
  title={CHOMP: Gradient optimization techniques for efficient motion planning},
  author={Ratliff, Nathan and Zucker, Matt and Bagnell, J Andrew and Srinivasa, Siddhartha},
  booktitle={2009 IEEE international conference on robotics and automation},
  pages={489--494},
  year={2009},
  organization={IEEE}
}

@article{zucker2013chomp,
  title={Chomp: Covariant hamiltonian optimization for motion planning},
  author={Zucker, Matt and Ratliff, Nathan and Dragan, Anca D and Pivtoraiko, Mihail and Klingensmith, Matthew and Dellin, Christopher M and Bagnell, J Andrew and Srinivasa, Siddhartha S},
  journal={The International journal of robotics research},
  volume={32},
  number={9-10},
  pages={1164--1193},
  year={2013},
  publisher={SAGE Publications Sage UK: London, England}
}

@inproceedings{schulman2013finding,
  title={Finding locally optimal, collision-free trajectories with sequential convex optimization.},
  author={Schulman, John and Ho, Jonathan and Lee, Alex X and Awwal, Ibrahim and Bradlow, Henry and Abbeel, Pieter},
  booktitle={Robotics: science and systems},
  volume={9},
  number={1},
  pages={1--10},
  year={2013},
  organization={Berlin, Germany}
}

@book{betts2010practical,
  title={Practical methods for optimal control and estimation using nonlinear programming},
  author={Betts, John T},
  year={2010},
  publisher={SIAM}
}

@article{natarajan2024implicit,
  title={Implicit graph search for planning on graphs of convex sets},
  author={Natarajan, Ramkumar and Liu, Chaoqi and Choset, Howie and Likhachev, Maxim},
  journal={arXiv preprint arXiv:2410.08909},
  year={2024}
}

@article{morozov2024multi,
  title={Multi-query shortest-path problem in graphs of convex sets},
  author={Morozov, Savva and Marcucci, Tobia and Amice, Alexandre and Graesdal, Bernhard Paus and Bosworth, Rohan and Parrilo, Pablo A and Tedrake, Russ},
  journal={arXiv preprint arXiv:2409.19543},
  year={2024}
}

@article{graesdal2024towards,
  title={Towards tight convex relaxations for contact-rich manipulation},
  author={Graesdal, Bernhard Paus and Chia, Shao Yuan Chew and Marcucci, Tobia and Morozov, Savva and Amice, Alexandre and Parrilo, Pablo A and Tedrake, Russ},
  journal={arXiv preprint arXiv:2402.10312},
  year={2024}
}

@article{cohn2025non,
  title={Non-Euclidean motion planning with graphs of geodesically convex sets},
  author={Cohn, Thomas and Petersen, Mark and Simchowitz, Max and Tedrake, Russ},
  journal={The International Journal of Robotics Research},
  volume={44},
  number={10-11},
  pages={1840--1862},
  year={2025},
  publisher={Sage Publications Sage UK: London, England}
}

@article{luna2026augmented,
  title={Augmented Graphs of Convex Sets and the Traveling Salesman Problem},
  author={Luna, Gael and Summers, Tyler},
  journal={arXiv preprint arXiv:2604.06406},
  year={2026}
}

@inproceedings{philip2024mixed,
  title={A mixed-integer conic program for the moving-target traveling salesman problem based on a graph of convex sets},
  author={Philip, Allen George and Ren, Zhongqiang and Rathinam, Sivakumar and Choset, Howie},
  booktitle={2024 IEEE/RSJ International Conference on Intelligent Robots and Systems (IROS)},
  pages={8847--8853},
  year={2024},
  organization={IEEE}
}

@article{fainekos2009temporal,
  title={Temporal logic motion planning for dynamic robots},
  author={Fainekos, Georgios E and Girard, Antoine and Kress-Gazit, Hadas and Pappas, George J},
  journal={Automatica},
  volume={45},
  number={2},
  pages={343--352},
  year={2009},
  publisher={Elsevier}
}

@article{kress2009temporal,
  title={Temporal-logic-based reactive mission and motion planning},
  author={Kress-Gazit, Hadas and Fainekos, Georgios E and Pappas, George J},
  journal={IEEE transactions on robotics},
  volume={25},
  number={6},
  pages={1370--1381},
  year={2009},
  publisher={IEEE}
}

@article{lahijanian2011temporal,
  title={Temporal logic motion planning and control with probabilistic satisfaction guarantees},
  author={Lahijanian, Morteza and Andersson, Sean B and Belta, Calin},
  journal={IEEE Transactions on Robotics},
  volume={28},
  number={2},
  pages={396--409},
  year={2011},
  publisher={IEEE}
}

@article{plaku2015motion,
  title={Motion planning with temporal-logic specifications: Progress and challenges},
  author={Plaku, Erion and Karaman, Sertac},
  journal={AI communications},
  volume={29},
  number={1},
  pages={151--162},
  year={2015},
  publisher={SAGE Publications Sage UK: London, England}
}

@article{lindemann2018control,
  title={Control barrier functions for signal temporal logic tasks},
  author={Lindemann, Lars and Dimarogonas, Dimos V},
  journal={IEEE control systems letters},
  volume={3},
  number={1},
  pages={96--101},
  year={2018},
  publisher={IEEE}
}

@inproceedings{wolff2014optimization,
  title={Optimization-based trajectory generation with linear temporal logic specifications},
  author={Wolff, Eric M and Topcu, Ufuk and Murray, Richard M},
  booktitle={2014 IEEE International Conference on Robotics and Automation (ICRA)},
  pages={5319--5325},
  year={2014},
  organization={IEEE}
}

@inproceedings{shoukry2017linear,
  title={Linear temporal logic motion planning for teams of underactuated robots using satisfiability modulo convex programming},
  author={Shoukry, Yasser and Nuzzo, Pierluigi and Balkan, Ayca and Saha, Indranil and Sangiovanni-Vincentelli, Alberto L and Seshia, Sanjit A and Pappas, George J and Tabuada, Paulo},
  booktitle={2017 IEEE 56th annual conference on decision and control (CDC)},
  pages={1132--1137},
  year={2017},
  organization={IEEE}
}

@inproceedings{donze2010robust,
  title={Robust satisfaction of temporal logic over real-valued signals},
  author={Donz{\'e}, Alexandre and Maler, Oded},
  booktitle={International conference on formal modeling and analysis of timed systems},
  pages={92--106},
  year={2010},
  organization={Springer}
}

@article{lindemann2020barrier,
  title={Barrier function based collaborative control of multiple robots under signal temporal logic tasks},
  author={Lindemann, Lars and Dimarogonas, Dimos V},
  journal={IEEE Transactions on Control of Network Systems},
  volume={7},
  number={4},
  pages={1916--1928},
  year={2020},
  publisher={IEEE}
}

@article{safaoui2020control,
  title={Control design for risk-based signal temporal logic specifications},
  author={Safaoui, Sleiman and Lindemann, Lars and Dimarogonas, Dimos V and Shames, Iman and Summers, Tyler H},
  journal={IEEE Control Systems Letters},
  volume={4},
  number={4},
  pages={1000--1005},
  year={2020},
  publisher={IEEE}
}

@inproceedings{kaelbling2011hierarchical,
  title={Hierarchical task and motion planning in the now},
  author={Kaelbling, Leslie Pack and Lozano-P{\'e}rez, Tom{\'a}s},
  booktitle={2011 IEEE international conference on robotics and automation},
  pages={1470--1477},
  year={2011},
  organization={IEEE}
}

@article{kaelbling2013integrated,
  title={Integrated task and motion planning in belief space},
  author={Kaelbling, Leslie Pack and Lozano-P{\'e}rez, Tom{\'a}s},
  journal={The International Journal of Robotics Research},
  volume={32},
  number={9-10},
  pages={1194--1227},
  year={2013},
  publisher={Sage Publications Sage UK: London, England}
}

@inproceedings{toussaint2015logic,
  title={Logic-Geometric Programming: An Optimization-Based Approach to Combined Task and Motion Planning.},
  author={Toussaint, Marc},
  booktitle={IJCAI},
  pages={1930--1936},
  year={2015}
}

@article{cambon2009hybrid,
  title={A hybrid approach to intricate motion, manipulation and task planning},
  author={Cambon, Stephane and Alami, Rachid and Gravot, Fabien},
  journal={The International Journal of Robotics Research},
  volume={28},
  number={1},
  pages={104--126},
  year={2009},
  publisher={Sage Publications Sage UK: London, England}
}

@incollection{applegate2011traveling,
  title={The traveling salesman problem: a computational study},
  author={Applegate, David L and Bixby, Robert E and Chv{\'a}tal, Va{\v{s}}ek and Cook, William J},
  booktitle={The traveling salesman problem},
  year={2011},
  publisher={Princeton university press}
}

@article{laporte1992vehicle,
  title={The vehicle routing problem: An overview of exact and approximate algorithms},
  author={Laporte, Gilbert},
  journal={European journal of operational research},
  volume={59},
  number={3},
  pages={345--358},
  year={1992},
  publisher={Elsevier}
}

@article{dumitrescu2003approximation,
  title={Approximation algorithms for TSP with neighborhoods in the plane},
  author={Dumitrescu, Adrian and Mitchell, Joseph SB},
  journal={Journal of Algorithms},
  volume={48},
  number={1},
  pages={135--159},
  year={2003},
  publisher={Elsevier}
}

@article{ny2011dubins,
  title={On the Dubins traveling salesman problem},
  author={Ny, Jerome and Feron, Eric and Frazzoli, Emilio},
  journal={IEEE Transactions on Automatic Control},
  volume={57},
  number={1},
  pages={265--270},
  year={2011},
  publisher={IEEE}
}

@article{savla2008traveling,
  title={Traveling salesperson problems for the Dubins vehicle},
  author={Savla, Ketan and Frazzoli, Emilio and Bullo, Francesco},
  journal={IEEE Transactions on Automatic Control},
  volume={53},
  number={6},
  pages={1378--1391},
  year={2008},
  publisher={IEEE}
}

@article{galceran2013survey,
  title={A survey on coverage path planning for robotics},
  author={Galceran, Enric and Carreras, Marc},
  journal={Robotics and Autonomous systems},
  volume={61},
  number={12},
  pages={1258--1276},
  year={2013},
  publisher={Elsevier}
}

@article{guibas1987linear,
  title={Linear-time algorithms for visibility and shortest path problems inside triangulated simple polygons},
  author={Guibas, Leonidas and Hershberger, John and Leven, Daniel and Sharir, Micha and Tarjan, Robert E},
  journal={Algorithmica},
  volume={2},
  number={1},
  pages={209--233},
  year={1987},
  publisher={Springer}
}

\appendices

\section{Proofs of Optimality Lemmas}
\label{app:proofs}

\subsection*{Proof of Lemma~\ref{lem:sound} (Soundness)}

We show that any path $\hat{p}$ in $\Ghat$ from start node $s$ to the merged target node $t$ satisfies
$\bigwedge_{i=1}^{\nK} (K_i \,\mathcal{R}\, \lnot D_i) \wedge \mathcal{F}(\mathcal{T})$.

\textit{Terminal condition.}  The target node $t$ is the merge of all its
copies across all layers and subgraphs.  Since $t$
corresponds to a free-space set containing $q_T$, any path reaching
$t$ satisfies $\mathcal{F}(\mathcal{T})$.

\textit{Key-door precedence.}  We proceed by induction on the layer
index $\ell$.

\textit{Base case} ($\ell = 0$):  In $\Gcal_\emptyset$, no edge is
incident to any door node by construction of $\mathcal{E}_\emptyset$. So door
nodes are unreachable, and $\lnot D_i$ holds throughout the base layer.
The release formula $K_i \,\mathcal{R}\, \lnot D_i$ is trivially
satisfied since $\lnot D_i$ holds globally.

\textit{Inductive step:}  Suppose that in any subgraph $\Gcal_{S'}$
at layer $\ell - 1$, all keys in $S'$ were visited before entry into
$\Gcal_{S'}$.  Consider subgraph $\Gcal_S$ at layer $\ell$ with
$|S| = \ell$.  By Algorithm~\ref{Algo1} (lines~6--8), the
only way to enter $\Gcal_S$ is via a directed edge at some key node
$K_v^{(S)}$ from $K_v^{(S\setminus\{v\})}$ in $\Gcal_{S\setminus\{v\}}$.
By the inductive hypothesis, all keys in $S \setminus \{v\}$ were
visited before entering $\Gcal_{S\setminus\{v\}}$, and $K_v$ is
visited at the transition edge.  Therefore all keys in $S$ have been
visited before any traversal within $\Gcal_S$.  Within $\Gcal_S$, the
edge set $E_S$ reinserts only edges incident to doors in $\mathbf D(S)$.  The
path can pass through $D_i$ only if $K_i \in S$, which the inductive
argument guarantees was collected prior to entry.  This establishes
$K_i \,\mathcal{R}\, \lnot D_i$ for all $i$.

Finally, the conjunction over all $i$ gives $\hat{p} \models
\phi_{\mathrm{base}}$. \hfill$\square$

\subsection*{Proof of Lemma~\ref{lem:complete} (Completeness)}

Let $q^*$ be a feasible trajectory visiting key regions in order
$K_{i_1}, K_{i_2}, \ldots, K_{i_m}$ ($m \leq \nK$).  Define
accumulated key sets $S_0 = \emptyset$ and $S_j = \{K_{i_1}, \ldots,
K_{i_j}\}$ for $j = 1, \ldots, m$.

(i) \textit{Before first key:} $q^*$ moves through free space in
$\Gcal_\emptyset$.  Since $q^*$ satisfies $\varphi_{\mathrm{base}}$,
it cannot enter any door region.  The convex partition ensures the
segment corresponds to a sequence of adjacent free-space vertices
in $\Gcal_\emptyset$.

(ii) \textit{Key transition at $K_{i_j}$:} Upon $q^*$ entering
$K_{i_j}$, path $\hat{p}$ traverses the directed inter-layer edge
from $K_{i_j}^{(S_{j-1})}$ in $\Gcal_{S_{j-1}}$ to
$K_{i_j}^{(S_j)}$ in $\Gcal_{S_j}$.  This transition has zero cost
and is always available by Algorithm~\ref{Algo1}.

(iii) \textit{Within $\Gcal_{S_j}$:} After collecting $S_j$, $q^*$
may pass through doors in $D(S_j)$.  The edge set $E_{S_j}$ includes
all edges incident to these doors, so each passage corresponds to a
valid edge.

(iv) \textit{Termination:} $q^*$ ends at $q_T$ contained in the
merged target node $t$.

If $q^*$ visits the same key region multiple times, subsequent visits
do not change $S_j$ and $\hat{p}$ remains in the current subgraph.
At each stage, the trajectory segment lies within convex sets of
$\Gcal$, so valid vertex sequences exist.  Therefore $\hat{p}$ is a
valid path in $\Ghat$. \hfill$\square$

\subsection*{Proof of Lemma~\ref{lem:cost} (Cost Equivalence)}

\begin{proof}
The inter-layer directed edges connect copies of the same key node, corresponding to identical convex sets, and are assigned zero cost (Algorithm~2, line 7). All remaining edges in $\hat{\mathcal{G}}$ inherit their cost functions from $\mathcal{G}$, which encode the trajectory cost of problem~\eqref{optprob} via the GCS motion planning framework. Therefore:
\[
\text{cost}(\hat{p}) = \sum_{e = (u,v) \in \hat{E}_{\hat{p}}} \ell_e(x_u, x_v) = \alpha \int_0^T \|\dot{q}(t)\|^2\,dt + \beta T,
\]
where the right-hand side is the objective of \eqref{optprob} evaluated on the corresponding trajectory.
\end{proof}

\section{Proofs of Variation Correctness Theorems}
\label{app:variation_proofs}

\subsection*{Variation 1 (Theorem~\ref{thm:until})}

\textit{Soundness.}  Any path reaching $t^*$ must terminate in
$\Gcal_\Kcal$.  Entry into $\Gcal_\Kcal$ requires traversal of
directed inter-layer edges at all $\nK$ key nodes, since each key's
directed edge is the unique way to increment the layer.  Therefore
all keys are physically visited before $t^*$ is reached.  Combined
with the base soundness argument, $\hat{p} \models
\varphi_{\mathbf{U}}$.

\textit{Completeness.}  Let $q^*$ be feasible under
$\varphi_{\mathbf{U}}$.  Since $\varphi_{\mathbf{U}}$ requires all
$\nK$ keys, the trajectory visits every key region and by the base
completeness argument its path passes through directed edges at all
$\nK$ keys, reaching layer $\nK$ and terminating at
$t^{(\Kcal)} = t^*$. \hfill$\square$

\subsection*{Variation 2 (Theorem~\ref{thm:ordered})}

\textit{Soundness.}  The only directed edge into $\Gcal_{\{1,\ldots,\ell\}}$
is at key node $K_\ell$, from the unique predecessor
$\Gcal_{\{1,\ldots,\ell-1\}}$.  Thus keys must be visited in order
$K_1, \ldots, K_{\nK}$.  The door-opening logic follows from base
soundness: $D_\ell$ is accessible only in $\Gcal_{\{1,\ldots,\ell\}}$,
after $K_\ell$ has been collected.

\textit{Completeness.}  Any trajectory feasible under
$\varphi_{\mathrm{ord}}$ visits keys in order $K_1, \ldots, K_{\nK}$
and its path traverses the unique chain of inter-layer edges in order,
reaching $t^*$ in $\Gcal_{\{1,\ldots,\nK\}}$. \hfill$\square$

\subsection*{Variation 3 (Theorem~\ref{thm:tsp})}

The environment has no door nodes.  Construction~\ref{con:tsp} applies
the Variation~1 construction to a graph $\Gcal$ with $V = \{V_C,
V_K\}$.  The Variation~1 proof applies directly with the door
precedence conditions vacuous.  The top-layer subgraph
$\Gcal_\Kcal$ is identical to $\Gcal$ (no door reinsertion needed),
and the target restricted to $\Gcal_\Kcal$ ensures all waypoints are
visited.  Cost equivalence and the bijection with feasible
trajectories follow from Lemmas~\ref{lem:sound}--\ref{lem:cost}.
\hfill$\square$

\subsection*{Variation 4 (Theorem~\ref{thm:disj})}

\textit{Soundness.}  Door $D_i$ becomes accessible in $\Gcal_S$ only
when $\tilde{K}_i \in S$.  Entry into any subgraph with $\tilde{K}_i
\in S$ requires traversal of a directed edge at either $K_i^A$ or
$K_i^B$.  Either visit satisfies the disjunction $(K_i^A \lor K_i^B)$,
establishing the release condition for $D_i$.

\textit{Completeness.}  Any feasible trajectory collecting either
$K_i^A$ or $K_i^B$ corresponds to traversal of the appropriate
directed edge, transitioning to the subgraph with $\tilde{K}_i \in S$.
\hfill$\square$

\subsection*{Variation 5 (Theorem~\ref{thm:multi})}

\textit{Soundness.}  By the modified mapping~\eqref{eq:multi_unlock},
edges incident to $D_i$ are present in $\Gcal_S$ only when
$\mathcal{K}_i \subseteq S$.  By the inductive argument of
Lemma~\ref{lem:sound}, all keys $K_j \in \mathcal{K}_i$ have
been visited before entry into any such subgraph.

\textit{Completeness.}  Any trajectory collecting all keys in
$\mathcal{K}_i$ before passing through $D_i$ has accumulated $S
\supseteq \mathcal{K}_i$, where edges incident to $D_i$ are
available. \hfill$\square$

\subsection*{Variation 6 (Theorem~\ref{thm:onekey})}

\textit{Soundness.}  Edges incident to $D_j$ are present in $\Gcal_S$
only when $K_{\sigma(j)} \in S$.  By the base inductive argument,
$K_{\sigma(j)}$ has been visited before entry into any such subgraph.
A single key visit simultaneously unlocks all doors in
$D(K_{\sigma(j)})$.

\textit{Completeness.}  Any trajectory collecting $K_{\sigma(j)}$ to
gain access to $D_j$ transitions to $\Gcal_S$ with $K_{\sigma(j)}
\in S$, where all doors in $D(K_{\sigma(j)})$ are accessible.
\hfill$\square$

\subsection*{Variation 7 (Theorem~\ref{thm:timed})}

\textit{Soundness.}  Time window constraints on key and door nodes
ensure any path through $K_i^{(S)}$ assigns $\tau_{K_i} \in [a^K_i,
b^K_i]$, and any path through $D_i^{(S)}$ assigns $\tau_{D_i} \in
[a^D_i, b^D_i]$.  Monotonicity constraints $\tau_v \geq \tau_u$
prevent time reversal.  Combined with base soundness for the release
condition, $\varphi_{\mathrm{time}}$ is satisfied.

\textit{Completeness.}  Any feasible trajectory satisfying
$\varphi_{\mathrm{time}}$ has time coordinates at each node within
the associated convex time set.  Convexity of all added constraints
(time windows, monotonicity) is preserved, so the corresponding path
in $\Ghat$ satisfies all constraints by construction.

\textit{Convexity preservation.}  All added constraints are linear
and the GCS convex optimization framework applies without
modification. \hfill$\square$

\subsection*{Variation 8 (Theorem~\ref{thm:conditional})}
\textit{Soundness.} We consider the two conditional cases for each $i$.

  \textit{Case 1: $A_i$ is never visited ($a_i = 0$ throughout).} The path remains in subgraphs with $a_i = 0$, where $D_i \in \mathbf{D}(\mathbf{a}, S)$ for all $S$ by~\eqref{eq:cond_D}. Door $D_i$ is freely passable throughout, and the antecedent $A_i$ of the conditional is never satisfied. The implication $A_i \Rightarrow (K_i \,\mathcal{R}\, \neg D_i)$ holds vacuously.

  \textit{Case 2: $A_i$ is visited at some point.} The trigger transition edge advances the state to $a_i = 1$, after which $D_i \notin \mathbf{D}(\mathbf{a}, S)$ unless $K_i \in S$ by~\eqref{eq:cond_D}. From this point, the path can only enter $D_i$ in subgraphs where $K_i \in S$, which requires the key transition edge at $K_i$ to have been previously traversed. This establishes $K_i \,\mathcal{R}\, \neg D_i$ from the moment $A_i$ is visited onward.

  In both cases, $\varphi_{\mathrm{cond}}$ is satisfied. The terminal condition follows from the base argument.

\textbf{Completeness.} Let $q^*$ be feasible under $\varphi_{\mathrm{cond}}$. For each $i$, either $q^*$ never visits $A_i$ (its path remains in subgraphs with $a_i = 0$, where $D_i$ is freely accessible), or $q^*$ visits $A_i$ and subsequently satisfies the activated precedence. In the latter case, the trigger transition edge is traversed at the moment $A_i$ is first visited, and the subsequent key-door logic follows the base completeness argument within the activated subgraph. Therefore $q^*$ corresponds to a valid path in $\hat{\mathcal{G}}$.

\section{Maze Benchmark Generator}
\label{app:maze_generator}

The key-door maze benchmark generator creates environments in three
stages.

\subsubsection*{Stage 1: Maze generation}
A perfect maze (a tree with a unique path between any two cells) is
generated using Eller's algorithm~\cite{buck2015mazes}, which processes the
maze one row at a time, requiring at most two rows in memory.  For
natural numbers $r$ and $c$, a maze with $2r+1$ rows and $2c+1$
columns is created.  The start cell is randomly selected as the
center or a corner; the target cell is the cell farthest from the
start by breadth-first search.

\subsubsection*{Stage 2: Key and door placement}
Doors and keys are placed in \emph{batches}, where a batch is an
integer composition of the total number of key-door pairs specifying
how many keys are accessible at each stage.  For example, with 10
key-door pairs, batch $(1,1,2,1,1,1,1,2)$ creates a narrow augmented
GCS (max width 2) and batch $(4,6)$ creates a wide one (max width 6).

Doors are placed first along a path from the target toward the start,
in hallways rather than intersections.  Keys are then placed in dead
ends and corners, beginning at the cell farthest from the start.  The
generator ensures the number of keys equals the number of doors, with
automatic batch-size reduction if placement fails.

\subsubsection*{Stage 3: Optional wall modification}
Walls may be randomly added or removed to adjust augmented GCS width
and potentially create infeasibility.  Removing walls increases
accessible keys per layer; adding walls may block access to keys,
doors, or the goal.

\subsubsection*{Graph creation}
Since cells are on a grid, connectivity is determined directly without
Algorithm~\ref{alg:partition}.  Adjacent cells are merged horizontally
then vertically to reduce graph size.  Cell vertices are labeled
\texttt{s}, \texttt{t}, \texttt{d\#}, \texttt{k\#}, or \texttt{c\#}
for start, target, door, key, or free-space cell, along with an edge
list.

\end{document}